\newtheorem{thm}{THEOREM}[section]
\newtheorem*{thm*}{THEOREM}
\newtheorem{prop}[thm]{Proposition}
\theoremstyle{definition}
\newtheorem{defn}[thm]{Definition}
\newtheorem{rem}[thm]{Remark}
\newcommand{\eps}{\varepsilon}
\newcounter{subequation}
	\newenvironment{subequation}%
	{\addtocounter{equation}{-1}%
	\stepcounter{subequation}%
	\begin{equation}}%
	{\end{equation}%
}
\newcommand{\ff}{\mathfrak{f}}
\newcommand{\fl}{\mathfrak{l}}
\newcommand{\fm}{\mathfrak{m}}
\newcommand{\fz}{\mathfrak{z}}
\newcommand{\fO}{\mathfrak{O}}
\newcommand{\rcyl}{{\mathcal{Z}}}
\newcommand{\ts}{\textstyle}
\newcommand{\sgn}{\mbox{sgn}}
\newcommand{\bA}{\mathbf{A}}
\newcommand{\bAKN}{\mathbf{A}_{\textsc{KN}}}
\newcommand{\AKN}{{A}_{\textsc{KN}}}
\newcommand{\bAKNanom}{\mathbf{A}_{\textsc{KN}}^{\mbox{\tiny\textrm{gen}}}}
\newcommand{\AKNanom}{{A}_{\textsc{KN}}^{\mbox{\tiny\textrm{gen}}}}
\newcommand{\bAKNhydro}{\mathbf{A}_{\mbox{\tiny\textrm{hyd}}}}
\newcommand{\bB}{\mathbf{B}}
\newcommand{\bD}{\mathbf{D}}
\newcommand{\bE}{\mathbf{E}}
\newcommand{\be}{\mathbf{e}}
\newcommand{\bF}{\mathbf{F}}
\newcommand{\bH}{\mathbf{H}}
\newcommand{\bJ}{\mathbf{J}}
\newcommand{\bj}{\mathbf{j}}
\newcommand{\bl}{\mathbf{l}}
\newcommand{\bL}{\mathbf{L}}
\newcommand{\bM}{\mathbf{M}}
\newcommand{\bm}{\mathbf{m}}
\newcommand{\bn}{\mathbf{n}}
\newcommand{\bN}{\mathbf{N}}
\newcommand{\bq}{\mathbf{q}}
\newcommand{\br}{\mathbf{r}}
\newcommand{\bs}{\mathbf{s}}
\newcommand{\bT}{\mathbf{T}}
\newcommand{\bv}{\mathbf{v}}
\newcommand{\bg}{\mathbf{g}}
\newcommand{\beq}{\begin{equation}}
\newcommand{\eeq}{\end{equation}}
\newcommand{\bseq}{\begin{subequation}}
\newcommand{\eseq}{\end{subequation}}
\newcommand{\refeq}[1]{(\ref{#1})}
\newcommand{\fD}{\mathfrak{D}}
\newcommand{\fF}{\mathfrak{F}}
\newcommand{\fR}{\mathfrak{R}}
\newcommand{\p}{\partial}
\newcommand{\siV}{\boldsymbol{\sigma}}
\newcommand{\qV}{\mathbf{q}}
\newcommand{\QV}{\mathbf{Q}}
\newcommand{\brho}{{\rho_{\scriptscriptstyle{\psi}}}}
\newcommand{\bOm}{\boldsymbol{\Omega}}
\newcommand{\cA}{\mathcal{A}}
\newcommand{\cB}{\mathcal{B}}
\newcommand{\cR}{\mathcal{R}}
\newcommand{\cC}{\mathcal{C}}
\newcommand{\cD}{\mathcal{D}}
\newcommand{\cM}{{\mathcal M}}
\newcommand{\cN}{{\mathcal N}}
\newcommand{\cP}{{\mathcal P}}
\newcommand{\cS}{{\mathcal S}}
\newcommand{\cT}{\mathcal{T}}
\newcommand{\cZ}{\mathcal{Z}}
\newcommand{\Cset}{{\mathbb C}}
\newcommand{\Rset}{{\mathbb R}}
\newcommand{\Sset}{{\mathbb S}}
\newcommand{\Zset}{{\mathbb Z}}
\newcommand{\Lsp}{\mathfrak{L}}
\newcommand{\pdt}{{\partial_t^{\phantom{0}}}}
\newcommand{\la}{\lambda}
\newcommand{\La}{\Lambda}
\newcommand{\de}{\delta}
\newcommand{\De}{\Delta}
\newcommand{\al}{\alpha}
\newcommand{\ga}{\gamma}
\newcommand{\Ga}{\Gamma}
\newcommand{\ep}{\epsilon}
\newcommand{\ka}{\kappa}
\newcommand{\om}{\omega}
\newcommand{\Om}{\Omega}
\newcommand{\si}{\sigma}
\newcommand{\Si}{\Sigma}
\newcommand{\nab}{\nabla}
\newcommand{\half}{\frac{1}{2}}
\newcommand{\diag}{\mbox{diag}}
\newcommand{\bna}{\begin{eqnarray}}
\newcommand{\ena}{\end{eqnarray}}
\newcommand{\bea}{\begin{eqnarray*}}
\newcommand{\eea}{\end{eqnarray*}}
\newcommand{\ben}{\begin{enumerate}}
\newcommand{\een}{\end{enumerate}}
\newcommand{\bi}{\begin{itemize}}
\newcommand{\ei}{\end{itemize}}
\newcommand{\RR}{{\mathbb R}}
\newcommand{\CC}{{\mathbb C}}
\newcommand{\grad}{\nabla}
\begin{document}

\title{A novel quantum-mechanical interpretation of the Dirac equation}

\author{\normalsize\sc{M. K.-H. Kiessling and A. S. Tahvildar-Zadeh}\\
	{$\phantom{nix}$}\\[-0.1cm] 
        \normalsize Department of Mathematics\\[-0.1cm]
	Rutgers, The State University of New Jersey\\[-0.1cm]
	110 Frelinghuysen Rd., Piscataway, NJ 08854}
\vspace{-0.3cm}
\date{Orig. version: Nov. 16, 2014; revised: Dec. 1, 2015; printed \today} 
\maketitle
%
%
\begin{abstract}

\noindent
	A novel interpretation is given of Dirac's ``wave equation for the relativistic electron''
as a quantum-mechanical one-particle equation.
        In this interpretation the electron and the positron are merely the two different ``topological spin'' states
of a single more fundamental particle, not distinct particles in their own right. 
        The new interpretation is backed up by the existence of such ``bi-particle'' structures in general relativity,
in particular the ring singularity present in any spacelike section of the spacetime singularity of the maximal-analytically 
extended, 
topologically non-trivial, electromagnetic Kerr--Newman spacetime in the zero-gravity limit (here, ``zero-gravity'' means the limit 
$G\to 0$, where $G$ is Newton's constant of universal gravitation).
        This novel interpretation resolves the dilemma that Dirac's wave equation seems to be capable of describing 
both the electron and the positron in ``external'' fields in many relevant situations, while
the bi-spinorial wave function has only a single position variable in its argument, not two --- as it should if it were 
a quantum-mechanical two-particle wave equation.
        A Dirac equation is formulated for such a ring-like bi-particle which interacts with a static point charge located elsewhere
in the topologically non-trivial physical space associated with the moving ring particle, the motion being governed by a
de Broglie--Bohm type law extracted from the Dirac equation. 
        As an application, the pertinent general-relativistic zero-gravity Hydrogen problem is studied in the usual 
Born--Oppenheimer approximation. 
        Its spectral results suggest that the zero-$G$ Kerr--Newman magnetic moment be identified with the so-called 
``anomalous magnetic moment of the physical electron,'' not with the Bohr magneton, so that the
ring radius is only a tiny fraction of the electron's reduced Compton wavelength.
\end{abstract}
\smallskip

\vfill
\hrule
\smallskip
\copyright{2014/2015. The authors. Reproduction for non-commercial purposes is permitted.}

\medskip

email: miki@math.rutgers.edu, shadi@math.rutgers.edu

\newpage

\section{Introduction}

  We begin with a brief history of Dirac's equation and its quantum mechanical interpretations.
  Readers who are familiar with this subject may wish to skip ahead to section \ref{sec:novel} for an introduction 
to our novel proposal, and for an executive summary of our main results at the end of that section.

\subsection{Quest for the quantum-mechanical interpretation of Dirac's equation}

  Many textbooks and monographs (e.g. \cite{MessiahBOOKii}, \cite{GMRonQED}, \cite{ThallerBOOK}, \cite{SchweberBOOK}) 
tell the story of Dirac's marvelous discovery of the special-relativistic generalization of Pauli's 
non-relativistic spinor wave equation for the ``spinning electron'': his ingenious insight that a first-order
partial differential operator is needed rather than the second-order wave operator in the so-called\footnote{We are alluding to
the history of this equation: first contemplated by Schr\"odinger \emph{before} he discovered his nonrelativistic
equation, and rediscovered by various physicists, amongst them Klein,  Gordon, and also Pauli.}
Klein--Gordon equation; 
his equally ingenious insight that complex four-component bi-spinors, instead of Pauli's complex two-component spinors,
are needed to accomplish this goal; his consequential formulation of the equation and his skillful analysis of the same; 
the $g$-factor 2; the explicit 
solution of the Hydrogen problem in terms of elementary functions (by Darwin \cite{DarwinDIRACspec} 
and Gordon \cite{GordonDIRACspec}), and the surprising exact agreement of the 
Dirac point spectrum with Sommerfeld's fine structure formula (save the labeling of the energy and angular momentum levels);\footnote{For 
a modernized semi-classical Bohr--Sommerfeld approach that leads to the correct labeling, see \cite{KeppelerSEMIclassQM}.} 
the strange occurrence of a negative energy continuum below $-mc^2$ (with $m$ the electron's rest mass, and $c$ the speed of light in
vacuum), leading to Dirac's ultimate triumph: the prediction of the existence of the anti-electron (a.k.a. positron) --- 
based on his ``holes in the Dirac sea'' interpretation.

 Yet, not all is well. 
 The most perplexing (and intriguing) part of the above success story is that it is based on Dirac's changing 
the rules of the game while he was playing it. 
 Thus, originally formulating it as a quantum-mechanical one-particle wave equation on one-particle configuration space,
Dirac --- by postulating that all negative energy continuum states are occupied with electrons --- switched to an infinitely-many 
particle interpretation on physical space:
a ``poor man's quantum field theory'' (yet an important stepping stone towards quantum electrodynamics, one that still is 
the subject of serious studies by mathematical physicists \cite{LiebSeiringerBOOK}).
 And while it is difficult to argue with practical success, Dirac's ad-hoc ``holes in the negative energy sea'' theory can be 
criticized for de-facto bypassing the  conceptual problem of the proper quantum-mechanical interpretation of Dirac's 
equation (assuming there exists one!), rather than addressing it.

 St\"uckelberg \cite{Stueckelberg42} and Feynman \cite{Fey49} revived the quantum-mechanical interpretation of 
Dirac's equation and argued that it is an equation for both, the electron and the positron, cf. Thaller \cite{Thaller}: 
``[U]p to now no particular quantum mechanical interpretation [of Dirac's wave equation] is generally accepted. 
... [T]he St\"uckelberg--Feynman interpretation (St\"uckelberg 1942, Feynman 1949) ... is
intermediate between a one-particle theory and Dirac's hole theory, because it claims
that the Dirac equation is able to describe two kinds of particles, namely electrons and
positrons (but not their interaction; negative energy states are directly observed as
positrons with positive energy).''
 Thaller, who adopts the St\"uckelberg--Feynman interpretation in his scattering theory, goes on to emphasize that it is
``formulated in the language of wave packets ... and does not rely on unobservable objects like the Dirac sea.''
 In a nutshell, the main idea is that wave packets composed of only positive energy eigenfunctions and scattering
states ``describe'' electrons, those composed of only negative energy eigenfunctions and scattering
states ``describe'' positrons; cf. also \cite{BrownRavenhall}.
 The problem with identifying electrons and positrons with respect to the respective subspaces of the ``free'' Hamiltonian
is that
switching on ``external'' fields may induce transitions between the two; or, in mathematical language: the positive 
and negative energy Hilbert subspaces for the free Hamiltonian may not be invariant under the unitary evolution 
of a scattering problem.
 Klein's paradox (see \cite{ThallerBOOK}) shows that these subspaces are not always invariant under the unitary
evolution, indeed.
 Furthermore there are examples of physically interesting electric potentials (e.g. a regularized Coulomb potential)
which can generate negative energy eigenstates which clearly are to be interpreted as belonging to the electron, not 
the positron; see \cite{GMRonQED}.
 Even in favorable situations, mixed initial conditions can pose an interpretational dilemma.

 Thus not all is well with the St\"uckelberg--Feynman interpretation either.
 To Thaller's emphasis of its limitations  we here add another troublesome
criticism, namely that this interpretation is in conflict with established quantum-mechanical many-body concepts.
 Thus, while Dirac's bi-spinorial wave function depends (beside time) only on one position variable, it should have two position 
variables in its arguments, not merely one, if Dirac's equation truly were a quantum-mechanical two-particle equation.
 For then it has to reproduce Pauli's quantum-mechanical two-particle equation for a non-relativistic electron--positron pair in the
non-relativistic limit,
and this Pauli equation --- which is the traditional starting point for computing the leading-order spectral properties of positronium, 
with relativistic corrections computed perturbatively subsequently \cite{BeSaBOOK} ---  does have a complex four-component wave function 
with two position variables in its arguments.\footnote{There is one loose end in this argument --- which can easily be tied up.
  More to the point, as in the quantum-mechanical analog of the Kepler problem, one can change coordinates from the two particle
position vectors (defined w.r.t. some inert fixed point) to center-of-mass coordinates (w.r.t. the same fixed point) plus 
relative coordinates. 
  In the spinless non-relativistic limit the center-of-mass coordinates can be factored out, and the remaining wave equation in the
relative coordinates is effectively a one-body equation for a test particle in an external Coulomb field.
  So one may speculate whether Dirac's equation is perhaps of this type, describing the \emph{relative} motion of
electron and positron in relative coordinates.
  Unfortunately the mass parameter $m$ in Dirac's equation ``for the relativistic electron'' is a factor 2 too big 
for this interpretation to be feasible, putting to rest speculations in this direction.} 

\subsection{A novel proposal}\label{sec:novel}

 Since the enigmatic quantum-mechanical character of ``Dirac's equation for the electron'' manifests itself in the seemingly contradictory 
aspects that, on one hand, it has all the formal hallmarks of a \emph{single-particle} equation, while, on the other,
its set of solutions covers (much of) the physics of \emph{both} the \emph{electron} and the \emph{anti-electron},
the only logical conclusion that avoids this paradox is that electron and anti-electron are not
separate entities but merely ``two different sides of the same medal.'' 
 By this we do not mean the well-known charge-conjugation transformation which maps a ``negative-energy Dirac electron state'' into a 
``positive-energy Dirac positron state,'' and vice versa; neither do we mean the interpretation suggested by St\"uckelberg, Wheeler, 
and Feynman that an ``electron moving backward in time appears as a positron moving forward in time.''
 Rather, 
we propose that \emph{electron and anti-electron are two different ``topological-spin'' states of a single, more structured particle}
 (moving forward in time).

 Our novel interpretation is backed up by the fact that such bi-particle structures exist in some electromagnetic 
spacetime solutions of Einstein's classical theory of general relativity.\footnote{By invoking a classical physical theory, 
in our case Einstein's general relativity, we are simply following the standard protocol of ``first quantization.'' 
 Enjoying the benefit of hindsight we may simplify our task and look for the desired electromagnetic ``bi-particle
structure'' in classical physical theory, then compute its electromagnetic interaction with some other classical elecromagnetic
object, in particular a point charge. 
 This classical interaction then enters the pertinent quantum-mechanical wave equation obtained from ``first quantization,'' in 
our case Dirac's equation.
 The remaining task then is to show that this Dirac equation makes the right physical predictions and avoids any of the 
paradoxical aspects that we have discussed.}
 More concretely, we mean the structure commonly known as the 
``ring\footnote{What is ``ring-like'' is actually a constant-$t$ ``snapshot'' of the spacetime singularity.}
singularity'' of the maximal-analytically extended, topologically non-trivial,\footnote{The 
        complement of a wedding ring in three-dimensional Euclidean space 
		is topologically non-trivial, too, but ``looping through the ring once brings you back to where you began;'' in a 
		spacelike slice of the maximal-analytically extended Kerr--Newman spacetime ``you need to
		loop through the ring twice to get back to square one.''}
electromagnetic spacetime of Newman et al. \cite{NCCEPT65,BoyLin67} --- in its zero-gravity limit.
       Here, ``zero-gravity limit'' means the limit $G\to 0$, where $G$ is Newton's constant of universal gravitation, applied to
the Kerr--Newman metric\footnote{A priori speaking the $G\to0$ limit should also be applied to the expressions for the 
  Kerr--Newman electromagnetic fields, but these are $G$-independent in the coordinates we use.} 
expressed in any one of the most symmetric global coordinate charts.

\noindent{\textbf{1.2.1 The zero-gravity Kerr--Newman (z$G$KN) spacetime}}.

\noindent
        As shown rigorously in an accompanying paper \cite{zGKN}, this z$G$KN spacetime
itself is a flat but topologically nontrivial solution of the vacuum Einstein equations $R_{\mu\nu}=0$, while its electromagnetic 
fields are static Appell--Sommerfeld solutions\footnote{Remarkably, the electromagnetic fields were discovered in 1887 by 
Appell \cite{Appell} by performing a complex translation ``of physical space'' in the Coulomb potential. 
 However, Appell did not realize, apparently, that the complex generalization of the Coulomb potential lives on a topologically
  non-trivial extension of Euclidean $\Rset^3$.  This insight is due to Sommerfeld \cite{Som97}, 
who is credited with introducing the concept of {\em branched Riemann space}, a three-dimensional analog of a Riemann 
surface, on which multi-valued harmonic functions such as the Appell fields become single-valued.}
 of the linear Maxwell equations on this spacetime, with well-behaved sesqui-pole sources 
concentrated on its spacelike ring singularity.
          Most importantly, the ring singularity of a spacelike snapshot of the spacetime appears to be positively charged in one ``sheet,''
yet negatively charged in the other, with pertinent electric monopole asymptotics near spacelike infinity (see Fig.~1); 
analogous statements hold for the magnetic structure, with dipole asymptotics near spacelike infinity.
          The \emph{electromagnetically anti-symmetric spacelike ring singularity of this double-sheeted Sommerfeld space} 
is precisely the kind of bi-particle-like structure which 
vindicates our informal wording that ``electron and anti-electron are two sides of one and the same medal'' as mathematically realized in 
a classical physical field theory, and therefore as putatively physically viable in a quantized theory.

\begin{figure}[ht]
\centering
\includegraphics[scale=0.3]{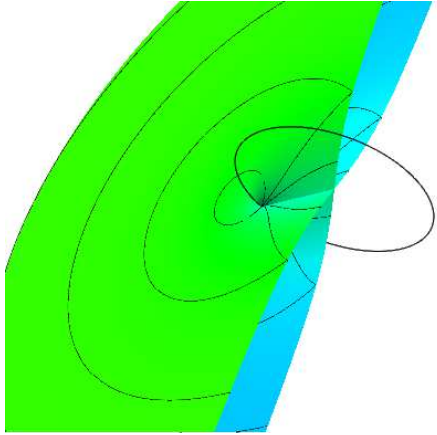}
\caption{
\small An illustration of the z$G$KN spacetime. Shown are the ring singularity
and one of the constant-azimuth sections of a constant-time z$G$KN snapshot immersed into three-dimensional Euclidean space; the 
two sheets of this flat Sommerfeld space are slightly bent for the purpose of visualization.
The apparent intersection of the two sheets along a radius of the ring singularity is a consequence of the immersion and not
a faithful representation.
Also shown are equipotential lines of the Appell--Sommerfeld electric field; the ring singularity appears positively charged in one sheet, 
and negatively in the other,
so that far from the ring the electric field becomes positive, respectively negative monopole-like,
whereas inside a ball having the ring singularity as equator it appears dipolar, with field 
lines transiting from one sheet to the other.}
\end{figure}

         We stress the importance of the \emph{zero-gravity limit} of the \emph{maximal-analytically extended} 
Kerr--Newman spacetime and its elecromagnetic fields \cite{zGKN} for getting our electron/anti-electron bi-particle 
interpretation of its ring singularity off the ground.
        The reason is that the energy-momentum-stress tensor of the electromagnetic Appell--Sommerfeld fields is not locally integrable 
at the ring singularity, so when coupled to spacetime curvature by ``switching on gravity,'' it produces the 
physically ``pathological'' effects of the $G>0$ maximal-analytically extended
Kerr--Newman spacetime (such as closed timelike loops; the $t=$const. ring singularity itself is 
also timelike when $G>0$ and not interpretable as a stationary particle-like source \cite{Car68})\footnote{While ``switching off gravity'' 
  should not be viewed as a troublesome step, 
our statement that the zero-$G$ limit is important to get our electron/anti-electron bi-particle interpretation 
  of its ring singularity off the ground, in the sense that the KN ``ring singularity'' cannot be given a bi-particle interpretation 
for $G>0$, could seem to deal a fatal blow to our proposal.
  However, the $G>0$ problems we have in mind are similar to those affecting the usual textbook formulation of Dirac Hydrogen,
  viz. ``Dirac's equation for   a point (test) electron in the field of a point proton at rest in Minkowski spacetime'' --- as soon as one 
  ``switches on $G$'' to compute the general-relativistic corrections to the Sommerfeld spectrum from eigenstates for Dirac's equation of 
  a point (test) electron in the Reissner--Nordstr\"om spacetime, ``all hell breaks lose'' (see \cite{KieTah15a} for a review).
  The reason is once again the local non-integrability of the energy-momentum-stress tensor of the classical electromagnetic field, which
  in the usual textbook case is the field of a point charge.
  This clearly suggests that the culprit is Maxwell's linear vacuum field equations, and that replacing them with non-linear field
equations that produce an integrable energy-momentum-stress tensor, the $G>0$ pathologies might go away and 
general-relativistic gravitational curvature corrections to the spectral features of special-relativistic Dirac Hydrogen 
may eventually come out not larger than the tiny Newtonian gravitational effects relative to the electromagnetic 
effects on the non-relativistic Schr\"odinger spectrum of  Hydrogen.
 More about this in our concluding section.
 In short, the pathological problems of the $G>0$ Kerr--Newman singularity are bad news for the physical interpretation of
the maximal-analytically extended Kerr--Newman spacetime with $G>0$, but not bad news for our proposal that electron and anti-electron 
might be the two electromagnetic ``sides'' of a bi-particle-like ring singularity associated with a two-sheeted space.}
 --- more on this in section 2.
        The $G\to0$ limit removes all these troublesome features of the Kerr--Newman spacetime while retaining the topological and
electromagnetic features that can be given a clear physical meaning.\footnote{It 
  is our pleasant duty to note here that we are not the first to try to associate the 
  double sheetedness of the z$G$KN spacetime and its anti-symmetric electromagnetism with particle / anti-particle symmetry.
  Namely, after reading an earlier version of our paper, Prof. Ted Newman kindly send us a letter with (amongst other valuable information) 
  the following historical note: 
  ``I had (in about 1966) myself noted --- in the $G=0$ limit --- that by going 
  along the axis of symmetry from large $r$ down to zero and then to negative $r$ that the sign of the charge would change. 
  Kerr and Penrose did clarify this for me. 
  I and one of my Polish colleagues, spent a great deal of time --- talking and trying out ideas concerning the double sheetedness --- 
  mainly trying to put it into the context of parity, time reversal and charge conjugation symmetry.
  It was a very enticing idea. 
  Unfortunately we never could make it into a unified theory or even a coherent idea.
  We eventually gave it up --- not because it was wrong --- but because we were stuck and could not make it work.''
  In this paper we will see that the double sheetedness and its charge (anti-)symmetry are associated not with parity, 
time reversal, and charge conjugation symmetry operations, but with a novel symmetry operation on the Dirac bi-spinor,
associated with what we call ``topo(logical)-spin'' of the z$G$KN ring singularity, viz. the particle/anti-particle symmetry of the
z$G$KN ring singularity is implemented in a topo-spin flip operation and, thus, is somewhat akin to Heisenberg's iso-spin
concept for nucleons.\label{Ted}}

\noindent{\textbf{1.2.2 The physical spacetime of a moving z$G$KN-type ring singularity}}.

\noindent
 Having identified a candidate structure for our electron/anti-electron bi-particle proposal
in form of the ring singularity of a snapshot of the static z$G$KN spacetime, we next describe the
\emph{physical spacetime} associated with such a ring-like bi-particle when moving,
were motion is understood as relative to an inertial frame attached to an arbitrarily chosen origin 
(the analog of Carl Neumann's ``$\alpha$ body frame'' for Newton's mechanics).
We restrict our considerations mostly to the \emph{quasi-static motions} approximation and only briefly comment on more
general motions.

 In the simplest special case, when the ring particle is at rest relative to the $\alpha$ frame, 
the spacetime is simply (a spacelike translation and rotation of) the static z$G$KN spacetime, of course.
 Furthermore, a Lorentz boost of such a static z$G$KN spacetime coordinates produces the spacetime of a ring singularity in
straight inert motion relative to the $\alpha$ frame; see Fig.~2 for a totally stripped down illustration of the physical 
spacetime with a ring singularity in straight motion contrasted with the corresponding illustration of a test particle in 
straight motion in the z$G$KN spacetime.

 \begin{figure}[ht]\label{fig:picApicB}
\centering
\includegraphics[scale=0.7]{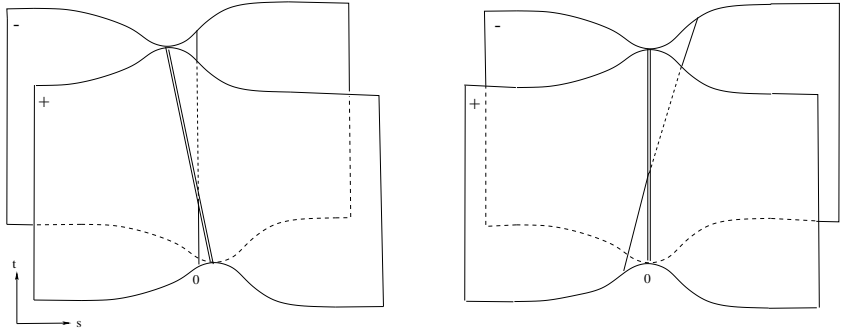}
\caption{Left: The  physical spacetime with a single ring singularity in straight line motion relative to the rest frame of the 
$\alpha$ point (marked $0$), which in the depicted scenario gets ``swept over'' by the ring singularity;
Right: Trajectory of a test particle in straight line motion in the static z$G$KN spacetime having the center of its ring singularity 
as the origin (marked $0$). The test particle transits through the ring from one sheet to the other.
(In both illustrations, the two flat sheets of the double-sheeted spacetimes are bent apart for the purpose of visualization.)}
\end{figure}

\noindent
Note that the ring will be circular generally only in its  rest frame, whereas a Lorentz boost will render it generally elliptical. 
 However, neglecting effects of order $v^2/c^2$ as relatively small in the quasi-static motions 
approximation, a moving ring can  be described as circular to leading order.

 The physical spacetime with a single moving space-like ring singularity is  \emph{generally not} obtained by a
Lorentz boost of z$G$KN, but in the quasi-static approximation it is foliated by constant-time Sommerfeld spaces 
which are all snapshots of the z$G$KN spacetime with a ring singularity translated and rotated relative to the fixed $\alpha$ frame.
 Unfortunately, it's not so easy to generalize the left illustration in Fig.~2 to visualize a double sheeted spacetime with 
a ring singularity not moving in a straight line motion with respect to an $\alpha$ point having a vertical world line. 
 Yet, we hope that the general idea of what this physical spacetime is has become clear.

 Incidentally, the double-sheeted physical zero-$G$ spacetime with a generally moving ring-like singularity will feature 
snapshots having deformed ring singularities, and a proper treatment would presumably require solving the so-called vacuum Einstein
equations.
 It is likely that the spacetime will then also feature ``gravitational'' waves.\footnote{Note that $G=0$ uncouples
the spacetime structure from its matter and field content, but the spacetime metric does not need to be
flat. In particular, it can feature wave-like disturbances propagating at the speed of ``light.''}

\noindent{\textbf{1.2.3 The configuration space(time) of a moving z$G$KN-type ring singularity}}.

\noindent
 Finally, to formulate Dirac's equation for the spinor wave function of such a ring-like bi-particle we need to identify the
\emph{configuration space of a single ring particle} on which the wave function is defined (as  bispinor-valued $L^2$ function),
and this is simply the space of all possible cofigurations for the particle relative to the $\alpha$ frame.
  In case of the ring singularity, classically knowing its configuration means knowing the location of the center of the ring as well 
its orientation, i.e. the unit normal to the plane of the ring. 
 Quantum mechanically however, this orientation vector is itself determined by a bi-spinor wave function (as we will show), so that 
specifying the 
location of the center of the ring, with respect to a frame attached to the $\alpha$ point of the space, should suffice. 
 It follows that the configuration space for a ring-like particle should also be double-sheeted, since one can contemplate the origin to be 
``swallowed'' by the ring as its center moves, causing the origin to end up on a different sheet of the physical space. 
Since that is a different configuration for the ring, one sheet will not be enough to encode all the possible positions of the center of the 
ring; thus the configuration space is also double-sheeted, and in fact isomorphic to the Sommerfeld space obtained by a constant-time snapshot
of the z$G$KN spacetime as one can easily show.
  We remark that \emph{formally} the Dirac equation for a quasi-statically moving ring singularity is therefore a bispinor-valued wave 
equation on the configuration spacetime which is isomorphic to z$G$KN. 
It is important not to confuse this configuration spacetime with a physical z$G$KN spacetime, though. 

\noindent{\textbf{1.2.4 Enter Dirac's equation}}.

\noindent
        To test our bi-particle interpretation of the ring singularity of the constant-time snapshot of the maximal-analytically extended
z$G$KN spacetime as an electron/anti-electron structure whose motion relative to an $\alpha$ frame is governed by Dirac's 
equation we study the pertinent general-relativistic zero-gravity Hydrogen problem in the usual Born--Oppenheimer approximation.
        We first show that the classical electromagnetic interaction of a static z$G$KN ring singularity 
with a point charge located at the $\alpha$ point 
in the static spacelike slice of z$G$KN is given by the minimal coupling formula used to describe the
interaction of a test point charge in the field of a given z$G$KN singularity --- to avoid confusion with the ``naive minimal coupling''
evaluation of the Coulomb field of the point charge at the center of the z$G$KN ring singularity (times its charge parameter),
we will call our interaction formula a ``minimal re-coupling'' term.
        Next we show that although the z$G$KN singularity is not point- but ring-like, \emph{in the quasi-static motions regime}
one and the same Dirac equation covers these two interpretations of the ``point charge plus z$G$KN ring singularity'' system
(i.e., only the  narrative changes); in particular, the eigenstates are captured correctly. 
        In another accompanying paper \cite{KieTah14a} we have rigorously studied this Dirac equation (in the interpretation ``test point
charge in the z$G$KN spacetime''), and in the present paper the results
of \cite{KieTah14a} will be explained in the novel interpretation put forward here.
        In addition, here we will argue --- compellingly as we believe --- that our spectral results suggest as choice for the ring diameter
of the electron/anti-electron bi-particle structure \emph{not} the reduced
 Compton wavelength of the electron, $\hbar/mc$, as proposed in earlier, different studies which aimed at linking the Kerr--Newman 
spacetime with Dirac's electron (see Appendix C), but only a tiny fraction of it.

 Lastly, we will show that with the help of the Dirac bi-spinors de Broglie--Bohm type laws of motion 
and orientation can be formulated for the z$G$KN ring singularity in the quasi-static regime.

\noindent{\textbf{1.2.5 Summary}}.

\noindent
 In this paper \emph{we propose a fundamentally new quantum-mechanical interpretation of Dirac's 
equation as a single-particle equation for a fermionic elementary particle which is both an electron and an anti-electron.}
 We
\begin{itemize}
\item
argue that the primitive ontology of such a bi-particle which is both an electron and a positron is realized by the
general-relativistic electromagnetic ring singularity of a constant-time snapshot of the double-sheeted 
zero-gravity Kerr--Newman (z$G$KN) spacetime;
\item
explain that the {\em physical spacetime} of such a ring-like bi-particle in quasi-static motion with respect to a rest frame
attached to an arbitrarily chosen origin (Neumann's $\alpha$ point) is a two-sheeted spacetime with a ``wiggly'' timelike tubular 
region traced out by the moving ring singularity, such that every constant-time snapshot (w.r.t. the $\alpha$ frame) is a combined
translation and rotation of the constant-time snapshot of the z$G$KN spacetime;
\item
explain that the {\em configuration spacetime} of this ring-like bi-particle in quasi-static motion is isomorphic to the
z$G$KN spacetime --- any point in the pertinent configuration space represents a possible position of the 
{\em center} of the ring singularity, relative to the $\alpha$ frame;
\item
show that a sheet swap in configuration space 
is associated with a \emph{topological spin} operator acting on the bi-spinorial wave function defined on configuration space;
\item
show how the bi-spinorial wave function on configuration space naturally defines a three-frame attached to the center of the ring singularity, 
and thereby an orientation (spin) vector;
\item
formulate quantum laws of motion for the center and orientation of this two-faced particle using
the de Broglie--Bohm approach relative to the $\alpha$ frame, involving a bi-spinorial wave function satisfying Dirac's 
equation on the configurational z$G$KN spacetime;
\item
compute the classical electromagnetic interaction in physical space of a quasi-static ring singularity with 
an infinitely massive point charge (modeling a proton) located at the $\alpha$ point,
showing it is identical to the minimal coupling formula of a point charge in the electromagnetic Appell--Sommerfeld (viz. z$G$KN) fields;
\item
explain that the Dirac spectrum in our model coincides with the spectrum of a  {\em ``Dirac point positron''} in a given, 
fixed background z$G$KN spacetime --- we then invoke the results we have previously obtained in this test-charge situation, namely: 
the self-adjoineness of the Dirac Hamiltonian, the symmetry of its spectrum about 0, 
the essential spectrum being $\RR \setminus [-m,m]$, and the existence of discrete spectrum under suitable smallness assumptions;
\item
argue that the discrete spectrum reproduces the Sommerfeld fine structure formula (in fact, a positive and a negative
copy of it) in the limit of vanishing ring radius, and that the magnitude of effects like hyperfine structure and Lamb(-like) 
shift of the spectral lines put a limit on the size of the ring singularity which corresponds to the identification of the z$G$KN magnetic
moment with the \emph{anomalous magnetic moment of the electron}.
\end{itemize}

\subsection{The structure of the ensuing sections}

 In section 2 we summarize the basic formulas of the z$G$KN spacetimes and their 
electromagnetic fields, and also some straightforward generalizations of the latter. 
  This material is taken from \cite{zGKN}.

 Section 3: We formulate the Dirac equation for a z$G$KN-type ring singularity that interacts with
an infinitely massive static point charge located at the $\alpha$ point in this topologically non-trivial (double-sheeted) manifold. 
 We vindicate the ``minimal re-coupling'' interaction formula introduced here.
 We then summarize the pertinent results obtained in \cite{KieTah14a} for the equivalent Dirac problem of a test point positron in
the field of an infinitely massive z$G$KN singularity:
essential self-adjointness of the Hamiltonian; symmetry of its spectrum about zero; the usual Dirac continuum with a gap; 
a discrete spectrum inside the gap if the ring radius and the coupling constant are small enough ---
in principle numerically computable by ODE methods using the Chandrasekhar--Page--Toop separation-of-variables method in
concert with the Pr\"ufer transform.
  Here, we also show (formally) that in the limit of vanishing diameter of the ring singularity 
the positive part of the discrete spectrum reproduces Sommerfeld's fine structure spectrum (with 
correct Dirac labeling) for Born--Oppenheimer Hydrogen, while the negative part produces the negative of the same
(we also explain why the familiar special-relativistic Dirac spectrum only contains half of it).
 We argue why this result implies that the choice of ring diameter for an electron/anti-electron bi-particle structure 
should only be a tiny fraction of the reduced Compton wavelength of the electron.
 We also explain why a study of finite-size effects of the ring singularity on the spectrum using perturbation
theory may be problematic, while perturbative computations of effects of a Kerr--Newman-anomalous magnetic 
moment on the spectrum is presumably feasible.
 The technical sections \ref{sec:Cartan}, \ref{sec:HamDir}, \ref{sec:Hilbert} are nearly verbatim adapted 
from \cite{KieTah14a} and reproduced here for the convenience of the reader.

Section 4: We formulate de Broglie--Bohm laws of motion and orientation for the z$G$KN-type ring singularity.

Section 5: We conclude with suggesting future work.
 In particular, we include some speculations about the proper two- and many-body theories consisting entirely of 
(semi-classically) electromagnetically interacting z$G$KN ring singularities. 
 The two-body problem is clearly of interest as a putative model for \emph{positronium}, while the many-body 
theory may offer an intriguing novel explanation of why the particle/anti-particle symmetry is broken
in our (part of the) universe.

Appendix: Two appendices supply  technical material, one appendix contrasts our work with earlier proposals to link
the Kerr--Newman spacetime with ``Dirac's equation for the electron.''

 Almost everywhere in this paper we work in spacetime units in which the speed of light in vacuo $c=1$, and in the more 
mathematical parts we will also set $\hbar$, Planck's constant divided by $2\pi$, equal to unity; in some physically important 
formulas we will re-instate both $c$ and $\hbar$. 

\section{$\!\!$z$G$KN spacetimes $\&$ electromagnetic fields, and generalizations}

\subsection{Electromagnetic spacetimes}

An \emph{electromagnetic spacetime} is a triple $(\cM,\bg,\bF)$, consisting of
a four-dimensional manifold $\cM$, a Lorentzian metric $\bg$ on $\cM$ (here with signature $(+,-,-,-)$ in line
with recent mathematical works on the general-relativistic Dirac equation),
and a two-form $\bF$ representing an electromagnetic field defined on $\cM$, altogether solving
the Einstein--Maxwell equations (in units in which $c=1$)
\bna
 R_{\mu\nu}[\bg] - {\ts\frac{1}{2}} R[\bg] g_{\mu\nu} & = & 8\pi G  T_{\mu\nu}[\bg,\bF]\label{eq:Ein}\\
 \nab^\mu F_{\mu\nu} & = & 0.\label{eq:Max}
 \ena
 Here, $R_{\mu\nu}$ denotes the components of the Ricci curvature tensor and $R$ the scalar curvature of the metric $\bg$.
 Finally, $T_{\mu\nu}$ are the components of the trace-free electromagnetic energy(-density)-momentum(-density)-stress tensor:
 \beq\label{eq:Tmunu}
 T_{\mu\nu}
 = {\ts\frac{1}{4\pi}}\left(F_{\mu}^\la {}\star\!F_{\nu\la} -{\ts\frac{1}{4}} g_{\mu\nu} F_{\al\beta} F^{\al\beta}\right),
 \eeq
where $\star$ denotes the Hodge duality map.
 Since $\bT$ is trace-free, viz. $T_{\mu}^{\mu}=0$, the Ricci scalar $R=0$, so that \refeq{eq:Ein} simplifies to
\bna
 R_{\mu\nu}[\bg]  & = & 8\pi G  T_{\mu\nu}[\bg,\bF]\label{eq:EinMax}
 \ena

\subsection{A brief recap of the electromagnetic Kerr--Newman spacetimes}
         
      The ``outer'' {\em Kerr--Newman} (KN for short) family of spacetimes is a three-parameter family of stationary axisymmetric solutions 
$(\cM,\bg,\bF)$ to the Einstein--Maxwell equations.
	The three parameters mentioned are the ADM mass (total energy) $\textsc{m}$, ADM angular momentum per unit mass $a$, and total 
charge $\textsc{q}$; here, ADM stands for Arnowitt, Deser, and Misner, who defined these quantities in terms of surface integrals at
spatial infinity; the charge is similarly defined using Gauss' theorem.
 Of course, the solution $(\cM,\bg,\bF)$ also depends on $G$, though only in combination with $\textsc{m}$ and
$\textsc{q}^2$.
	The KN metric $\bg$ is singular on the timelike cylindrical surface whose cross-section at any fixed $t$ is a circle of (Euclidean) 
radius $\sqrt{a^2+\ka\textsc{q}^2}$, where $\ka = 2G$, and $G$ is Newton's constant of universal gravitation.
	This circle is commonly referred to as the ``ring'' singularity.
   The KN electromagnetic field $\bF$ is also singular on the same ring as the metric, while \emph{asymptotically} (near spatial infinity) 
it becomes indistinguishable from an electric monopole field of moment $\textsc{q}$ and a magnetic dipole field of moment $\textsc{q}a$
in Minkowski spacetime. 

	The causal structure of the maximal-analytical extension of the KN spacetime is quite complex \cite{Car68} (see 
also \cite{HehlREVIEW,ONeillBOOK}), and depends on the relationship 
between the parameters:  For $a^2+\ka\textsc{q}^2 <\ka^2 \textsc{m}^2$ (the subextremal case) there is an ``ergosphere'' region, where 
due to frame dragging the spacetime is no longer stationary.
	In addition, there are two horizons. 
	One is an event horizon (boundary of a black hole region) shielding one asymptotically flat end of the manifold from the
ring singularity and the acausal region surrounding the latter, and the other a Cauchy horizon, beyond which the initial data do not
determine the evolution.
	For $a^2+\ka\textsc{q}^2> \ka^2 \textsc{m}^2$ (the superextremal case) there is still an ergosphere region but there is
no event horizon, and thus the ring singularity is naked.
	The region near the ring is particularly pathological in both sub- and super-extremal cases since it includes closed timelike loops.
	In the superextremal case the presence of these loops turns the entire manifold into a causally vicious set.
        
\subsection{$\!\!$Zero-$G$ limit of Kerr--Newman spacetimes and their electromagnetic fields}

	It is however possible to rid the KN family from all its causal pathologies mentioned above: 
take the limit $G\to 0$ of the KN family in a global chart of asymptotically flat coordinates which 
respect all the symmetries of the spacetime, while fixing the values of the three parameters $\textsc{m,q,}$ and $a$.
        Since $\textsc{m}$ occurs only in the combination $G\textsc{m}$, a
two-parameter family of spacetimes, depending on $a$ and $\textsc{q}$, emerges in the limit, and is denoted as the 
{\em zero-$G$ Kerr--Newman}  (z$G$KN) family.
        In the z$G$KN family there are no horizons, no ergosphere regions, no closed timelike loops, and no causally vicious regions.
	Even though the metrics and fields in this family are still singular on a ring (now of radius $|a|$), the spacetimes do not 
suffer from any of these maladies.
	Indeed, the manifold is locally flat, and the singularity of the metric at a point on the ring is relatively mild, namely conical.

        A key feature associated with this ring singularity which survives the zero-$G$ limit is the {\em Zipoy topology} of the 
double-sheeted maximal-analytical extension of the KN family of spacetimes.
	Remarkably, the entire z$G$KN spacetime can nevertheless be covered by a single chart of coordinates, 
named after Boyer--Lindquist (BL) and usually denoted by $(t,r,\theta,\varphi)$, where the timelike coordinate $t$ and the spacelike 
coordinate $\varphi$ are Killing parameters (i.e. $\frac{\p}{\p t}$ is a timelike and $\frac{\p}{\p\varphi}$ a spacelike Killing field) 
and $(r,\theta,\varphi)$ are {\em oblate spheroidal coordinates} in $t=$const. slices; see Fig.~3.
 Each constant-$t$ slice consists of two copies of $\RR^3$ (the {\em sheets}) that are cross-linked through two 2-discs at $r=0$, as
depicted in Fig.~3; one sheet corresponds to $r>0$ and the other to $r<0$.

\begin{figure}[ht]\centering
\includegraphics[scale=0.3]{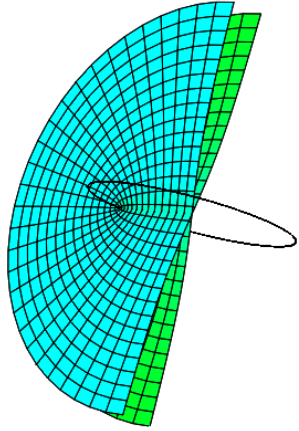} 
\centering
\caption{An immersion of a constant-$\varphi$ section of a constant-$t$ snapshot of the z$G$KN spacetime
together with its ring singularity (parametrized by $\varphi$); also shown is an oblate spheroidal coordinate grid 
--- the semi-ellipses stay on a sheet, while the hyperbolas transit from one sheet to the other.}
\end{figure}

	The KN electromagnetic field (which is in fact independent of $G$), is naturally defined on this branched space.
	To physicists situated far from the ring, the charge carried by it appears positive in one sheet, and negative 
in the other.
	Thus, in the same vein in which a Coulomb singularity in Minkowski spacetime is traditionally interpreted as representing
an electrically charged point particle, the ring singularity of the Kerr--Newman spacetime in its zero-$G$ limit can be 
thought of as a particle which, when viewed from spacelike infinity, appears as an electrically charged point particle with 
a magnetic dipole moment, yet there are two different asymptotically flat ends, and in one such end the particle appears as 
the anti-particle of what is visible in the other sheet.

        \emph{The above description of the electromagnetic Kerr--Newman singularity in its zero-$G$ limit makes it plain that
general relativity supplies a singularity with an electromagnetic structure which suggests the interpretation of electron and 
positron as being just two ``different sides of the same medal,'' or in analogy to Heisenberg's iso-spin concept, 
two different ``iso-spin'' states of one and the same more fundamental particle. 
    Since Heisenberg's iso-spin referred to nucleons (places in the chart of ``isotopes,'' or rather ``isobars''), 
we don't want to use ``iso-spin'' for the suggested electron-positron dichotomy.
    Instead, since the binary character of the Kerr--Newman singularity is associated with the non-trivial topology 
of its spacetime, we will use the term ``topo-spin'' (short for ``topological spin'').
        This term will be vindicated below: we will show that there is a sheet-swap transformation whose
derivative induces a representation of the Lorentz group acting on the tangent bundle which is 
equivalent to the usual spin $1/2$ representation.}

\subsubsection{The metric of the maximal-analytically extended z$G$KN spacetime}

We begin by recalling some standard definitions and results:

Let $\cM = (\Rset^{1,3}\setminus\partial\cZ) \sharp (\Rset^{1,3}\setminus\partial\cZ)$  be a connected sum 
of two copies of the Minkowski spacetime from each of which the boundary $\partial\cZ$ of a timelike cylinder 
$\cZ$ (2-disc $\times$ $\Rset$) has been removed, and which are cross-linked through the interior of the cylinders in 
the manner depicted in Fig.~4 for a constant-time snapshot.

\begin{figure}[ht]
\centering
\font\thinlinefont=cmr5
\begingroup\makeatletter\ifx\SetFigFont\undefined%
\gdef\SetFigFont#1#2#3#4#5{%
  \reset@font\fontsize{#1}{#2pt}%
  \fontfamily{#3}\fontseries{#4}\fontshape{#5}%
  \selectfont}%
\fi\endgroup%
\mbox{\beginpicture
\setcoordinatesystem units <0.50000cm,0.50000cm>
\unitlength=0.50000cm
\linethickness=1pt
\setplotsymbol ({\makebox(0,0)[l]{\tencirc\symbol{'160}}})
\setshadesymbol ({\thinlinefont .})
\setlinear
%
%
\linethickness=2pt
\setplotsymbol ({\makebox(0,0)[l]{\tencirc\symbol{'161}}})
{\color[rgb]{0,0,0}\ellipticalarc axes ratio  2.965:0.574  360 degrees 
	from 17.877 11.779 center at 14.912 11.779
}%
%
%
\linethickness=2pt
\setplotsymbol ({\makebox(0,0)[l]{\tencirc\symbol{'161}}})
{\color[rgb]{0,0,0}\ellipticalarc axes ratio  2.965:0.574  360 degrees 
	from 31.625 11.684 center at 28.660 11.684
}%
%
%
\linethickness= 0.500pt
\setplotsymbol ({\thinlinefont .})
{\color[rgb]{0,0,0}\plot 22.521 15.646 22.528 15.640 /
\plot 22.528 15.640 22.543 15.623 /
\plot 22.543 15.623 22.568 15.598 /
\plot 22.568 15.598 22.600 15.562 /
\plot 22.600 15.562 22.640 15.519 /
\plot 22.640 15.519 22.682 15.471 /
\plot 22.682 15.471 22.722 15.424 /
\plot 22.722 15.424 22.761 15.378 /
\plot 22.761 15.378 22.792 15.335 /
\plot 22.792 15.335 22.820 15.295 /
\plot 22.820 15.295 22.841 15.259 /
\plot 22.841 15.259 22.856 15.227 /
\plot 22.856 15.227 22.866 15.196 /
\plot 22.866 15.196 22.873 15.166 /
\plot 22.873 15.166 22.875 15.136 /
\plot 22.875 15.136 22.873 15.105 /
\plot 22.873 15.105 22.866 15.071 /
\plot 22.866 15.071 22.856 15.037 /
\plot 22.856 15.037 22.841 15.001 /
\plot 22.841 15.001 22.824 14.963 /
\plot 22.824 14.963 22.805 14.927 /
\plot 22.805 14.927 22.784 14.887 /
\plot 22.784 14.887 22.765 14.848 /
\plot 22.765 14.848 22.744 14.812 /
\plot 22.744 14.812 22.727 14.774 /
\plot 22.727 14.774 22.710 14.738 /
\plot 22.710 14.738 22.697 14.704 /
\plot 22.697 14.704 22.689 14.671 /
\plot 22.689 14.671 22.682 14.637 /
\plot 22.682 14.637 22.680 14.605 /
\plot 22.680 14.605 22.684 14.571 /
\plot 22.684 14.571 22.691 14.535 /
\plot 22.691 14.535 22.699 14.499 /
\plot 22.699 14.499 22.712 14.463 /
\plot 22.712 14.463 22.727 14.423 /
\plot 22.727 14.423 22.741 14.385 /
\plot 22.741 14.385 22.756 14.347 /
\plot 22.756 14.347 22.771 14.307 /
\plot 22.771 14.307 22.784 14.271 /
\plot 22.784 14.271 22.796 14.235 /
\plot 22.796 14.235 22.805 14.199 /
\plot 22.805 14.199 22.811 14.165 /
\plot 22.811 14.165 22.813 14.131 /
\putrule from 22.813 14.131 to 22.813 14.099
\plot 22.813 14.099 22.807 14.065 /
\plot 22.807 14.065 22.801 14.031 /
\plot 22.801 14.031 22.790 13.995 /
\plot 22.790 13.995 22.777 13.957 /
\plot 22.777 13.957 22.763 13.921 /
\plot 22.763 13.921 22.748 13.883 /
\plot 22.748 13.883 22.733 13.845 /
\plot 22.733 13.845 22.718 13.807 /
\plot 22.718 13.807 22.705 13.771 /
\plot 22.705 13.771 22.695 13.735 /
\plot 22.695 13.735 22.689 13.701 /
\plot 22.689 13.701 22.682 13.667 /
\putrule from 22.682 13.667 to 22.682 13.636
\plot 22.682 13.636 22.684 13.604 /
\plot 22.684 13.604 22.691 13.570 /
\plot 22.691 13.570 22.699 13.536 /
\plot 22.699 13.536 22.712 13.500 /
\plot 22.712 13.500 22.725 13.464 /
\plot 22.725 13.464 22.741 13.428 /
\plot 22.741 13.428 22.758 13.390 /
\plot 22.758 13.390 22.775 13.352 /
\plot 22.775 13.352 22.790 13.316 /
\plot 22.790 13.316 22.805 13.280 /
\plot 22.805 13.280 22.818 13.246 /
\plot 22.818 13.246 22.826 13.212 /
\plot 22.826 13.212 22.832 13.180 /
\plot 22.832 13.180 22.837 13.149 /
\putrule from 22.837 13.149 to 22.837 13.117
\plot 22.837 13.117 22.832 13.085 /
\plot 22.832 13.085 22.824 13.053 /
\plot 22.824 13.053 22.816 13.020 /
\plot 22.816 13.020 22.803 12.986 /
\plot 22.803 12.986 22.790 12.952 /
\plot 22.790 12.952 22.775 12.916 /
\plot 22.775 12.916 22.761 12.882 /
\plot 22.761 12.882 22.748 12.846 /
\plot 22.748 12.846 22.733 12.814 /
\plot 22.733 12.814 22.722 12.780 /
\plot 22.722 12.780 22.714 12.751 /
\plot 22.714 12.751 22.708 12.719 /
\plot 22.708 12.719 22.705 12.692 /
\putrule from 22.705 12.692 to 22.705 12.662
\plot 22.705 12.662 22.710 12.632 /
\plot 22.710 12.632 22.714 12.601 /
\plot 22.714 12.601 22.722 12.571 /
\plot 22.722 12.571 22.733 12.537 /
\plot 22.733 12.537 22.746 12.503 /
\plot 22.746 12.503 22.758 12.471 /
\plot 22.758 12.471 22.769 12.438 /
\plot 22.769 12.438 22.782 12.404 /
\plot 22.782 12.404 22.792 12.372 /
\plot 22.792 12.372 22.801 12.340 /
\plot 22.801 12.340 22.805 12.308 /
\plot 22.805 12.308 22.809 12.279 /
\putrule from 22.809 12.279 to 22.809 12.249
\plot 22.809 12.249 22.807 12.222 /
\plot 22.807 12.222 22.801 12.192 /
\plot 22.801 12.192 22.792 12.160 /
\plot 22.792 12.160 22.780 12.129 /
\plot 22.780 12.129 22.767 12.095 /
\plot 22.767 12.095 22.752 12.063 /
\plot 22.752 12.063 22.737 12.027 /
\plot 22.737 12.027 22.722 11.993 /
\plot 22.722 11.993 22.708 11.961 /
\plot 22.708 11.961 22.695 11.927 /
\plot 22.695 11.927 22.684 11.898 /
\plot 22.684 11.898 22.676 11.866 /
\plot 22.676 11.866 22.672 11.839 /
\plot 22.672 11.839 22.670 11.811 /
\putrule from 22.670 11.811 to 22.670 11.783
\plot 22.670 11.783 22.674 11.754 /
\plot 22.674 11.754 22.680 11.726 /
\plot 22.680 11.726 22.691 11.697 /
\plot 22.691 11.697 22.701 11.665 /
\plot 22.701 11.665 22.714 11.635 /
\plot 22.714 11.635 22.727 11.604 /
\plot 22.727 11.604 22.739 11.572 /
\plot 22.739 11.572 22.752 11.542 /
\plot 22.752 11.542 22.763 11.510 /
\plot 22.763 11.510 22.771 11.481 /
\plot 22.771 11.481 22.777 11.453 /
\plot 22.777 11.453 22.780 11.424 /
\putrule from 22.780 11.424 to 22.780 11.396
\plot 22.780 11.396 22.777 11.369 /
\plot 22.777 11.369 22.769 11.341 /
\plot 22.769 11.341 22.761 11.309 /
\plot 22.761 11.309 22.748 11.280 /
\plot 22.748 11.280 22.733 11.248 /
\plot 22.733 11.248 22.718 11.214 /
\plot 22.718 11.214 22.701 11.182 /
\plot 22.701 11.182 22.686 11.148 /
\plot 22.686 11.148 22.672 11.117 /
\plot 22.672 11.117 22.659 11.085 /
\plot 22.659 11.085 22.648 11.055 /
\plot 22.648 11.055 22.642 11.026 /
\plot 22.642 11.026 22.638 11.000 /
\plot 22.638 11.000 22.640 10.973 /
\plot 22.640 10.973 22.644 10.943 /
\plot 22.644 10.943 22.655 10.914 /
\plot 22.655 10.914 22.672 10.882 /
\plot 22.672 10.882 22.691 10.850 /
\plot 22.691 10.850 22.712 10.818 /
\plot 22.712 10.818 22.735 10.787 /
\plot 22.735 10.787 22.758 10.755 /
\plot 22.758 10.755 22.780 10.723 /
\plot 22.780 10.723 22.799 10.693 /
\plot 22.799 10.693 22.813 10.664 /
\plot 22.813 10.664 22.822 10.636 /
\plot 22.822 10.636 22.826 10.607 /
\plot 22.826 10.607 22.824 10.579 /
\plot 22.824 10.579 22.816 10.549 /
\plot 22.816 10.549 22.803 10.520 /
\plot 22.803 10.520 22.786 10.488 /
\plot 22.786 10.488 22.767 10.454 /
\plot 22.767 10.454 22.746 10.422 /
\plot 22.746 10.422 22.725 10.389 /
\plot 22.725 10.389 22.705 10.357 /
\plot 22.705 10.357 22.686 10.325 /
\plot 22.686 10.325 22.674 10.295 /
\plot 22.674 10.295 22.665 10.270 /
\plot 22.665 10.270 22.661 10.243 /
\plot 22.661 10.243 22.663 10.221 /
\plot 22.663 10.221 22.670 10.200 /
\plot 22.670 10.200 22.682 10.177 /
\plot 22.682 10.177 22.701 10.154 /
\plot 22.701 10.154 22.727 10.128 /
\plot 22.727 10.128 22.758 10.101 /
\plot 22.758 10.101 22.796 10.071 /
\plot 22.796 10.071 22.837 10.039 /
\plot 22.837 10.039 22.879 10.010 /
\plot 22.879 10.010 22.917  9.984 /
\plot 22.917  9.984 22.947  9.963 /
\plot 22.947  9.963 22.966  9.953 /
\plot 22.966  9.953 22.972  9.946 /
\putrule from 22.972  9.946 to 22.974  9.946
}%
%
%
\linethickness= 0.500pt
\setplotsymbol ({\thinlinefont .})
{\color[rgb]{0,0,0}%
%
\plot 26.707 12.026 26.799 11.781 26.833 12.041 /
\plot 26.799 11.781 26.793 11.836 /
\plot 26.793 11.836 26.789 11.872 /
\plot 26.789 11.872 26.784 11.917 /
\plot 26.784 11.917 26.776 11.974 /
\plot 26.776 11.974 26.767 12.040 /
\plot 26.767 12.040 26.757 12.112 /
\plot 26.757 12.112 26.746 12.192 /
\plot 26.746 12.192 26.734 12.277 /
\plot 26.734 12.277 26.719 12.366 /
\plot 26.719 12.366 26.704 12.454 /
\plot 26.704 12.454 26.687 12.545 /
\plot 26.687 12.545 26.670 12.634 /
\plot 26.670 12.634 26.653 12.721 /
\plot 26.653 12.721 26.634 12.806 /
\plot 26.634 12.806 26.613 12.888 /
\plot 26.613 12.888 26.594 12.969 /
\plot 26.594 12.969 26.571 13.047 /
\plot 26.571 13.047 26.547 13.123 /
\plot 26.547 13.123 26.524 13.197 /
\plot 26.524 13.197 26.499 13.269 /
\plot 26.499 13.269 26.471 13.341 /
\plot 26.471 13.341 26.441 13.413 /
\plot 26.441 13.413 26.410 13.487 /
\plot 26.410 13.487 26.376 13.559 /
\plot 26.376 13.559 26.340 13.633 /
\plot 26.340 13.633 26.302 13.710 /
\plot 26.302 13.710 26.270 13.769 /
\plot 26.270 13.769 26.238 13.828 /
\plot 26.238 13.828 26.204 13.887 /
\plot 26.204 13.887 26.170 13.951 /
\plot 26.170 13.951 26.132 14.012 /
\plot 26.132 14.012 26.094 14.078 /
\plot 26.094 14.078 26.054 14.144 /
\plot 26.054 14.144 26.014 14.211 /
\plot 26.014 14.211 25.969 14.279 /
\plot 25.969 14.279 25.925 14.351 /
\plot 25.925 14.351 25.876 14.421 /
\plot 25.876 14.421 25.828 14.495 /
\plot 25.828 14.495 25.777 14.569 /
\plot 25.777 14.569 25.724 14.643 /
\plot 25.724 14.643 25.671 14.719 /
\plot 25.671 14.719 25.616 14.796 /
\plot 25.616 14.796 25.559 14.874 /
\plot 25.559 14.874 25.499 14.952 /
\plot 25.499 14.952 25.440 15.030 /
\plot 25.440 15.030 25.379 15.107 /
\plot 25.379 15.107 25.317 15.185 /
\plot 25.317 15.185 25.256 15.263 /
\plot 25.256 15.263 25.193 15.342 /
\plot 25.193 15.342 25.129 15.418 /
\plot 25.129 15.418 25.063 15.494 /
\plot 25.063 15.494 24.998 15.568 /
\plot 24.998 15.568 24.934 15.642 /
\plot 24.934 15.642 24.869 15.714 /
\plot 24.869 15.714 24.803 15.786 /
\plot 24.803 15.786 24.737 15.856 /
\plot 24.737 15.856 24.672 15.926 /
\plot 24.672 15.926 24.608 15.991 /
\plot 24.608 15.991 24.543 16.057 /
\plot 24.543 16.057 24.479 16.121 /
\plot 24.479 16.121 24.414 16.182 /
\plot 24.414 16.182 24.350 16.243 /
\plot 24.350 16.243 24.285 16.303 /
\plot 24.285 16.303 24.221 16.360 /
\plot 24.221 16.360 24.158 16.417 /
\plot 24.158 16.417 24.092 16.472 /
\plot 24.092 16.472 24.028 16.525 /
\plot 24.028 16.525 23.963 16.578 /
\plot 23.963 16.578 23.895 16.631 /
\plot 23.895 16.631 23.829 16.681 /
\plot 23.829 16.681 23.760 16.732 /
\plot 23.760 16.732 23.692 16.783 /
\plot 23.692 16.783 23.620 16.832 /
\plot 23.620 16.832 23.548 16.883 /
\plot 23.548 16.883 23.476 16.931 /
\plot 23.476 16.931 23.400 16.978 /
\plot 23.400 16.978 23.324 17.026 /
\plot 23.324 17.026 23.247 17.073 /
\plot 23.247 17.073 23.167 17.122 /
\plot 23.167 17.122 23.086 17.168 /
\plot 23.086 17.168 23.004 17.213 /
\plot 23.004 17.213 22.921 17.259 /
\plot 22.921 17.259 22.837 17.304 /
\plot 22.837 17.304 22.752 17.348 /
\plot 22.752 17.348 22.665 17.391 /
\plot 22.665 17.391 22.578 17.433 /
\plot 22.578 17.433 22.490 17.475 /
\plot 22.490 17.475 22.401 17.515 /
\plot 22.401 17.515 22.312 17.556 /
\plot 22.312 17.556 22.223 17.594 /
\plot 22.223 17.594 22.134 17.632 /
\plot 22.134 17.632 22.043 17.670 /
\plot 22.043 17.670 21.954 17.706 /
\plot 21.954 17.706 21.865 17.742 /
\plot 21.865 17.742 21.774 17.776 /
\plot 21.774 17.776 21.685 17.808 /
\plot 21.685 17.808 21.596 17.841 /
\plot 21.596 17.841 21.507 17.873 /
\plot 21.507 17.873 21.419 17.903 /
\plot 21.419 17.903 21.330 17.932 /
\plot 21.330 17.932 21.243 17.962 /
\plot 21.243 17.962 21.154 17.990 /
\plot 21.154 17.990 21.065 18.017 /
\plot 21.065 18.017 20.976 18.045 /
\plot 20.976 18.045 20.887 18.072 /
\plot 20.887 18.072 20.798 18.098 /
\plot 20.798 18.098 20.720 18.121 /
\plot 20.720 18.121 20.640 18.142 /
\plot 20.640 18.142 20.559 18.165 /
\plot 20.559 18.165 20.479 18.189 /
\plot 20.479 18.189 20.396 18.210 /
\plot 20.396 18.210 20.312 18.233 /
\plot 20.312 18.233 20.227 18.254 /
\plot 20.227 18.254 20.140 18.275 /
\plot 20.140 18.275 20.051 18.299 /
\plot 20.051 18.299 19.962 18.320 /
\plot 19.962 18.320 19.869 18.343 /
\plot 19.869 18.343 19.776 18.364 /
\plot 19.776 18.364 19.683 18.385 /
\plot 19.683 18.385 19.586 18.407 /
\plot 19.586 18.407 19.490 18.430 /
\plot 19.490 18.430 19.391 18.451 /
\plot 19.391 18.451 19.291 18.472 /
\plot 19.291 18.472 19.190 18.491 /
\plot 19.190 18.491 19.088 18.512 /
\plot 19.088 18.512 18.984 18.534 /
\plot 18.984 18.534 18.881 18.553 /
\plot 18.881 18.553 18.777 18.572 /
\plot 18.777 18.572 18.673 18.591 /
\plot 18.673 18.591 18.567 18.610 /
\plot 18.567 18.610 18.462 18.629 /
\plot 18.462 18.629 18.356 18.646 /
\plot 18.356 18.646 18.250 18.663 /
\plot 18.250 18.663 18.146 18.680 /
\plot 18.146 18.680 18.040 18.694 /
\plot 18.040 18.694 17.937 18.709 /
\plot 17.937 18.709 17.833 18.724 /
\plot 17.833 18.724 17.729 18.739 /
\plot 17.729 18.739 17.628 18.752 /
\plot 17.628 18.752 17.526 18.764 /
\plot 17.526 18.764 17.424 18.777 /
\plot 17.424 18.777 17.325 18.788 /
\plot 17.325 18.788 17.228 18.800 /
\plot 17.228 18.800 17.128 18.809 /
\plot 17.128 18.809 17.033 18.819 /
\plot 17.033 18.819 16.935 18.828 /
\plot 16.935 18.828 16.840 18.836 /
\plot 16.840 18.836 16.747 18.843 /
\plot 16.747 18.843 16.654 18.849 /
\plot 16.654 18.849 16.561 18.855 /
\plot 16.561 18.855 16.468 18.862 /
\plot 16.468 18.862 16.375 18.866 /
\plot 16.375 18.866 16.279 18.872 /
\plot 16.279 18.872 16.182 18.874 /
\plot 16.182 18.874 16.085 18.879 /
\plot 16.085 18.879 15.987 18.881 /
\plot 15.987 18.881 15.890 18.883 /
\plot 15.890 18.883 15.790 18.885 /
\putrule from 15.790 18.885 to 15.691 18.885
\putrule from 15.691 18.885 to 15.589 18.885
\putrule from 15.589 18.885 to 15.490 18.885
\plot 15.490 18.885 15.386 18.883 /
\plot 15.386 18.883 15.284 18.881 /
\plot 15.284 18.881 15.181 18.879 /
\plot 15.181 18.879 15.075 18.874 /
\plot 15.075 18.874 14.971 18.870 /
\plot 14.971 18.870 14.865 18.864 /
\plot 14.865 18.864 14.760 18.857 /
\plot 14.760 18.857 14.652 18.851 /
\plot 14.652 18.851 14.546 18.843 /
\plot 14.546 18.843 14.438 18.834 /
\plot 14.438 18.834 14.332 18.826 /
\plot 14.332 18.826 14.226 18.815 /
\plot 14.226 18.815 14.120 18.804 /
\plot 14.120 18.804 14.014 18.792 /
\plot 14.014 18.792 13.909 18.779 /
\plot 13.909 18.779 13.805 18.766 /
\plot 13.805 18.766 13.703 18.752 /
\plot 13.703 18.752 13.602 18.737 /
\plot 13.602 18.737 13.500 18.722 /
\plot 13.500 18.722 13.401 18.705 /
\plot 13.401 18.705 13.303 18.688 /
\plot 13.303 18.688 13.208 18.671 /
\plot 13.208 18.671 13.115 18.652 /
\plot 13.115 18.652 13.022 18.635 /
\plot 13.022 18.635 12.931 18.616 /
\plot 12.931 18.616 12.842 18.595 /
\plot 12.842 18.595 12.755 18.576 /
\plot 12.755 18.576 12.670 18.555 /
\plot 12.670 18.555 12.588 18.534 /
\plot 12.588 18.534 12.505 18.512 /
\plot 12.505 18.512 12.425 18.489 /
\plot 12.425 18.489 12.347 18.466 /
\plot 12.347 18.466 12.268 18.443 /
\plot 12.268 18.443 12.192 18.419 /
\plot 12.192 18.419 12.118 18.394 /
\plot 12.118 18.394 12.031 18.364 /
\plot 12.031 18.364 11.946 18.335 /
\plot 11.946 18.335 11.864 18.303 /
\plot 11.864 18.303 11.779 18.269 /
\plot 11.779 18.269 11.697 18.235 /
\plot 11.697 18.235 11.614 18.199 /
\plot 11.614 18.199 11.532 18.163 /
\plot 11.532 18.163 11.449 18.125 /
\plot 11.449 18.125 11.369 18.085 /
\plot 11.369 18.085 11.288 18.045 /
\plot 11.288 18.045 11.206 18.002 /
\plot 11.206 18.002 11.125 17.958 /
\plot 11.125 17.958 11.047 17.913 /
\plot 11.047 17.913 10.966 17.867 /
\plot 10.966 17.867 10.888 17.818 /
\plot 10.888 17.818 10.812 17.769 /
\plot 10.812 17.769 10.736 17.719 /
\plot 10.736 17.719 10.662 17.668 /
\plot 10.662 17.668 10.588 17.615 /
\plot 10.588 17.615 10.516 17.562 /
\plot 10.516 17.562 10.446 17.509 /
\plot 10.446 17.509 10.376 17.454 /
\plot 10.376 17.454 10.310 17.399 /
\plot 10.310 17.399 10.245 17.344 /
\plot 10.245 17.344 10.183 17.289 /
\plot 10.183 17.289 10.122 17.234 /
\plot 10.122 17.234 10.063 17.177 /
\plot 10.063 17.177 10.008 17.122 /
\plot 10.008 17.122  9.953 17.065 /
\plot  9.953 17.065  9.902 17.010 /
\plot  9.902 17.010  9.851 16.954 /
\plot  9.851 16.954  9.804 16.897 /
\plot  9.804 16.897  9.758 16.842 /
\plot  9.758 16.842  9.716 16.787 /
\plot  9.716 16.787  9.673 16.732 /
\plot  9.673 16.732  9.633 16.675 /
\plot  9.633 16.675  9.595 16.620 /
\plot  9.595 16.620  9.559 16.563 /
\plot  9.559 16.563  9.519 16.499 /
\plot  9.519 16.499  9.481 16.436 /
\plot  9.481 16.436  9.445 16.370 /
\plot  9.445 16.370  9.409 16.305 /
\plot  9.409 16.305  9.375 16.237 /
\plot  9.375 16.237  9.343 16.167 /
\plot  9.343 16.167  9.311 16.097 /
\plot  9.311 16.097  9.282 16.023 /
\plot  9.282 16.023  9.254 15.949 /
\plot  9.254 15.949  9.227 15.875 /
\plot  9.227 15.875  9.201 15.797 /
\plot  9.201 15.797  9.178 15.718 /
\plot  9.178 15.718  9.157 15.638 /
\plot  9.157 15.638  9.136 15.558 /
\plot  9.136 15.558  9.119 15.475 /
\plot  9.119 15.475  9.102 15.392 /
\plot  9.102 15.392  9.087 15.308 /
\plot  9.087 15.308  9.072 15.225 /
\plot  9.072 15.225  9.061 15.141 /
\plot  9.061 15.141  9.053 15.056 /
\plot  9.053 15.056  9.045 14.971 /
\plot  9.045 14.971  9.038 14.889 /
\plot  9.038 14.889  9.036 14.804 /
\plot  9.036 14.804  9.034 14.721 /
\plot  9.034 14.721  9.032 14.641 /
\plot  9.032 14.641  9.034 14.558 /
\plot  9.034 14.558  9.038 14.478 /
\plot  9.038 14.478  9.042 14.400 /
\plot  9.042 14.400  9.049 14.321 /
\plot  9.049 14.321  9.055 14.243 /
\plot  9.055 14.243  9.066 14.167 /
\plot  9.066 14.167  9.076 14.089 /
\plot  9.076 14.089  9.087 14.012 /
\plot  9.087 14.012  9.102 13.936 /
\plot  9.102 13.936  9.114 13.862 /
\plot  9.114 13.862  9.131 13.784 /
\plot  9.131 13.784  9.150 13.708 /
\plot  9.150 13.708  9.169 13.629 /
\plot  9.169 13.629  9.191 13.551 /
\plot  9.191 13.551  9.212 13.470 /
\plot  9.212 13.470  9.237 13.392 /
\plot  9.237 13.392  9.263 13.310 /
\plot  9.263 13.310  9.290 13.229 /
\plot  9.290 13.229  9.320 13.147 /
\plot  9.320 13.147  9.351 13.062 /
\plot  9.351 13.062  9.383 12.979 /
\plot  9.383 12.979  9.417 12.897 /
\plot  9.417 12.897  9.453 12.812 /
\plot  9.453 12.812  9.489 12.730 /
\plot  9.489 12.730  9.529 12.647 /
\plot  9.529 12.647  9.567 12.567 /
\plot  9.567 12.567  9.608 12.486 /
\plot  9.608 12.486  9.650 12.408 /
\plot  9.650 12.408  9.692 12.330 /
\plot  9.692 12.330  9.735 12.253 /
\plot  9.735 12.253  9.779 12.181 /
\plot  9.779 12.181  9.821 12.109 /
\plot  9.821 12.109  9.866 12.040 /
\plot  9.866 12.040  9.910 11.974 /
\plot  9.910 11.974  9.955 11.910 /
\plot  9.955 11.910  9.999 11.849 /
\plot  9.999 11.849 10.046 11.790 /
\plot 10.046 11.790 10.090 11.733 /
\plot 10.090 11.733 10.135 11.680 /
\plot 10.135 11.680 10.179 11.629 /
\plot 10.179 11.629 10.224 11.578 /
\plot 10.224 11.578 10.268 11.532 /
\plot 10.268 11.532 10.312 11.487 /
\plot 10.312 11.487 10.365 11.441 /
\plot 10.365 11.441 10.416 11.394 /
\plot 10.416 11.394 10.469 11.352 /
\plot 10.469 11.352 10.524 11.309 /
\plot 10.524 11.309 10.579 11.271 /
\plot 10.579 11.271 10.634 11.233 /
\plot 10.634 11.233 10.693 11.199 /
\plot 10.693 11.199 10.751 11.165 /
\plot 10.751 11.165 10.810 11.134 /
\plot 10.810 11.134 10.871 11.106 /
\plot 10.871 11.106 10.933 11.079 /
\plot 10.933 11.079 10.994 11.053 /
\plot 10.994 11.053 11.055 11.032 /
\plot 11.055 11.032 11.117 11.011 /
\plot 11.117 11.011 11.180 10.992 /
\plot 11.180 10.992 11.242 10.977 /
\plot 11.242 10.977 11.303 10.964 /
\plot 11.303 10.964 11.362 10.952 /
\plot 11.362 10.952 11.424 10.943 /
\plot 11.424 10.943 11.481 10.937 /
\plot 11.481 10.937 11.538 10.930 /
\plot 11.538 10.930 11.595 10.928 /
\putrule from 11.595 10.928 to 11.648 10.928
\putrule from 11.648 10.928 to 11.701 10.928
\plot 11.701 10.928 11.752 10.930 /
\plot 11.752 10.930 11.803 10.935 /
\plot 11.803 10.935 11.849 10.941 /
\plot 11.849 10.941 11.898 10.947 /
\plot 11.898 10.947 11.942 10.956 /
\plot 11.942 10.956 11.987 10.964 /
\plot 11.987 10.964 12.040 10.979 /
\plot 12.040 10.979 12.093 10.996 /
\plot 12.093 10.996 12.145 11.013 /
\plot 12.145 11.013 12.196 11.036 /
\plot 12.196 11.036 12.247 11.060 /
\plot 12.247 11.060 12.300 11.087 /
\plot 12.300 11.087 12.355 11.119 /
\plot 12.355 11.119 12.410 11.155 /
\plot 12.410 11.155 12.467 11.195 /
\plot 12.467 11.195 12.529 11.240 /
\plot 12.529 11.240 12.590 11.286 /
\plot 12.590 11.286 12.653 11.337 /
\plot 12.653 11.337 12.719 11.390 /
\plot 12.719 11.390 12.783 11.445 /
\plot 12.783 11.445 12.844 11.498 /
\plot 12.844 11.498 12.903 11.549 /
\plot 12.903 11.549 12.954 11.595 /
\plot 12.954 11.595 13.001 11.635 /
\plot 13.001 11.635 13.037 11.669 /
\plot 13.037 11.669 13.094 11.722 /
%
%
\plot 12.950 11.503 13.094 11.722 12.864 11.596 /
}%
%
%
\linethickness= 0.500pt
\setplotsymbol ({\thinlinefont .})
{\color[rgb]{0,0,0}%
%
\plot 29.466 11.086 29.483 11.347 29.345 11.125 /
\plot 29.483 11.347 29.462 11.282 /
\plot 29.462 11.282 29.449 11.242 /
\plot 29.449 11.242 29.432 11.193 /
\plot 29.432 11.193 29.411 11.136 /
\plot 29.411 11.136 29.388 11.072 /
\plot 29.388 11.072 29.362 11.005 /
\plot 29.362 11.005 29.335 10.935 /
\plot 29.335 10.935 29.307 10.867 /
\plot 29.307 10.867 29.278 10.799 /
\plot 29.278 10.799 29.248 10.736 /
\plot 29.248 10.736 29.218 10.674 /
\plot 29.218 10.674 29.187 10.617 /
\plot 29.187 10.617 29.155 10.564 /
\plot 29.155 10.564 29.123 10.513 /
\plot 29.123 10.513 29.089 10.465 /
\plot 29.089 10.465 29.053 10.418 /
\plot 29.053 10.418 29.015 10.376 /
\plot 29.015 10.376 28.975 10.334 /
\plot 28.975 10.334 28.933 10.291 /
\plot 28.933 10.291 28.888 10.249 /
\plot 28.888 10.249 28.850 10.217 /
\plot 28.850 10.217 28.812 10.188 /
\plot 28.812 10.188 28.772 10.156 /
\plot 28.772 10.156 28.730 10.124 /
\plot 28.730 10.124 28.685 10.092 /
\plot 28.685 10.092 28.639 10.061 /
\plot 28.639 10.061 28.588 10.027 /
\plot 28.588 10.027 28.535  9.995 /
\plot 28.535  9.995 28.480  9.961 /
\plot 28.480  9.961 28.423  9.929 /
\plot 28.423  9.929 28.361  9.895 /
\plot 28.361  9.895 28.298  9.862 /
\plot 28.298  9.862 28.230  9.828 /
\plot 28.230  9.828 28.160  9.796 /
\plot 28.160  9.796 28.090  9.762 /
\plot 28.090  9.762 28.016  9.730 /
\plot 28.016  9.730 27.940  9.699 /
\plot 27.940  9.699 27.862  9.667 /
\plot 27.862  9.667 27.781  9.635 /
\plot 27.781  9.635 27.701  9.605 /
\plot 27.701  9.605 27.618  9.576 /
\plot 27.618  9.576 27.534  9.546 /
\plot 27.534  9.546 27.449  9.519 /
\plot 27.449  9.519 27.362  9.491 /
\plot 27.362  9.491 27.275  9.466 /
\plot 27.275  9.466 27.189  9.440 /
\plot 27.189  9.440 27.100  9.417 /
\plot 27.100  9.417 27.011  9.394 /
\plot 27.011  9.394 26.920  9.373 /
\plot 26.920  9.373 26.829  9.351 /
\plot 26.829  9.351 26.736  9.332 /
\plot 26.736  9.332 26.642  9.313 /
\plot 26.642  9.313 26.566  9.299 /
\plot 26.566  9.299 26.488  9.284 /
\plot 26.488  9.284 26.408  9.269 /
\plot 26.408  9.269 26.327  9.256 /
\plot 26.327  9.256 26.245  9.243 /
\plot 26.245  9.243 26.160  9.231 /
\plot 26.160  9.231 26.073  9.218 /
\plot 26.073  9.218 25.984  9.205 /
\plot 25.984  9.205 25.895  9.195 /
\plot 25.895  9.195 25.802  9.184 /
\plot 25.802  9.184 25.709  9.176 /
\plot 25.709  9.176 25.614  9.167 /
\plot 25.614  9.167 25.516  9.159 /
\plot 25.516  9.159 25.419  9.150 /
\plot 25.419  9.150 25.322  9.144 /
\plot 25.322  9.144 25.220  9.140 /
\plot 25.220  9.140 25.121  9.136 /
\plot 25.121  9.136 25.019  9.131 /
\plot 25.019  9.131 24.920  9.129 /
\plot 24.920  9.129 24.818  9.127 /
\putrule from 24.818  9.127 to 24.716  9.127
\putrule from 24.716  9.127 to 24.617  9.127
\plot 24.617  9.127 24.517  9.129 /
\plot 24.517  9.129 24.418  9.133 /
\plot 24.418  9.133 24.320  9.138 /
\plot 24.320  9.138 24.225  9.142 /
\plot 24.225  9.142 24.130  9.150 /
\plot 24.130  9.150 24.037  9.157 /
\plot 24.037  9.157 23.948  9.167 /
\plot 23.948  9.167 23.859  9.176 /
\plot 23.859  9.176 23.772  9.188 /
\plot 23.772  9.188 23.688  9.201 /
\plot 23.688  9.201 23.605  9.214 /
\plot 23.605  9.214 23.525  9.229 /
\plot 23.525  9.229 23.446  9.246 /
\plot 23.446  9.246 23.372  9.263 /
\plot 23.372  9.263 23.298  9.282 /
\plot 23.298  9.282 23.226  9.303 /
\plot 23.226  9.303 23.158  9.324 /
\plot 23.158  9.324 23.091  9.345 /
\plot 23.091  9.345 23.017  9.373 /
\plot 23.017  9.373 22.945  9.400 /
\plot 22.945  9.400 22.877  9.432 /
\plot 22.877  9.432 22.807  9.464 /
\plot 22.807  9.464 22.741  9.500 /
\plot 22.741  9.500 22.674  9.536 /
\plot 22.674  9.536 22.610  9.574 /
\plot 22.610  9.574 22.545  9.614 /
\plot 22.545  9.614 22.481  9.658 /
\plot 22.481  9.658 22.420  9.703 /
\plot 22.420  9.703 22.356  9.749 /
\plot 22.356  9.749 22.297  9.798 /
\plot 22.297  9.798 22.236  9.851 /
\plot 22.236  9.851 22.176  9.904 /
\plot 22.176  9.904 22.117  9.957 /
\plot 22.117  9.957 22.060 10.014 /
\plot 22.060 10.014 22.003 10.071 /
\plot 22.003 10.071 21.948 10.130 /
\plot 21.948 10.130 21.893 10.192 /
\plot 21.893 10.192 21.840 10.253 /
\plot 21.840 10.253 21.787 10.315 /
\plot 21.787 10.315 21.734 10.376 /
\plot 21.734 10.376 21.683 10.439 /
\plot 21.683 10.439 21.634 10.503 /
\plot 21.634 10.503 21.586 10.566 /
\plot 21.586 10.566 21.539 10.630 /
\plot 21.539 10.630 21.493 10.693 /
\plot 21.493 10.693 21.446 10.757 /
\plot 21.446 10.757 21.402 10.820 /
\plot 21.402 10.820 21.357 10.884 /
\plot 21.357 10.884 21.315 10.945 /
\plot 21.315 10.945 21.273 11.009 /
\plot 21.273 11.009 21.228 11.070 /
\plot 21.228 11.070 21.186 11.132 /
\plot 21.186 11.132 21.143 11.195 /
\plot 21.143 11.195 21.101 11.256 /
\plot 21.101 11.256 21.059 11.318 /
\plot 21.059 11.318 21.016 11.379 /
\plot 21.016 11.379 20.972 11.443 /
\plot 20.972 11.443 20.925 11.506 /
\plot 20.925 11.506 20.881 11.570 /
\plot 20.881 11.570 20.834 11.633 /
\plot 20.834 11.633 20.786 11.699 /
\plot 20.786 11.699 20.737 11.764 /
\plot 20.737 11.764 20.688 11.832 /
\plot 20.688 11.832 20.637 11.898 /
\plot 20.637 11.898 20.585 11.966 /
\plot 20.585 11.966 20.534 12.033 /
\plot 20.534 12.033 20.479 12.101 /
\plot 20.479 12.101 20.426 12.169 /
\plot 20.426 12.169 20.371 12.234 /
\plot 20.371 12.234 20.316 12.302 /
\plot 20.316 12.302 20.261 12.368 /
\plot 20.261 12.368 20.204 12.433 /
\plot 20.204 12.433 20.149 12.497 /
\plot 20.149 12.497 20.091 12.560 /
\plot 20.091 12.560 20.036 12.622 /
\plot 20.036 12.622 19.979 12.683 /
\plot 19.979 12.683 19.924 12.740 /
\plot 19.924 12.740 19.869 12.797 /
\plot 19.869 12.797 19.814 12.852 /
\plot 19.814 12.852 19.759 12.903 /
\plot 19.759 12.903 19.706 12.954 /
\plot 19.706 12.954 19.653 13.003 /
\plot 19.653 13.003 19.602 13.047 /
\plot 19.602 13.047 19.550 13.092 /
\plot 19.550 13.092 19.499 13.132 /
\plot 19.499 13.132 19.450 13.172 /
\plot 19.450 13.172 19.399 13.208 /
\plot 19.399 13.208 19.351 13.244 /
\plot 19.351 13.244 19.302 13.276 /
\plot 19.302 13.276 19.255 13.305 /
\plot 19.255 13.305 19.192 13.343 /
\plot 19.192 13.343 19.130 13.377 /
\plot 19.130 13.377 19.067 13.407 /
\plot 19.067 13.407 19.006 13.434 /
\plot 19.006 13.434 18.942 13.462 /
\plot 18.942 13.462 18.876 13.483 /
\plot 18.876 13.483 18.811 13.504 /
\plot 18.811 13.504 18.745 13.523 /
\plot 18.745 13.523 18.680 13.538 /
\plot 18.680 13.538 18.612 13.551 /
\plot 18.612 13.551 18.546 13.561 /
\plot 18.546 13.561 18.479 13.570 /
\plot 18.479 13.570 18.411 13.576 /
\plot 18.411 13.576 18.345 13.581 /
\plot 18.345 13.581 18.280 13.583 /
\putrule from 18.280 13.583 to 18.214 13.583
\plot 18.214 13.583 18.150 13.581 /
\plot 18.150 13.581 18.089 13.576 /
\plot 18.089 13.576 18.028 13.570 /
\plot 18.028 13.570 17.968 13.564 /
\plot 17.968 13.564 17.911 13.555 /
\plot 17.911 13.555 17.856 13.545 /
\plot 17.856 13.545 17.803 13.534 /
\plot 17.803 13.534 17.752 13.523 /
\plot 17.752 13.523 17.702 13.511 /
\plot 17.702 13.511 17.653 13.498 /
\plot 17.653 13.498 17.606 13.485 /
\plot 17.606 13.485 17.562 13.470 /
\plot 17.562 13.470 17.509 13.454 /
\plot 17.509 13.454 17.458 13.437 /
\plot 17.458 13.437 17.410 13.418 /
\plot 17.410 13.418 17.359 13.399 /
\plot 17.359 13.399 17.310 13.377 /
\plot 17.310 13.377 17.261 13.354 /
\plot 17.261 13.354 17.213 13.331 /
\plot 17.213 13.331 17.164 13.305 /
\plot 17.164 13.305 17.117 13.280 /
\plot 17.117 13.280 17.069 13.252 /
\plot 17.069 13.252 17.022 13.223 /
\plot 17.022 13.223 16.978 13.193 /
\plot 16.978 13.193 16.933 13.161 /
\plot 16.933 13.161 16.889 13.130 /
\plot 16.889 13.130 16.847 13.096 /
\plot 16.847 13.096 16.806 13.062 /
\plot 16.806 13.062 16.768 13.028 /
\plot 16.768 13.028 16.730 12.992 /
\plot 16.730 12.992 16.694 12.958 /
\plot 16.694 12.958 16.660 12.922 /
\plot 16.660 12.922 16.626 12.884 /
\plot 16.626 12.884 16.595 12.848 /
\plot 16.595 12.848 16.563 12.810 /
\plot 16.563 12.810 16.533 12.770 /
\plot 16.533 12.770 16.506 12.734 /
\plot 16.506 12.734 16.478 12.696 /
\plot 16.478 12.696 16.451 12.656 /
\plot 16.451 12.656 16.423 12.613 /
\plot 16.423 12.613 16.394 12.567 /
\plot 16.394 12.567 16.366 12.518 /
\plot 16.366 12.518 16.336 12.467 /
\plot 16.336 12.467 16.305 12.412 /
\plot 16.305 12.412 16.273 12.351 /
\plot 16.273 12.351 16.237 12.287 /
\plot 16.237 12.287 16.203 12.220 /
\plot 16.203 12.220 16.165 12.145 /
\plot 16.165 12.145 16.127 12.069 /
\plot 16.127 12.069 16.089 11.991 /
\plot 16.089 11.991 16.049 11.913 /
\plot 16.049 11.913 16.010 11.834 /
\plot 16.010 11.834 15.974 11.758 /
\plot 15.974 11.758 15.941 11.688 /
\plot 15.941 11.688 15.911 11.627 /
\plot 15.911 11.627 15.886 11.574 /
\plot 15.886 11.574 15.867 11.534 /
\plot 15.867 11.534 15.835 11.466 /
%
%
\plot 15.885 11.723 15.835 11.466 16.000 11.669 /
}%
%
%
\linethickness= 0.500pt
\setplotsymbol ({\thinlinefont .})
{\color[rgb]{0,0,0}\plot 22.066 16.986 26.035 12.033 /
%
%
\plot 25.827 12.192 26.035 12.033 25.926 12.271 /
}%
%
%
\linethickness= 0.500pt
\setplotsymbol ({\thinlinefont .})
{\color[rgb]{0,0,0}\plot 20.225 17.050 15.589 12.414 /
%
%
\plot 15.724 12.639 15.589 12.414 15.814 12.549 /
}%
%
%
\linethickness=1pt
\setplotsymbol ({\makebox(0,0)[l]{\tencirc\symbol{'160}}})
{\color[rgb]{0,0,0}\plot 14.942 11.764 20.064 11.578 /
%
%
\plot 19.552 11.470 20.064 11.578 19.561 11.724 /
}%
%
%
\linethickness=1pt
\setplotsymbol ({\makebox(0,0)[l]{\tencirc\symbol{'160}}})
{\color[rgb]{0,0,0}\putrule from 14.986 11.843 to 14.986 16.351
%
%
\plot 15.071 16.012 14.986 16.351 14.901 16.012 /
}%
%
%
\linethickness=1pt
\setplotsymbol ({\makebox(0,0)[l]{\tencirc\symbol{'160}}})
{\color[rgb]{0,0,0}\plot 14.912 11.748 10.986  9.779 /
%
%
\plot 11.581 10.291 10.986  9.779 11.752  9.950 /
}%
%
%
\linethickness=1pt
\setplotsymbol ({\makebox(0,0)[l]{\tencirc\symbol{'160}}})
{\color[rgb]{0,0,0}\plot 28.689 11.669 33.812 11.483 /
%
%
\plot 33.300 11.374 33.812 11.483 33.309 11.628 /
}%
%
%
\linethickness=1pt
\setplotsymbol ({\makebox(0,0)[l]{\tencirc\symbol{'160}}})
{\color[rgb]{0,0,0}\putrule from 28.734 11.748 to 28.734 16.256
%
%
\plot 28.819 15.917 28.734 16.256 28.649 15.917 /
}%
%
%
\linethickness=1pt
\setplotsymbol ({\makebox(0,0)[l]{\tencirc\symbol{'160}}})
{\color[rgb]{0,0,0}\plot 28.660 11.652 24.733  9.684 /
%
%
\plot 25.329 10.196 24.733  9.684 25.500  9.855 /
}%
%
%
\put{\SetFigFont{6}{7.2}{\rmdefault}{\mddefault}{\updefault}{\color[rgb]{0,0,0}Remove}%
} [lB] at 20.352 16.923
%
%
\put{\SetFigFont{6}{7.2}{\rmdefault}{\mddefault}{\updefault}{\color[rgb]{0,0,0}Identify}%
} [lB] at 15.081 19.336
%
%
\put{\SetFigFont{6}{7.2}{\rmdefault}{\mddefault}{\updefault}{\color[rgb]{0,0,0}Identify}%
} [lB] at 24.066  8.636
\linethickness=0pt
\putrectangle corners at  9.006 19.704 and 33.881  8.479
\endpicture}
\caption{An illustration of the Zipoy topology of the z$G$KN spacetime at a constant-time snapshot, using two copies of $\Rset^3$. 
The ring singularity, which appears positively charged in one copy of $\Rset^3$, and negatively charged in the other, is not part 
of this Riemannian constant-time manifold.}
\end{figure}

  In terms of coordinate charts, the manifold can be described thus.
  Let $a$ be fixed.
  Let $(t,\varrho,z,\varphi)$ denote cylindrical coordinates on $\Rset^{1,3}$ and let $(t,r,\theta,\varphi)$ be 
BL coordinates on the z$G$KN manifold $\cM$, with the same $(t,\varphi)$ as in cylindrical 
coordinates, and with $(r,\theta)$ elliptical coordinates which are related to $(\varrho,z)$ by 
$$
\varrho = \sqrt{r^2+a^2}\sin\theta,\qquad z = r \cos \theta.
$$
  The identification in $\cM$ is to be done in such a way that in each fixed timelike half-plane $t=t_0,\varphi=\varphi_0$, 
one of the sheets is described by $r\geq 0$ and the other by $r\leq 0$ (the linking at $t=$constant is through a double disk 
at $r=0$; cf. Fig.~4) with all the BL coordinates having smooth transitions from one sheet to the other.
 Note that the coordinate map $(t,r,\theta,\varphi) \mapsto (t,\varrho,z,\varphi)$ is $2:1$.

 We can also view $\cM$ as a bundle over the base manifold $\Rset^{1,3}\setminus\partial\cZ$, with the projection map 
$\Pi: \cM \to \Rset^{1,3}\setminus\partial\cZ$ being $\Pi(t,r,\theta,\varphi) = (t,\varrho,z,\varphi)$.
  The fiber over a point in the base is thus a discrete set of two points; note that $\Pi$ degenerates at
the boundary $\partial\cZ$ of $\cZ$,
$$
\partial\cZ_t = \{(t,r,\theta,\varphi)\ |\ t\in\Rset,\ r=0, \ \theta = \pi/2, 0\leq \varphi\ \leq 2\pi \},
$$
where only one point lies above the extended base manifold $\Rset^{1,3}$ (see Fig.~5).
 Such a bundle is also known as a ``branched cover'' of  $\Rset^{1,3}$, with two copies of $\Rset^{1,3}$ (``branches'') which
are ``branching off each other over a ring.''

\begin{figure}[ht]
\includegraphics[scale=0.3]{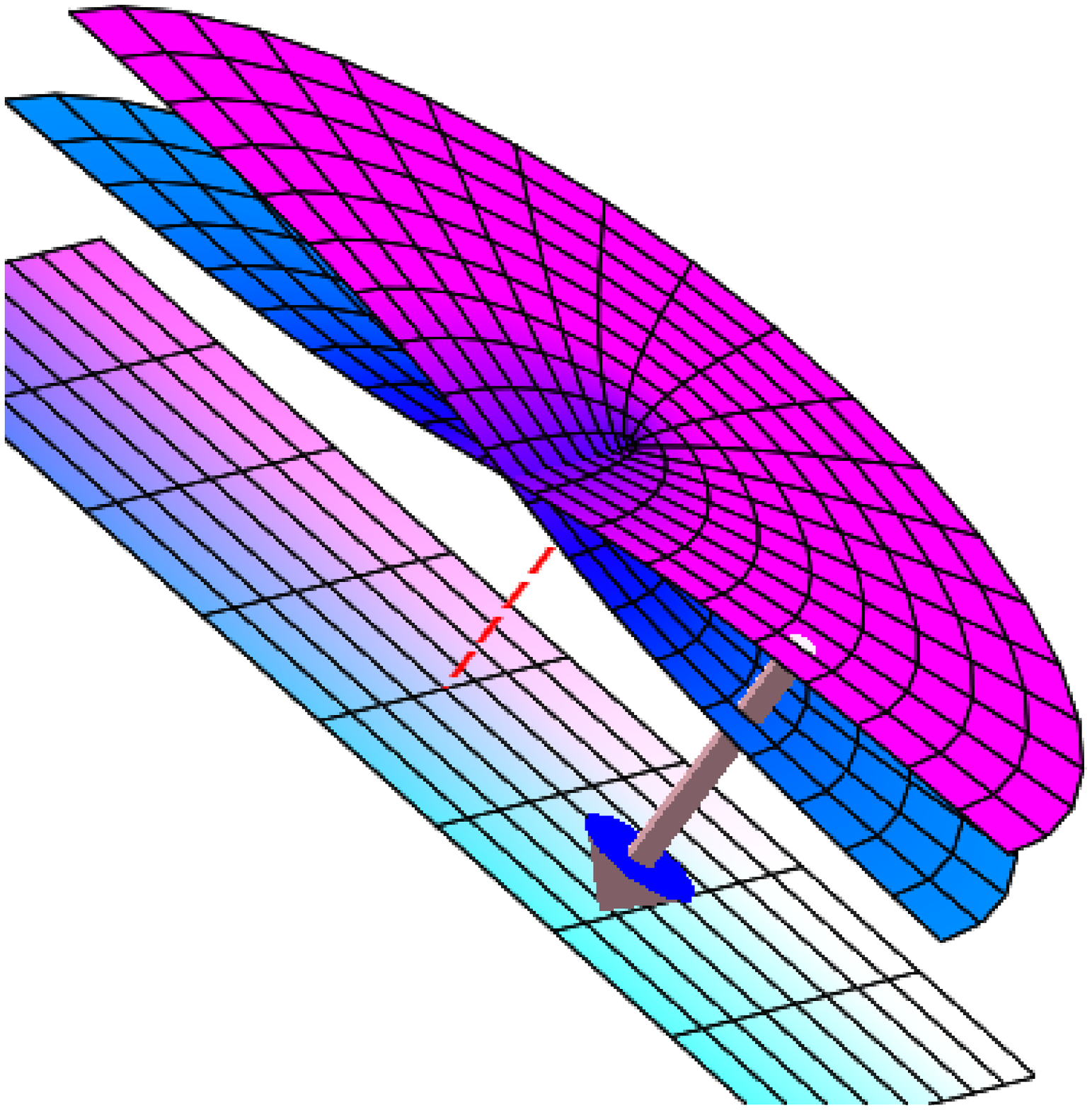}
\centering
\caption{The constant-$t$--constant-$\varphi$ section of the z$G$KN spacetime $\cM$ (with its $(r,\theta)$ coordinate grid)
as a branched cover over the pertinent constant-$t$--constant-$\varphi$ section of $\Rset^{1,3}\setminus\partial\cZ$ (with its
 $(\varrho,z)$ coordinate grid). 
The endpoints of the dashed line mark the locations of the ring singularity in $\cM$ (at $(0,\pi/2)$)
and its base point in $\Rset^{1,3}\setminus\partial\cZ$ (at $(1,0)$); neither are part of the pertinent Lorentz manifolds. 
The arrow indicates the projection $\Pi: \cM \to \Rset^{1,3}\setminus\partial\cZ$.}
\end{figure}

 The pullback of the Minkowski metric $\boldsymbol{\eta}$ under $\Pi$ endows $\cM$ with a flat Lorentzian metric 
$\bg = \Pi^*\boldsymbol{\eta}$, which solves the Einstein vacuum equations, and which in BL coordinates has the line element
\begin{equation}\label{def:metricBL}
ds_\bg^2 = dt^2 - (r^2+a^2) \sin^2\theta d\varphi^2 - \frac{r^2+a^2 \cos^2\theta}{r^2+a^2} (dr^2 + (r^2+a^2) d\theta^2).
\end{equation}

 Incidentally, the spatial part of the BL coordinate system is one representative of so-called \emph{oblate spheroidal} 
coordinates.
 All such coordinate systems differ from each other only in the labeling of the level surfaces; any constant $\varphi$ 
section of these consists of confocal ellipses and hyperbolas.
 Thus, an alternative choice of oblate spheroidal coordinates are the {\em ring-centered} coordinates, defined by 
$$
\xi = \frac{r}{a},\qquad \eta = \cos\theta,
$$ 
and with $(t,\varphi)$ as before.
   In these coordinates the line element (\ref{def:metricBL}) takes the more symmetric form
$$
ds_\bg^2 = dt^2 - a^2(1+\xi^2)(1-\eta^2)d\varphi^2 - a^2(\xi^2+\eta^2) \left(\frac{d\xi^2}{1+\xi^2} + \frac{d\eta^2}{1-\eta^2}\right).
$$
  The metric $\bg$ is singular on the cylindrical surface $\partial\cZ$ whose cross section at any $t$ is the {\em ring} 
$\{r=0, \theta=\pi/2,\varphi\in[0,2\pi)\}$.
 The points on the ring are {\em conical singularities} for the metric, meaning that the limit, as the radius goes to zero, of the 
circumference of a small circle that is centered at a point of the ring and is lying in a meridional plane $\varphi=$const., is 
different from $2\pi$ --- in the case of z$G$KN it is $4\pi$.
  See \cite{zGKN} for details.

 The spacetime $(\cM,\bg)$ is the limiting member, in the limit $G\to 0$, of the {\em Kerr--Newman} family of spacetimes.
 The spacetime $(\cM,\bg)$ introduced above is also the zero-$G$ limit of another family, namely the {\em Kerr} family of 
spacetimes, which 
outside some ergosphere horizon are stationary, axisymmetric solutions to Einstein's vacuum equations.
 The Kerr family is simply the limit $\textsc{q}\to0$ of the KN family (in BL coordinates), and as long as only the spacetime 
structure itself is of interest, we may call $(\cM,\bg)$ the (maximal-analytically extended) z$G$K spacetime.
 This z$G$K spacetime $(\cM,\bg)$ is also the 
zero-$G$ limit of another family, namely the oblate spheroidal {\em Zipoy} family of spacetimes, which are maximal-analytically 
extended static, axisymmetric solutions to Einstein's vacuum equations and not otherwise related to the KN family except for 
having the same zero-$G$ limit (so ``z$G$KN = z$G$K = z$G$Z'').
 In fact, the spacetime $(\cM,\bg)$ introduced above is static, in the sense that it features a timelike Killing field which 
is everywhere hypersurface-orthogonal.
 This shows that the stationary Kerr and Kerr--Newman spacetimes become static when $G\to 0$.

\subsubsection{The z$G$KN electromagnetic fields and some of their generalizations}\label{sec:subsubFIELDSzGKNgen}

The z$G$KN spacetime $(\cM,\bg,\bF)$ is an electromagnetic spacetime, i.e. it comes already equipped with
the  electromagnetic field $\bF_{\textsc{KN}} =d\bAKN$ of the Kerr--Newman family,\footnote{Interestingly,
 the KN fields are independent of $G$ and therefore survive intact in the zero-$G$ limit.} 
with
\beq\label{def:AKN}
\bAKN = - \frac{r}{r^2+a^2\cos^2\theta} (\textsc{q}dt - \textsc{q}a \sin^2\theta\, d\varphi).
\eeq
  The field $\bF$ is singular on the same ring $\{r=0, \theta=\pi/2,\varphi\in[0,2\pi)\}$ as the metric, 
while for $r$ very large it exhibits an electric monopole of strength $\textsc{q}$ and a magnetic dipole moment of strength $\textsc{q}a$;
for $r$ very large negative it exhibits an electric monopole of strength $-\textsc{q}$ and a magnetic dipole moment of strength 
$-\textsc{q}a$.
 As mentioned in the introduction, this static electromagnetic field was discovered by Appell \cite{Appell} in 1887, while
Sommerfeld \cite{Som97} realized that it lives on z$G$KN.

\begin{rem}\emph{The Kerr--Newman spacetime is famously known to have a gyromagnetic ratio $\textsc{q/m} = g \textsc{q/2m}$ corresponding 
to a $g$-factor of $g_{\mbox{\tiny\textrm{KN}}}=2$, the terminology being borrowed from quantum mechanics and motivated by the facts that 
the Kerr--Newman spacetime is associated with an ADM spin angular momentum $a\textsc{m}$, a charge $\textsc{q}$, and a
magnetic moment $\textsc{q}$ (as seen asymptotically ``from infinity''); see \cite{Car68,StraumannBOOK,New02}.
 Since the z$\,G$KN spacetime (or rather any of its $\pdt$-orthogonal space slices) is static and not ``gyrating,'' 
it becomes problematic to speak of a \emph{gyromagnetic ratio} for z$\,G$KN;
also, since no $\textsc{m}$ features in the z$\,G$KN (i.e. z$\,G$K) metric, one would need to introduce new notions of ``mass and
angular momentum of z$\,G$KN.''
 One could try to argue that not the spacetime but the ring singularity is gyrating, with a spin angular momentum equal to $a\textsc{m}$, 
and with $\textsc{m}$ the inert mass of the ring singularity, yet
this has to be taken with a grain of salt, for the electromagnetic field energy of z$\,G$KN is infinite, so that according to 
Einstein's mass-energy equivalence also $\textsc{m}$ ought to be infinite.
 For a resolution of this mass-energy puzzle in a general relativistic spacetime with a single point charge using the nonlinear 
Einstein--Maxwell--Born--Infeld (and other nonlinear) equations, see \cite{Tah11}.}
\end{rem}

 The Kerr--Newman electric field $\bE_{\textsc{KN}}$ and the Kerr--Newman magnetic field $\bB_{\textsc{KN}}$ are gradients,
$$
 \bE_{\textsc{KN}} = d \phi_{\textsc{KN}},\qquad \bB_{\textsc{KN}} = d  \psi_{\textsc{KN}},
$$
where the scalar potentials $\phi_{\textsc{KN}}$ and $\psi_{\textsc{KN}}$ have remarkably simple, and symmetric, expressions 
in oblate 
spheroidal $(\xi,\eta,\varphi)$ coordinates on $\cN$, the $t=0$ slice of z$G$K, namely\footnote{Even though one obtains a 
  remarkable complex structure with these
 formulas (cf. \cite{zGKN}), the representation of $\bB_{\textsc{KN}}$ as a gradient of a scalar potential is problematic at the
 ring singularity because of the condition that $\bB$ be divergence-free.}
$$
\phi_{\textsc{KN}} = \frac{\textsc{q}}{a} \frac{\xi}{\xi^2 + \eta^2},\qquad  
\psi_{\textsc{KN}} =  \frac{\textsc{q}}{a} \frac{\eta}{\xi^2 + \eta^2}.
$$
 Note in particular that these two are anti-symmetric with respect to the ``toggle'' map that swaps the two sheets, viz.
$\varsigma: (\xi,\eta) \mapsto (-\xi,-\eta)$.
 It is therefore evident that in the sheet where $\xi>0$ the asymptotic behavior of $\phi_{\textsc{KN}}$ is
$\frac{\textsc{q}}{|{\bq}-\bq_{\mathrm{rg}}|}$ 
while in the other sheet, where $\xi<0$, the asymptotic behavior becomes 
$\frac{-\textsc{q}\ }{|{\bq}-\bq_{\mathrm{rg}}|}$.
 Thus by Gauss's law the charge carried by the ring is $\textsc{q}$ in the first sheet and $-\textsc{q}$ in the second.

  Now we note that by the decoupling of spacetime structure from its matter/field content in the zero-$G$ limit, 
by the linearity of Maxwell's vacuum equations, and by the decoupling of their electric
and magnetic subsystems, we can generalize the electromagnetic potential field (\ref{def:AKNanomalous}) supported on z$G$K by adding 
any almost everywhere (on z$G$K) harmonic electric or magnetic potential field solving the Maxwell equations on z$G$K, 
see \cite{Evans51}.

  In particular, the KN electromagnetic field can readily be generalized to exhibit a \emph{KN-anomalous magnetic moment},
\beq\label{def:AKNanomalous}
\bAKNanom = - \frac{r}{r^2+a^2\cos^2\theta} (\textsc{q}dt - \textsc{i}\pi a^2 \sin^2\theta\, d\varphi),
\eeq
with which one can decorate the z$G$K spacetime of the same ring radius $|a|$; here,
$\textsc{i}$ is the electrical current which produces a magnetic dipole moment $\textsc{i}\pi a^2$ when viewed from
spacelike infinity in the $r>0$ sheet.
 Our terminology for the case $\textsc{i}\pi a\neq \textsc{q}$ is in analogy to the physicists' ``anomalous magnetic moment 
of the electron;'' so, the \emph{KN-anomalous magnetic moment} is $\textsc{i}\pi a^2-\textsc{q}a$.

 Incidentally, notice that ${\bAKNanom}$ in \refeq{def:AKNanomalous} satisfies the Coulomb gauge.

 Furthermore, we can add the electric potential of a point charge source.
 The electrostatic field $\bE_{\textrm{pt}}$ generated by a positive point charge of magnitude $\textsc{q}'$ must be curl-free, 
and thus is a gradient:
$$
 \bE_{\textrm{pt}} = d \phi_{\textrm{pt}},
$$
where $\phi_{\textrm{pt}}$ solves Poisson's equation on $\cN$ with a point-source located at ${\bq}_{\textrm{pt}},$
$$
- \De_{\cN}^{} \phi_{\textrm{pt}} = 4 \pi \textsc{q}' \de_{{\bq}_{\textrm{pt}}}.
$$
 Now, $\cN$ is a two-sheeted Riemann space branched over the ring, and the fundamental solution of the Laplacian on such a manifold 
has been known for a long time \cite{Hob00,Neu51,DavRei71,EfiVor74}.
  It is best described in terms of {\em peripolar} \cite{Neu1880} (sometimes called toroidal) coordinates $(\zeta, \chi, \varphi)$.
  Their definition is as follows:  Let ${\bq}$ be a point in $\cN$, and set $\br = {\bq} - \bq_{\mathrm{rg}}$ (recall 
that $\bq_{\mathrm{rg}}$ is the center of the electron ring).
  Consider the plane spanned by $\br$ and $\bn_{\mathrm{rg}}$, the normal to the disc spanned by the ring.
  It intersects the ring at two antipodal points $\bq_1$ and $\bq_2$, the smaller index always reserved for the point 
closer to ${\bq}$.
 Let $d_1$ and $d_2$ denote the distances of ${\bq}$ from $\bq_1$ and $\bq_2$ respectively.
 Then the peripolar coordinate $\zeta := \ln(d_2/d_1) \geq 0$, and the coordinate $\chi$ is simply the angle between 
vectors ${\bq}-\bq_2$ and ${\bq}-\bq_1$.
  Note that $\chi$ thus defined will be double-valued on ordinary space, since as ${\bq}$ is moved through the
 ring the value of $\chi$ will jump from $-\pi$, its value on the top side of the  disc $\cD$, to $\pi$, its value 
on the bottom side of the disc.
 The angle  $\chi$ is an example of a multi-valued harmonic function in $\RR^3$, first studied by Sommerfeld \cite{Som97},
who is credited with introducing the concept of a {\em branched Riemann space}, i.e. a three-dimensional analog of a Riemann 
surface, on which multi-valued harmonic functions such as the peripolar $\chi$ become single-valued.
  Other examples of such {\em branched potentials} (the term used by Sommerfeld) include the oblate spheroidal coordinate 
functions $\xi$ and $\eta$.
  Just as in the case of oblate spheroidal coordinates, the system of  coordinates $(t,\zeta,\chi,\varphi)$, with $\varphi$ being 
the same azimuthal angle introduced before, form a single chart that covers the maximal extension of z$G$KN,  with 
$-\pi<\chi<\pi$ in one sheet and $\pi<\chi<3\pi$ in the other sheet of that space.
  In terms of peripolar coordinates, the electrostatic potential due to the point charge of magnitude $\textsc{q}'$ is  \cite{Hob00,Neu51,DavRei71,EfiVor74}
\beq\label{phipt}
\phi_{\textrm{pt}} = \frac{\textsc{q}'}{R} \left( \half + \frac{1}{\pi}
\sin^{-1}\frac{\cos\frac{ \chi- \chi_{\textrm{pt}}}{2}}{\cosh\frac{\vartheta}{2}} \right),
\eeq
where 
\begin{eqnarray}
R &:=& |\bq - \bq_{\textrm{pt}}| = \frac{a\sqrt{2}\sqrt{\cosh\vartheta-\cos( \chi- \chi_{\textrm{pt}})}}{\sqrt{\cosh\zeta-\cos \chi}\sqrt{\cosh\zeta_{\textrm{pt}} -\cos \chi_{\textrm{pt}}}},\\
\cosh\vartheta & := & \cosh\zeta\cosh\zeta_{\textrm{pt}} -\sinh\zeta\sinh\zeta_{\textrm{pt}}\cos(\varphi-\varphi_{\textrm{pt}}),
\end{eqnarray}
and where $\bq_{\textrm{pt}} = (\zeta_{\textrm{pt}},\chi_{\textrm{pt}},\varphi_{\textrm{pt}})$ is the position of the point charge in peripolar coordinates.
 Amending this electric potential field to $\bAKNanom$ yields the \emph{hydrogenic electromagnetic potential} on a z$G$K spacetime, thus
\beq\label{def:AKNhydrogen}
\bAKNhydro = \bAKNanom - \phi_{\textrm{pt}} dt;
\eeq
we will use it later to study the Born--Oppenheimer Hydrogen problem with our Dirac equation for a
z$G$KN singularity (with / without anomaly) of charge $\textsc{q}=-e$ interacting with a positive point charge of magnitude
$\textsc{q}'=e$ supported elsewhere on the z$G$K spacetime; here, $e$ is the empirical \emph{elementary charge} used by
physicists.

\section{Dirac's equation for a zero-$G$ Kerr--Newman type singularity}

       In this section we associate the single-particle Dirac wave function in a consistent manner first
with the neutral ring singularity of the topologically non-trivial z$G$K spacetime consisting of two cross-linked copies of 
Minkowski spacetime, and subsequently  --- in a compelling manner --- with the charge- and current-carrying ring singularity 
of the topologically non-trivial z$G$KN spacetime.

       We begin with some group theoretical preliminaries, discussing the spinorial representation of the Lorentz group as well as
the sheet swap map associated with the non-trivial topology of the z$G$K and a$G$KN spacetimes.
       This is based on what one does when a Dirac equation is to be formulated  for a point particle in the z$G$K or z$G$KN spacetimes.
       
       Subsequently we invoke \emph{the principle of relativity} to show that this formalism also covers the 
one-body Dirac equation for a ``free'' ring singularity, i.e. one not interacting with any other electromagnetic 
object in the manifold (the fact that the z$G$KN ring singularity carries charge and current does not yet enter the formalism); 
here we also benefit from the pioneering works of Schiller \cite{Sch62} and others (see \cite{Hol88}, and refs. therein), who 
first investigated whether non-pointlike, axisymmetric structures can be represented by Dirac bi-spinors.

        Then we generalize to the one-body Dirac equation for a zero-$G$ Kerr--Newman singularity which interacts electromagnetically with
an additional electromagnetic field that can be supported by the z$G$K manifold.
  In particular, this extra field may be generated by a point particle, which will be assumed to have such a 
large mass that it can be treated in Born-Oppenheimer approximation as infinitely massive and thereby as ``classical;'' 
more precisely, its location $\bq_{\textrm{pt}}$ (and possibly its magnetic moment $\boldsymbol{\mu}_{\textrm{pt}}$)
enter the equations as parameters, not as operators.
  This formulation generalizes readily to the situation where an anomalous magnetic moment is added to the KN magnetic moment.
  Since the ring singularity is not a point, its interaction with the point charge (or any other electromagnetic object, for that matter)
is \emph{not} given by a minimal coupling formula that multiplies the ring's charge with the Coulomb potential of the point charge 
evaluated at the ring's center,  but by a ``minimal re-coupling'' formula; our
interaction formula is a natural relativistic extension of the usual formalism employed to calculate the many-charges Coulomb
interaction from the classical field-energy integral as carried out, e.g., in \cite{JacksonBOOK}.
  We shall explicitly compute the electromagnetic interaction of the zero-$G$ Kerr--Newman singularity --- in fact, its generalization to
generate the fields (\ref{def:AKNanomalous}) --- with a point charge.

  We then show that in the limit of vanishing ring radius $|a|$ the Dirac point spectrum reproduces a positive plus a negative
Sommerfeld fine structure spectrum (with the correct labeling of the levels) --- we also explain, why in the traditional
special-relativistic calculations one only obtains half of it.

  Finally, we point out problems with perturbation theory as a tool for computing corrections to the Sommerfeld fine structure
formula in a ``small $a$'' regime, and we also comment on the perturbative approach to compute corrections to the 
z$G$KN-Dirac spectrum at finite-$a$ coming from a KN-anomalous magnetic moment.

\subsection{Group theoretical considerations}

\subsubsection{{Spinorial Representations of the Lorentz group, and topology of z$G$KN}}
Except  for the last paragraph, the material in this section is classical and can be found, e.g., in \cite{ThallerBOOK}, pp.68-77.

Let $H(2)$ denote the Hermitian matrices in $\Cset^{2\times 2}$.
  It is a real vector space of dimension four, and a basis is $\{\sigma_\mu\}_{\mu=0}^3$ where $\si_0 = I_2$ and $\si_i$ are the 
Pauli matrices:
\beq\label{eq:PAULIsigma}
\si_1 = \left(\begin{array}{cc} 0 & 1\\ 1 & 0\end{array}\right),\quad
 \si_2 = \left(\begin{array}{cc} 0 & -i\\ i & \;\, 0\end{array}\right),\quad
 \si_3 = \left(\begin{array}{cc} 1 & \;\, 0\\ 0 & -1\end{array}\right).
\eeq
  Let $\Rset^{1,3}$ denote the Minkowski spacetime, with metric represented by $(\boldsymbol{\eta}) = \diag(1,-1,-1,-1)$.
  Let the two mappings  $\si,\si':\Rset^{1,3}\to H(2)$ be defined by
 $$
\si(X) = X^\mu\si_\mu,\qquad \si'(X) = X^0 \si_0 - \sum_{i=1}^3 X^i\si_i.
$$  
Each of these mappings gives rise to a representation of the proper Lorentz group $SO_0(1,3)$ 
(the connected component of the identity in $O(1,3)$) by matrices in $SL(2,\Cset)$, in the following way:
  If $A \in SL(2,\Cset)$, let ${A}^\dagger$ denote its Hermitian adjoint.
  Now let $Y \in \Rset^{1,3}$ be such that 
$$
\si(Y) = A \si(X) A^\dagger,
$$ 
 Then $Y = L_A X$, where $L_A$ is a member of the proper Lorentz group.
 Note that this shows $SL(2,\Cset)$ to be the double cover of  the proper Lorentz group, 
because both $A=I_2$ and $A=-I_2$ give $L_A=I_4$.

 The maps $\si$ and $\si'$ are chosen in such a way that the two representations they give are inequivalent.
  This is because there is no matrix $S\in SL(2,\Cset)$ such that $S\si(X)S^{-1} = \si'(X)$.
  Note that $\si'(X) = \si (PX)$ where $P=\left(\begin{array}{cc} 1 & 0\\ 0 & -I_3\end{array}\right)$ is an element 
of the Lorentz group responsible for {\em space reflection}.
  There is no element in $SL(2,\Cset)$ that can represent $P$, because $\det P = -1$.
  A similar statement  is true about the {\em time reversal} matrix $T = -P$.

Let $\ga:\Rset^{1,3} \to \Cset^{4\times 4}$ be defined as 
$\ga(X) = \left(\begin{array}{cc} 0 & \si(X)\\ \si'(X) & 0\end{array} \right)$.
  For $A \in SL(2,\Cset)$ let $\La_A := \left(\begin{array}{cc} A & 0 \\ 0 &(A^\dagger)^{-1}\end{array} \right).$ 
Then one checks that $\La_A \ga(X) \La_A^{-1} = \ga(L_A X)$ where $L_A$ is as before.
  Thus the mapping $\ga$ gives another representation of the proper Lorentz group by the special linear group $SL(2,\Cset)$.
  Let also $\La_P := \left(\begin{array}{cc} 0 & I_2 \\ I_2 &0 \end{array}\right)$ and 
$\La_T := \left(\begin{array}{cc} 0 & -iI_2 \\ iI_2 &0\end{array} \right)$.
  It is easy to see that $\La_M \ga(X) \La_M^{-1} = \ga(MX)$ holds for $M = P$, $M=T$, and $M=PT$.
 Thus $\ga$ gives a representation of the full Lorentz group, in the sense that
$$
 O(1,3) = \{ \La_A,\La_P\La_A,\La_T\La_A,\La_{PT}\La_A \ | \ A \in SL(2,\Cset)\}.
$$
Setting $\gamma(X) = \gamma_\mu X^\mu$ defines the Dirac matrices $\{\ga_\mu\}_{\mu=0}^3$.
 We have  
$$
\ga_0 = \ga^0= \Big(\begin{array}{ll} 0 & I_2 \\ I_2 & 0 \end{array}\Big),\qquad \ga_i =-\ga^i 
=  \Big(\begin{array}{ll} \quad 0 & \sigma_i \\ -\sigma_i & 0 \end{array}\Big),\quad i=1,2,3.
$$
Also define $\al_W^k := \ga^0 \ga^k =  \Big(\begin{array}{ll} \si_k & \quad 0 \\  0 & -\si_k\end{array}\Big)$ and $\beta_W^{} := \ga^0$.
 Thus $\{\beta_W^{},\al_W^k\}$ provides another basis for the same representation of the Clifford algebra by the $\gamma$ matrices; they
can be re-expressed as 
$$
\ga^0 = \beta_W^{}, \ga^k = \beta_W^{}\al_W^k.
$$ 
  This is called the {\em spinorial}, or {\em Weyl}, representation.

  A basis for another representation of the Clifford algebra, unitarily equivalent to the above one, is given by the matrices 
$\beta_{DP}^{} := \left(\begin{array}{ll} I_2 & \;\; 0 \\ 0 & -I_2\end{array}\right)$ and 
$\al_{DP}^k :=  \left(\begin{array}{ll} 0 & \sigma_k \\ \sigma_k & 0\end{array}\right)$.
  This is called the {\em standard}, or {\em Dirac--Pauli}, representation.
  Note that one still has $\ga^k = \beta_{DP}^{}\al_{DP}^k$; however, $\ga^0$ in the Dirac--Pauli
representation is $\beta_{DP}^{}$, which is not the same as $\ga^0$ in the Weyl representation, which is the same as $\beta_W^{}$.

  The above calculation was done on the Minkowski spacetime $\Rset^{1,3}$.
  Let $\cM$ be {\em any} Lorentzian manifold.
  Then the tangent space and the cotangent space at each point on the manifold are copies of $\Rset^{1,3}$.
  Thus all of the above can be replicated on the tangent and cotangent bundles of the manifold.

  In particular, let $\cM$ be the z$G$KN or z$G$K spacetime, and let $\{E_\mu\}_{\mu=0}^3$ denote 
an orthonormal basis (with respect to $\bg$) for $T_p\cM$.
  A vector $\Xi \in T_p\cM$, $\Xi = \Xi^\mu \frac{\p}{\p x^\mu}$ has an expansion in 
this basis $\Xi = X^\mu E_\mu$ and identifying $(X^\mu)$ with a point in the Minkowski spacetime, one has $\si(X) =  X^\mu \si_\mu$.
 In this way both the tangent and the cotangent bundle of $\cM$ have a representation 
as {\em order two} spinors (i.e. objects with two spinor indices, or in other words, 
operators that act on {\em  order one} spinors, to be defined below).

 In addition, the \emph{sheet swap map} $\varsigma:\cM \to \cM$ acts by 
$\varsigma(t,\xi,\eta,\varphi) = (t,-\xi,-\eta,\varphi)$; it is a {\em bundle map}, 
i.e. $\Pi\circ \varsigma = \Pi$.
 It is an isometric {\em involution} on $\cM$: its pullback acts on the metric like
$\varsigma^* g = g$, while $\varsigma^2 = id$, and it fixes the ring: $\left.\varsigma\right|_{\cR_t} = id$.
 Its differential $d\varsigma(p) : T_p\cM \to T_{\varsigma(p)}\cM$ induces an equivalent representation because:
$$ 
\si(d\varsigma(X)) = X^0\si_0 - X^1\si_1 - X^2 \si_2 + X^3\si_3 = \si_3 \si(X) \si_3.
$$
  
\subsubsection{Bi-Spinors and Sheet Swaps}

Again, up to (\ref{spintrans}) and accompanying text, 
the material in this section is classical and can be found, e.g., in  \cite{CartanBOOK}.

In particular, the following basic definitions are due to Cartan \cite{CartanBOOK}.
\begin{defn} 
A vector $w$ in a vector space $V$ (over $\Cset$) on which a non-degenerate bi-linear form $\langle~,~\rangle$ is defined, 
is called {\em isotropic} with respect to that bi-linear form if 
$$
\langle w,w\rangle = 0.
$$
(Note that the form is assumed to be bi-linear, not {\em sesqui}-linear.) 
\end{defn}
 For example, the Minkowski metric $\boldsymbol{\eta}$ is a non-degenerate bi-linear form on $\Cset^4$.
  A vector $w\in \Cset^4$ is thus isotropic with respect to $\boldsymbol{\eta}$ if $w_0^2 - \sum_{i=1}^3 w_i^2 = 0$, 
i.e. if it is a (complexified) null vector.
  Let $w\ne 0$ be such a vector.
 It is easy to see that $W := \gamma(w)$ will be singular, i.e. $\det W = 0$.

\begin{defn} 
A vector $\Psi \in \Cset^4$ is called a {\em bi-spinor} if there exists a non-zero vector $w \in \Cset^4$ isotropic with 
respect to the Minkowski metric $\boldsymbol{\eta}$ such that 
$$ 
W\Psi = 0.
$$
\end{defn}
 The definition of  a bi-spinor makes it clear that it is defined {\em projectively}, i.e. $\Psi$ is equivalent to  $\lambda \Psi$ for 
$\la \in \Cset\setminus\{0\}$.

  Recall that for $M \in O(1,3)$ we have shown that
\beq\label{vectrans}\ga(Mw)=\Lambda_M \gamma(w)\Lambda_M^{-1},
\eeq
where $\Lambda_M$ is the matrix corresponding to $M$ in the representation of the Lorentz group given by $\ga$-matrices described above.
  On the other hand, since $M$ preserves the Minkowski bi-linear form $\boldsymbol{\eta}(Mx,My) = \boldsymbol{\eta}(x,y)$, it thus follows that 
$w$ is isotropic with respect to $\boldsymbol{\eta}$ iff $Mw$ is.
  Moreover it is easy to see that if the bi-spinor $\Psi'$ generates the nullspace of $Mw$, then we must have
\beq
\label{spintrans}
\Psi' = \Lambda_M \Psi.
\eeq 
This is the rule of transformation of (rank-one) bi-spinors.
  Comparing \refeq{spintrans} with \refeq{vectrans} we note that, unlike the spinorial representation $\ga(w)$ of a vector $w$, a bi-spinor 
is transformed by the left action alone, not the conjugate action, of the group.
  In particular, let $M$ correspond to a space rotation of any angle about any given axis.
 Then, because of \refeq{vectrans}, $\Lambda_M$ would have to correspond to a rotation of half of that angle, 
and thus by \refeq{spintrans} a bi-spinor would be rotated through {\em half} of that angle.
  A rotation through the angle $2\pi$, which leaves all vectors $w\in V$ invariant, takes $\Psi$ to $-\Psi$ instead.

  Finally, we consider the action of $\varsigma$ on a bi-spinor. 
  As explained above, it corresponds to a sheet swap, which can be alternatively described as replacing each point on one sheet with
an associated point on the other sheet reached by looping through the ring once (a $2\pi$ circle). 
  Since $\varsigma^2 = id$, two full loopings through the ring brings one back to ``square one,'' which is analogous to a spin-1/2 
(bi-)spinor rotation of angle 4$\pi$ corresponding to a full 2$\pi$ rotation in Euclidean space.
  Thus we appropriately may speak of the particle associated to our bi-spinor as having ``topo-spin'' 1/2.

\begin{rem}\emph{
The ``looping through the ring'' visualization of topo-spin should not be confused with some kind of rotation in $\cN$, the constant-$t$
snapshot of z$\,G$KN;  it is independent of the notion of a spinorial representation of the rotation group.
 In particular, ``scalar'' particles can have topo-spin, too:
 a scalar wave function $\Psi\in \Lsp^2(\cN,\Cset)$ can be viewed as depending on $\bq\in\cN$, equivalently on a vector position 
$\br\in\Rset^3$ plus a discrete variable $\varkappa\in\{-1,1\}$ which indicates on which copy of $\Rset^3$ the vector $\br$ lives. 
 The similarity with Pauli's original way of writing spin variables as arguments rather than components of $\Psi$ is evident. 
 Still, the action of the topo-spin operator on such a $\Psi$ (see next subsection) represents a sheet swap, not a rotation.
}
\end{rem}

\subsubsection{Generalized Cayley--Klein representation of a bi-spinor}

 In this subsection we describe a  representation of Dirac bi-spinors that is a generalization of the Cayley--Klein representation of 
Pauli spinors.
 Therefore we first consider the two-component Pauli spinors.  

 Let $\psi :\RR\times \RR^3 \to \CC^2$ be a (two-component) Pauli spinor field.
 Set 
$$
R^2 := \psi^\dag \psi,
$$ 
where for any column vector $\psi\in \Cset^k$ we have defined $\psi^\dag = \psi^{*t}$ to be the complex conjugate-transpose of $\psi$.
 Then the unit spinor $\check{\psi}$ has the following $SU(2)$ (Cayley--Klein) representation \cite{Gol80}:
\beq\label{CK}
\check{\psi}:=\frac{1}{R} \psi = \left(\begin{array}{c} \cos(\Theta/2)
e^{i(\Phi-\Omega)/2}\\ \sin(\Theta/2)
e^{i(\Phi+\Om)/2}\end{array}\right);
\eeq
here, for each point in the configuration space, $(\Phi, \Theta,\Omega)$ are a triplet of Eulerian angles\footnote{Here 
    and elsewhere in the paper we are using the ZYZ convention for 
    Euler angles, whereby any rotation in $\RR^3$ can be decomposed into a rotation around the $z$-axis, followed by one around the (new) 
    $y$-axis, followed by another one around the (new) $z$-axis} 
corresponding to a rotation $\cR(\Phi,\Theta,\Omega) \in SO(3)$; clearly,
$(\Phi,\Theta,\Om)$ are real-valued functions on the one-body configuration space.
 The above representation is obtained as follows:
Given $\psi\in \Cset^2$, there is a unitary matrix $U^\psi \in SU(2)$ such that 
$$
U^\psi \left(\begin{array}{c} 1 \\ 0 \end{array}\right)  = \check{\psi}.
$$ 
   The map
\beq\label{def:CKmap}
(\Phi,\Theta,\Omega) \to U 
:= e^{-i\frac{\Om}{2}\si_3}e^{-i\frac{\Theta}{2}\si_2}e^{-i\frac{\Phi}{2}\si_3} 
=\left(\begin{array}{cc}
\cos(\Theta/2) e^{i(\Phi-\Omega)/2} & -\sin(\Theta/2) e^{-i(\Phi+\Omega)/2}\\ 
\sin(\Theta/2) e^{i(\Phi+\Om)/2}    & \quad \cos(\Theta/2) e^{-i(\Phi-\Omega)/2}
\end{array}\right)
\eeq 
is a map from $SO(3)$ into its universal cover $SU(2)$ that takes any such triplet of Euler angles $(\Phi,\Theta,\Omega)$ to 
(one of the two) $SU(2)$ elements that comprise the inverse image of $\cR(\Phi,\Theta,\Omega)$ under the covering map.

\begin{rem}\emph{
 Incidentally, in  \cite{BST55} this representation of  Pauli spinors is used to give an ontological fluid interpretation of
Pauli's equation in the spirit of Madelung's fluid interpretation of Schr\"odinger's equation; subsequently 
\cite{BohmHileyBOOK} it was noted that it supplies a non-relativistic law for the orientation of a rigid, spinning (spherical) 
model electron.
 We also direct the reader to \cite{Dankel70} for a related discussion of spinning spherical particles in the context of Nelson's 
stochastic mechanics.}
\end{rem}

 Given any ${\psi} = \left(\begin{array}{c} \fz_1\\ \fz_2\end{array}\right) \in \Cset^2$, 
one can find a triplet of Eulerian angles $(\Phi,\Theta,\Omega)$, (unique up to the 
usual ambiguity in Eulerian angles), such that \refeq{CK} holds.
  More precisely, we show below that every non-zero $\psi$ determines an orthonormal frame in $\RR^3$, 
denoted by $\{\bl(\psi),\bm(\psi),\bn(\psi)\}$ (cf. \cite{Hol88}) and thus a unique element of the real rotation group $SO(3)$ 
that takes the standard basis for $\RR^3$ to the basis $\{\bl,\bm,\bn\}$.  
 Set
$$
\bn(\psi) := \frac{\psi^\dag \siV \psi}{\psi^\dag \psi} = 
\frac{1}{|\fz_1|^2+|\fz_2|^2} \left(\begin{array}{c} 2 \mbox{Re} (\fz_1{\fz_2^*})\\
2 \mbox{Im}(\fz_1{\fz_2^*})\\ 
|\fz_1|^2 - |\fz_2|^2\end{array}\right);
$$
here, the ${}^*$ denotes complex conjugation.
 It is easy to see that $\bn$ is a unit vector:
 using the Cayley--Klein form of $\psi$ \refeq{CK} we obtain
$$
\bn(\psi) = \left(\begin{array}{c} \sin\Theta\cos\Om\\
\sin\Theta\sin\Omega\\
\cos\Theta\end{array}\right).
$$

 Next, let us define the {\em flip map} $\ff :\Cset^2 \to \Cset^2$ as follows:
$$
\ff \left(\begin{array}{c} \fz_1\\ \fz_2 \end{array}\right) 
=  \left(\begin{array}{c} -{\fz_2^*}\\ \ {\fz_1^*} \end{array}\right).
$$
  We note that
$$
\bn(\ff \psi) = - \bn(\psi),
$$
which is why $\ff$ is called the flip map.
  Next we define vectors $\bl,\bm \in \RR^3$ by
$$
\bl + i \bm := \frac{ (\ff \psi)^\dag \siV\psi}{\psi^\dag \psi} =
\left(\begin{array}{c} \fz_1^2 - \fz_2^2\\ i(\fz_1^2 + \fz_2^2)\\ -2\fz_1\fz_2\end{array}\right),
$$
and in terms of the Cayley--Klein parameters:
$$
\bl(\psi) = \left(\begin{array}{c} \cos\theta\cos\Om\cos\Phi+\sin\Om\sin\Phi\\
\cos\theta\sin\Om\cos\Phi - \cos\Om\sin\Phi\\
-\sin\theta\cos\Phi\end{array}\right),\qquad
\bm(\psi) = \left(\begin{array}{c}
\cos\Theta\cos\Om\sin\Phi-\sin\Om\cos\Phi\\
\cos\Theta\sin\Om\sin\Phi +\cos\Om\cos\Phi\\
-\sin\Theta\sin\Phi\end{array}\right).
$$
 One checks that $\{\bl,\bm,\bn\}$ are indeed orthonormal.
  Putting the three vectors together gives the rotation matrix $\cR(\Phi,\Theta,\Om)$ mentioned above, which takes the standard frame 
$\{\be_x,\be_y,\be_z\}$ of $\RR^3$ into the frame $\{\bl,\bm,\bn\}$.  

 Now let $\Psi \in\Cset^4$ be a bi-spinor. 
 Thus 
$$
\Psi = \left(\begin{array}{c} \psi_1 \\ \psi_2 \end{array}\right),
$$
with $\psi_1,\psi_2 \in \Cset^2$.
  Using the Cayley--Klein representations of $\psi_1$ and $\psi_2$, we may write
$$
\Psi = \left(\begin{array}{c} 
R_1 e^{i \Phi_1/2} \left(\begin{array}{c} \cos(\Theta_1/2) e^{-i\Om_1/2} \\
\sin(\Theta_1/2) e^{i\Om_1/2}\end{array}\right) \\
R_2 e^{i \Phi_2/2} \left(\begin{array}{c} \cos(\Theta_2/2) e^{-i\Om_2/2} \\
\sin(\Theta_2/2) e^{i\Om_2/2}\end{array}\right)
\end{array}\right).
$$
 Let $R:= \sqrt{R_1^2+R_2^2}$,\quad $\Si := 2\tan^{-1}\frac{R_2}{R_1}$,\quad $S := \half(\Phi_2+\Phi_1)$, and $\Phi = \half(\Phi_2-\Phi_1)$.
  Then,
\beq\label{canspin}
\Psi =   R e^{iS} \left(\begin{array}{c} \cos\frac{\Si}{2} e^{-i\Phi/2} \left(\begin{array}{c}   \cos(\Theta_1/2) e^{-i\Omega_1/2}\\ 
\sin(\Theta_1/2) e^{i\Om_1/2}   \end{array}\right)\\
                                            \sin\frac{\Si}{2}e^{i\Phi/2}  \left(\begin{array}{c}   \cos(\Theta_2/2) e^{-i\Omega_2/2} \\   
 \sin(\Theta_2/2) e^{i\Omega_2/2}      \end{array}\right)     \end{array}\right).
\eeq
 We call this the \emph{generalized Cayley--Klein representation} of the Dirac bi-spinor $\Psi$.  

 We now define the {\em orientation vector field} of $\Psi$ by
\beq\label{eq:classSPIN}
\bn_\Psi := \frac{\Psi^\dag \mathbf{S} \Psi}{\Psi^\dag \Psi} = \cos^2\frac{\Si}{2} \bn_1 + \sin^2\frac{{\Si}}{2} \bn_2.
\eeq
 Here, $\mathbf{S} = (S_1,S_2,S_3)^t$, with
$$
S_k:= \left(\begin{array}{cc} \si_k & 0 \\ 0 & \si_k \end{array}\right),\qquad k = 1,2,3,
$$
while $\bn_1 := \bn(\psi_1)$ and $\bn_2 := \bn(\psi_2)$.
 One readily checks that
$$
\|\bn_\Psi\|^2 = \half\left( 1 + \cos^2\Si + (\bn_1 \cdot \bn_2)\sin^2\Si \right) \leq 1,
$$
with equality holding if and only if either $\Si = 0$ or $\pi$, or  if the vectors $\bn_1$ and $\bn_2$ are {\em parallel},
i.e. $\bn_1 \cdot \bn_2 = 1$.  
 Of interest is also when this vector field vanishes: $\bn_\Psi =0$ iff $\psi_2 = e^{i\Phi} \ff\psi_1$.

 Finally, by analogy with the Pauli spinor, for a bi-spinor we can define
$$
\bl'_\Psi + i \bm'_\Psi := \frac{(\fF\Psi)^\dag \mathbf{S} \Psi}{\Psi^\dag \Psi},
$$
where 
$$
\fF \Psi := 
Re^{iS}\left(\begin{array}{c}
\cos\frac{\Si}{2} e^{-i\Phi/2}  \left(\begin{array}{c}   -\sin(\Theta_1/2) e^{i\Omega_1/2} \\   
\ \cos(\Theta_1/2) e^{-i\Omega_1/2}      \end{array}\right)    
\\
 \sin\frac{\Si}{2}e^{\;i\Phi/2} \left(\begin{array}{c}   -\sin(\Theta_2/2) e^{i\Omega_2/2}\\ 
\ \cos(\Theta_2/2) e^{-i\Om_2/2}   \end{array}\right)
\end{array}\right).
$$
 Thus, 
$$
\bl_\Psi'  = \cos^2\frac{\Si}{2}\ \bl_1 +\sin^2
\frac{\Si}{2}\ \bl_2,\qquad 
\bm_\Psi' = \cos^2\frac{\Si}{2}\ \bm_1 +\sin^2
\frac{\Si}{2}\ \bm_2.
$$
 Unfortunately, the triplet $\{\bl'_\Psi,\bm'_\Psi,\bn_\Psi\}$ forms an orthogonal frame only if $\bn_1\times\bn_2=\boldsymbol{0}$, 
in general. 
 But, when $\bn_1\times\bn_2\neq \boldsymbol{0}$, then we can use these vectors to define
$$
\bl_\Psi := \bn_1\times\bn_2,
$$
which is orthogonal to $\bn_\Psi$ because $\bn_\Psi$ is a linear combination of $\bn_1$ and $\bn_2$, and 
$$
\bm_\Psi := \bn_\Psi\times\bl_\Psi.
$$
 The triplet $\{\bl_\Psi,\bm_\Psi,\bn_\Psi\}$ forms an orthogonal frame if $\bn_1\times\bn_2\neq\boldsymbol{0}$.
 In each case one can obtain an orthonormal frame by normalization.

 After our group-theoretical preparations we are ready to formulate the Dirac equation for a z$G$KN ring singularity.
 We begin with the simpler case of a ``free'' z$G$KN ring singularity; the interacting case is treated thereafter.

\subsection{Dirac's equation for a free z$G$KN ring singularity}\label{sec:DiracEQzGKNrg}

 Recall that any quantum-mechanical wave equation for  a ``quantum particle,'' when formulated ``in position space representation,'' 
is formulated on the configuration space $\cC$ of the corresponding ``classical (point) particle.''
 For a ``free'' massive test particle whose classical worldlines (according to Einstein's general theory of relativity) are timelike 
geodesics in some static background manifold $(\cM,\bg)$ its configuration space is simply a spacelike constant-time slice $\cN$ of 
$\cM$, where ``time'' parametrizes the timelike Killing field of $(\cM,\bg)$; for a single quantum particle we may more casually write
that its wave equation is formulated on $(\cM,\bg)$.
 In particular, the Dirac equation for a ``free'' spin-$1/2$ particle of rest mass $m$ with classical worldline 
in a spacetime $(\cM,\bg)$, when formulated w.r.t. arbitrary coordinates $(x^\mu)_{\mu=0}^3$ of $\cM$ (with $c=1$) reads 
(cf.  \cite{BrillCohen66}, \cite{KieTah14a}):
\beq\label{eq:DirEqFREE}
{\tilde\ga}^\mu p_\mu \Psi + m \Psi = 0;
\eeq
here, $\Psi = \Psi(x^1,x^2,x^3,x^4)$ is a bi-spinor field on $(\cM,\bg)$, while 
$p_\mu = -i\hbar \nab_\mu$ with $\nab$ the covariant derivative (on spinors) associated to the spacetime metric $\bg$, and
$({\tilde\ga^\mu})_{\mu=0}^3$ are Dirac matrices associated to this metric, i.e. satisfying the anti-commutation relations
\beq
\tilde{\ga}^\mu \tilde{\ga}^\nu + \tilde{\ga}^\nu \tilde{\ga}^\mu = 2 g^{\mu\nu}{\boldsymbol{1}}_{4\times4}.
\eeq

 Now, what is a ``free'' z$G$KN ring singularity, and what is its configuration space?
 These may at first sound like perplexingly difficult questions, but by the principle of relativity 
the first one has a straightforward answer, and surprisingly also the second question has a simple answer based on
the principle of relativity and the group-theoretical considerations of the previous subsection.

 First, let's clarify what a ``free'' z$G$KN ring singularity is.
 To see this, start from the picture of a test particle moving freely in $\cN$, its worldline being a timelike geodesic in $(\cM,\bg)$, which
for $\cM=$z$G$KN (and z$G$K for that matter, too) is a straight line in Euclidean sense, possibly changing sheets by going through the 
(geometrical) disc spanned by the ring. 
 Any static frame attached to the static z$G$KN (and z$G$K) spacetimes is an inertial frame, but any worldpoint of the 
freely moving test particle is the origin of some other inertial frame, too; so by a simple Lorentz change of inertial frames 
one can speak of the z$G$K or z$G$KN singularity as moving freely w.r.t. an inertial frame in which the point test particle is at the 
origin.

 Next, we clarify what the configuration space is for such a free z$G$KN ring singularity.
 This is a more subtle issue, for relative to a Dreibein with origin at 
$\bq_{\mathrm{pt}} = (\br_{\mathrm{pt}},\varkappa)\in\Rset^3\times\{-1,1\}$ 
in a constant-time snapshot $\cN$ of z$G$KN its axisymmetric ring singularity of fixed radius $|a|$ has a geometrical center at
$\bq_{\mathrm{rg}} = (\br_{\mathrm{rg}},\varkappa)\in\Rset^3\times\{-1,1\}$ 
and a normal $\bn_{\mathrm{rg}}\in \Rset^3$ to the disc spanned by the ring; in addition, if one ``marks off'' a 
reference point on the ring, then also an azimuthal angle is needed to specify where that point is.
 Since this requires three spatial plus a two-valued discrete coordinate for $\bq_{\mathrm{rg}}$,
 and two angles for $\bn_{\mathrm{rg}}$, and possibly an azimuth, one could come to the conclusion 
that the configuration space for the ring singularity has to be five- or possibly six-dimensional.
 However, this counting tacitly assumes a scalar wave function description.
 Since we are working with the bi-spinorial wave functions of Dirac, the two angles for $\bn_{\mathrm{rg}}$ and the azimuth
are encoded in this structure already (see the previous subsection), so that the configuration space for the z$G$KN
 ring singularity is indeed merely the set of locations of its center $\bq_{\mathrm{rg}}$.
 Note that also for the standard Dirac point particle one usually speaks of ``its classical spin,'' which is simply formula
\refeq{eq:classSPIN} multiplied by $\hbar/2$, but this way of talking suggests more structure of a structureless point than there 
really is --- for a Dirac point electron, ``spin'' is purely in the bi-spinorial wave function (acted on by the $\alpha_k$ matrices);
by contrast, the same bi-spinorial information acquires ontological meaning for the structured object that the z$G$KN ring singularity is.

  Now we know that the configuration space $\cC$ for the z$G$K and z$G$KN ring singularity is the set of locations $\bq_{\mathrm{rg}}$ 
of its center, but which set is this, mathematically? 
  Since the set of all snapshots of where a freely moving test particle can be in $\cN$ relative to the ring singularity is $\cN$
itself, by relativity (turning the perspective around) the same is true for the center of the ring singularity of z$G$K and z$G$KN ---
so $\cC=\cN$ for a z$G$KN and z$G$K ring singularity.

 To obtain the domain for the Dirac wave equation of a free z$G$K or z$G$KN ring singularity we only have to stack up 
$\Rset$ copies of $\cN$ (indexed by time) and obtain a manifold which is isomorphic to z$G$KN (or z$G$K, without electromagnetism).
 Because of this isomorphism, the Dirac equation for the bi-spinorial wave function $\Psi$ of a free z$G$K or z$G$KN ring singularity 
``of mass $m$'' is identical to the Dirac equation \refeq{eq:DirEqFREE} for a free spin-1/2 particle of mass $m$ on a fixed z$G$K 
(or z$G$KN) background:  only the narrative of the variables in the argument of $\Psi$ changes.
 
 The last remark requires some elaboration, though. 
 In our discussion above we have resorted to a notation involving ``Euclidean'' vectors (plus the discrete variable $\varkappa$), 
which are defined relative to some Dreibein attached to an ``origin.''
 But clearly, the location of the origin and the orientation of the Dreibein attached to it are merely auxiliary constructs
which allow one to invoke vector-algebra and -calculus.
 As in Newtonian point mechanics of Newton's universe, the physics does not depend on the choice of origin nor on the Dreibein.
 In the same vein, the coordinates $\bq_{\mathrm{pt}} = (\br_{\mathrm{pt}},\varkappa)\in\Rset^3\times\{-1,1\}$ enter the Dirac 
equation \refeq{eq:DirEqFREE} only \emph{relative to the center and axis of the z$\,G$KN ring singularity, and relative to which sheet 
it is associated with}.
 In fact, by axisymmetry only two coordinates enter the bi-spinor field $\Psi(t,\,.\,)$ evaluated at 
$\bq_{\mathrm{pt}}$, namely $|\bq_{\mathrm{pt}}-\bq_{\mathrm{rg}}|$ and $(\bq_{\mathrm{pt}}-\bq_{\mathrm{rg}})\cdot\bn_{\mathrm{rg}}$, which
can be expressed in oblate spheroidal coordinates based on the ring, without $\varphi$, (see Appendix A).
 Indeed, since Chandrasekhar \cite{ChandraDIRACinKERRgeom,ChandraDIRACinKERRgeomERR} showed that the Dirac equation \refeq{eq:DirEqFREE} 
separates in the oblate spheroidal coordinates (BL coordinates) which provide a single chart for its maximal-analytical extension,
this has become the coordinate system of choice for essentially all ensuing studies of \refeq{eq:DirEqFREE}.
 Clearly, the triplet $(r,\theta,\varphi)$ of oblate spheroidal coordinates of a point in $\cN$ are neither its Cartesian coordinates,
nor do they refer to a Dreibein attached to the geometrical center of the ring singularity or attached to the point with coordinates $(r,\theta,\varphi)$.
 Yet they contain all the relevant information.
 More to the point, if one knows where a point particle is located in $\cN$ relative to the ring singularity, given by the triplet
$(r,\theta,\varphi)$, then one knows where a point of the ring singularity is located relative to the point particle (the $\varphi$
variable in each case relative to a specifically marked ``reference azimuth $\varphi_0$'' on the ring singularity); and by axisymmetry,
the $\Sset^1$ orbit of the $\varphi$ variable yields the location of the whole ring (both ``locations'' only modulo a rotation of $SO(3)$) ---
this is equivalent to giving $(r,\theta)$.
 And if one knows the location of the ring singularity relative to the point particle, then one can retrieve its center 
relative to the point particle. 
 Relative to some Dreibein attached to the point particle that center can of course be anywhere on an $SO(3)$ orbit.
 The $SO(3)$ ambiguity is now fixed by the orientational information extracted from the bi-spinor.

 Lastly, we note that instead of merely \emph{re-interpreting} 
the oblate spheroidal coordinates $(r,\theta,\varphi)$ of a point particle 
in $\cN$ as coordinates of a point on the ring singularity relative to that point particle, as just explained, one can also 
change variables to other oblate spheroidal coordinates for the center of the ring singularity; see Appendix~A. 

\subsection{Dirac's equation for a z$G$KN ring singularity (with / without anomaly) 
interacting with a point charge located elsewhere in the spacetime}

 To set up the Dirac equation for a spin-half z$G$KN ring singularity with ``inert mass $m$'' and charge $\textsc{q}$ (as seen asymptotically 
in the $r>0$ sheet) in the electric field of a static point charge $\textsc{q}'$ located at $\bq_{\textrm{pt}}$ in 
$\cN\subset$z$G$KN, we follow the strategy of 
the previous subsection and begin by recalling the Dirac equation for a spin-half test particle of charge $\textsc{q}'$ and mass $m$ in the 
z$G$KN background spacetime of charge $\textsc{q}$ as seen from infinity in the $r>0$ sheet; it was studied in \cite{KieTah14a}, and reads
(we allow an anomalous magnetic moment, and set $c=1$) 
\beq\label{eq:DirEqA}
{\tilde\ga}^\mu 
\left(-i\hbar \nab_\mu  - \textsc{q}' {\AKNanom}_\mu\right)\Psi + m \Psi = 0, 
\eeq
where the Dirac matrices $({\tilde\ga^\mu})_{\mu=0}^3$ satisfy the same anti-commutation relations as for the ``free'' Dirac particle,
and where $\Psi(t,\,.\,)$ and ${\AKNanom}_\mu(\,.\,)$ are evaluated at the location  $\bq_{\mathrm{pt}}\in\cN$ of the point charge,
relative to the location of the ring singularity. 
 By following Chandrasekhar's work \cite{ChandraDIRACinKERRgeom,ChandraDIRACinKERRgeomERR} on 
the Dirac equation for the ``free'' Dirac particle \refeq{eq:DirEqFREE}, Page \cite{Page76} and Toop \cite{Too76} showed 
that \refeq{eq:DirEqA}, with $\AKN$ in place of $\AKNanom$, is separable when the position $\bq_{\mathrm{pt}}\in\cN$ of the point charge 
relative to the ring singularity is coordinatized by the oblate spheroidal coordinates $(r,\theta,\varphi)$.

 Now, the conclusions of the previous subsection about the ``free'' spin-half z$G$KN singularity should extend to the spin-half z$G$KN 
singularity in the field of a point charge: it should be possible to re-interpret (\ref{eq:DirEqA}) as 
the Dirac equation for a spin-half z$G$KN ring singularity with inert mass $m$ and charge $\textsc{q}$ (as seen asymptotically 
in the $r>0$ sheet) and possibly a KN-anomalous magnetic moment, which interacts with
the electric field of a static point charge $\textsc{q}'$ located elsewhere in $\cN\subset$z$G$KN.
 Indeed, all that would seem to be necessary once again is to re-interpret the  oblate 
spheroidal coordinates $(r,\theta,\varphi)$ of the point charge relative to the ring singularity as coordinates of a point on
the ring singularity relative to the point charge.
 As an important spin-off, the Dirac equation of a spin-half zero-$G$ Kerr--Newman singularity in the field of a point charge
would become essentially identical to \refeq{eq:DirEqA}, the Dirac equation of a spin-half point charge in the field of a 
zero-$G$ Kerr--Newman singularity, which we have studied in  \cite{KieTah14a}. 
 
 However, all this can only be strictly valid for stationary situations; fortunately, this covers the totality of the bound states 
associated with the discrete spectrum.
 Our Dirac equation should still be approximately valid in the quasi-static regime, but not beyond.
 More on that in section 4.

 The reader may object that whereas the minimal coupling interaction term in (\ref{eq:DirEqA}) is 
easily understandable as describing the effect of $\bAKNanom$ on a test point charge, it definitely needs to be justified 
as describing also the effect of the electric field of a static point charge on a ``test z$G$KN singularity'' (possibly 
with KN-anomalous moment) --- even for stationary situations.

\begin{rem}\emph{
 In the interpretation of ``a z$\,G$KN ring singularity in a given electrostatic field of a point charge''
we appropriately should call the term 
$\textsc{q}' {\AKNanom}_\mu(\bq_{\mathrm{pt}})\Psi(\bq_{\mathrm{pt}})$ a ``\emph{minimal re-coupling} term'' rather than
a ``minimal coupling term'' ---
 this also pays attention to the fact that in the interpretation of $\Psi$ as bi-spinor wave function of the z$\,G$KN ring 
singularity the coordinates of $\bq_{\mathrm{pt}}$ relative to the ring singularity are used to locate the ring singularity
relative to the point, without requiring a notational change.}
\end{rem}

 In the next subsubsection we shall vindicate the minimal re-coupling term in (\ref{eq:DirEqA}) as the correct interaction
term for a spin-half z$G$KN ring singularity of  charge $\textsc{q}$ (as seen asymptotically in the $r>0$ sheet), possibly
appended by a Kerr--Newman-anomalous magnetic moment, in the electric field of a static point charge $\textsc{q}'$ located 
elsewhere in z$G$KN.

\subsubsection{Vindication of minimal re-coupling}

 To justify the above form of the Dirac equation for a spin-half z$G$KN ring singularity with inert 
mass $m$ and charge $\textsc{q}$ (as seen asymptotically in the $r>0$ sheet) in the electric field of a static point 
charge $\textsc{q}'$ located elsewhere in z$G$KN, we show that the traditional minimal coupling term 
in (\ref{eq:DirEqA}) which describes the effects of $\bAKNanom$ on a test point charge is actually obtained from an electromagnetic
interaction integral which is completely symmetric in the two objects ``point charge'' and ``z$G$KN ring singularity'' (appended by
a KN-anomalous magnetic moment).

 We start with the following Dirac equation,
\beq\label{eq:Dir}
 \tilde{\ga}^\mu\left( -i\hbar \nab_\mu  - \cP_\mu \right)\Psi  + m \Psi= 0,
\eeq
where $\nab$ and $\tilde{\ga}$ are as before, and where the four-covector $\cP$ is the {\em interaction energy-momentum ``vector''} 
describing the mutual electromagnetic interaction of the z$G$KN singularity and the point charge. 
 The interaction energy-momentum vector is defined as follows.

 Recall the energy(-density)-momentum(-density)-stress tensor $\bT$ with components $T_{\mu\nu}$ given by \refeq{eq:Tmunu};
for our stationary fields it is time-independent, depending only on the space variable $\bs$, and in addition also on 
$|{\bq}_{\textrm{pt}}- {\bq}_{\mathrm{rg}}|, ({\bq}_{\textrm{pt}}- {\bq}_{\mathrm{rg}})\cdot \bn_{\mathrm{rg}}$, $a$, 
and on $\textsc{qq}'$ and $\textsc{iq}'$ (if a Kerr--Newman-anomalous magnetic moment is added), as parameters.
 Then $\cP$ is by definition the four-vector field integral
$$
\cP_\mu := \mbox{f.p.} \int_{\cN} T_{\mu 0}(\bq) d^3s,
$$
where $\mbox{f.p.}$ stands for {\em finite part}.
  More to the point, we resort to the early ``pedestrian'' renormalization recipe, as explained e.g. in \cite{JacksonBOOK} for the
electrostatic $N$-body Coulomb interaction, and as explained in our context next.
 Explicitly, we consider the classical electromagnetic problem of computing, {\em in the quasi-static approximation}, the interaction 
energy of a (generalized) z$G$KN singularity of charge $\textsc{q}$ (as seen in the $r>0$ sheet) with a point charge
$\textsc{q}'$ that sits elsewhere in the branched Riemann space whose branch curve is the ring singularity.
 Recall that with Maxwell's vacuum law $\bD = \bE$, $\bH = \bB$, 
the energy-momentum-stress tensor $T_{\mu\nu}$ of an electromagnetic field tensor $\bF = d\bA$ is defined to be
$$
T_{\mu\nu} = \frac{1}{4\pi} \left\{ F_\mu^\la\star F_{\la\nu} - \frac{1}{4} F_{\mu\nu}F^{\mu\nu} \right\},
$$
and thus in particular
$$
T_{00} = \frac{1}{8\pi} \left( |\bE|^2 + |\bB|^2 \right),\qquad
T_{0j} = \frac{1}{4\pi}( \bE \times \bB)_j.
$$
 Here for convenience we have already switched to the Euclidean vector notation (the local tangent space of z$G$K is always 
Minkowski spacetime); we note that we commit a slight abuse of notation, as we already defined $\bE$ and $\bB$ and $\bA$ as one-forms.

 Be that as it may, let the spacetime $\cM$ be a copy of z$G$KN, with its ring singularity centered at $\bq_{\mathrm{rg}}$ and of 
radius $|a|$ and orientation $\bn_{\mathrm{rg}}$, and let us suppose that a point charge is located at ${\bq}_{\textrm{pt}} \in \cN$.
 We note that $\bE$ and $\bB$ are of course the {\em total} fields, computed from $\bAKNhydro$,
 so that for the case at hand, namely the (generalized) z$G$KN 
singularity plus point charge system, they include self-field contributions from both the (generalized) z$G$KN singularity and 
the point charge, in addition to interaction terms.
 By the linearity of the Maxwell vacuum system, we can write $\bE = \bE_{\textrm{pt}} + \bE_{\textsc{KN}}$ 
and likewise 
$\bB =  \bB_{\textsc{KN}}^{\mbox{\tiny\textrm{gen}}}$,
 where $\bE_{\textsc{KN}}$ and $\bB_{\textsc{KN}}^{\mbox{\tiny\textrm{gen}}}$ are the (generalized) Kerr--Newman fields 
associated with (generalized) z$G$KN, and similarly $\bE_{\textrm{pt}}$ is the electric field generated by the static point 
charge located elsewhere in the z$G$K spacetime; we have simplified matters by taking $\bB_{\textrm{pt}} = 0,$ as
appropriate for a point charge, and since the addition of a magnetic point dipole field $\bB_{\textrm{pt}}$ would 
in fact be catastrophic.
 Thus we have
$$
4\pi T_{00} = \half |\bE_{\textrm{pt}}|^2 + \half |\bE_{\textsc{KN}}|^2 + \half |\bB_{\textsc{KN}}^{\mbox{\tiny\textrm{gen}}}|^2 +
\bE_{\textrm{pt}} \cdot \bE_{\textsc{KN}}.
$$
 The first three terms in the above are self-energy terms, the integrals of which over the whole space diverges.
 At the same time, these quantities would be finite for instance if the charges were smeared out over a small region, 
and then the result would be a (large) constant, but in particular independent of the locations of the particles.
 Thus, {\em in calculations where only energy differences are important} (recall that only differences of eigenvalues
of the Hamiltonian have physical spectral significance),
these infinite self-energy terms may be ignored, so that only the last term $\bE_{\textrm{pt}}\cdot \bE_{\textsc{KN}}$ needs to be evaluated.
 Similarly,
$$
4 \pi T_{0j} = (\bE_{\textrm{pt}}\times \bB_{\textsc{KN}}^{\mbox{\tiny\textrm{gen}}})_j + 
(\bE_{\textsc{KN}} \times \bB_{\textsc{KN}}^{\mbox{\tiny\textrm{gen}}})_j,
$$
and this time only the first term needs to be computed, the second one integrating to an infinite self-interaction term.
  Thus,
$$
 4\pi \cP_0 = \int_\cN \bE_{\textrm{pt}} \cdot \bE_{\textsc{KN}} d^3s,
\qquad 4\pi \cP_j = \int_\cN (\bE_{\textrm{pt}} \times \bB_{\textsc{KN}}^{\mbox{\tiny\textrm{gen}}})_j d^3s.
$$

\begin{rem}\emph{
In our narrative we  have been talking about (generalized) z$\,G$KN and point charge fields, but the interaction energy-momentum integrals
can be used to compute the quasi-static interaction of any two point- or not-point-like electromagnetic objects (subscripts ${}_1$ and 
${}_2$) --- in general this will then involve integrands of the type $\bE_1 \cdot \bE_2$, $\bB_1 \cdot \bB_2$, 
$\bE_1 \times \bB_2$, and $\bE_2 \times \bB_1$.}
\end{rem}

\begin{rem}\emph{
With this $\cP$ in our Dirac equation \refeq{eq:Dir} its gauge invariance seems lost.
This seeming contradiction is resolved by noting that energy-momentum conservation still holds if to $T_{0,\mu}$ we add any
divergence-free four gradient $\p_\mu\Upsilon$, corresponding to allowing gauge transformations satisfying the
Lorenz--Lorentz gauge condition --- which one may want to use to keep the theory Lorentz invariant.}
\end{rem}

 We represent 
$$
 \bE_{\textsc{KN}} = -\nabla \phi_{\textsc{KN}}, 
\qquad \bB_{\textsc{KN}}^{\mbox{\tiny\textrm{gen}}} = \nabla\times \bA_{\textsc{KN}}^{\mbox{\tiny\textrm{gen}}}.
$$
 On the other hand, $\bE_{\textrm{pt}} =-\grad \phi_{\textrm{pt}}$, the electrostatic field generated by the point charge 
located at ${{\bq}_{\textrm{pt}}}$, is a gradient, too, with $\phi_{\textrm{pt}}$ as in \refeq{phipt}.

 We are now ready to compute the interactions $\cP_\mu$.  
 We shall show that $\cP_\mu = \textsc{q}' {\AKNanom}_\mu$. 

\begin{prop}\label{prop:P0}
$$
 \cP_0  = \textsc{q}' \phi_{\textsc{KN}}(\bq_{{\mathrm{pt}}}). 
$$
\end{prop}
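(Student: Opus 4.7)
My plan is to reduce the computation to a distributional integration by parts, exploiting the fact that both $\phi_{\textsc{KN}}$ and $\phi_{\textrm{pt}}$ are gradient potentials and that $\phi_{\textrm{pt}}$ satisfies the Poisson equation on the two-sheeted Sommerfeld space $\cN$ with a single point source at $\bq_{\mathrm{pt}}$.

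First, since $\bE_{\textrm{pt}} = -\nabla \phi_{\textrm{pt}}$ and $\bE_{\textsc{KN}} = -\nabla \phi_{\textsc{KN}}$, the defining integral becomes
\[
4\pi \cP_0 \;=\; \int_{\cN} \nabla \phi_{\textrm{pt}} \cdot \nabla \phi_{\textsc{KN}}\, d^3s.
\]
I would integrate by parts, moving the Laplacian onto $\phi_{\textrm{pt}}$ rather than onto $\phi_{\textsc{KN}}$, because $\phi_{\textrm{pt}}$ has the \emph{simpler} distributional Laplacian: by construction \refeq{phipt} it satisfies $-\Delta_{\cN}\phi_{\textrm{pt}} = 4\pi \textsc{q}'\,\delta_{\bq_{\mathrm{pt}}}$ on the whole double-sheeted manifold (with the delta supported on only one sheet, which is the entire point of using peripolar coordinates). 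Formally this yields
\[
4\pi \cP_0 \;=\; -\int_{\cN} \phi_{\textsc{KN}}\, \Delta \phi_{\textrm{pt}}\, d^3s \;=\; 4\pi \textsc{q}'\, \phi_{\textsc{KN}}(\bq_{\mathrm{pt}}),
\]
which is the claimed identity.

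The main obstacle is rigorously justifying the integration by parts on the non-compact, topologically non-trivial manifold $\cN$, which requires showing that boundary contributions vanish. I would argue this by excising small neighborhoods and passing to a limit. Concretely, for each small $\veps>0$ I would remove a tubular $\veps$-neighborhood $\cT_\veps$ of the ring singularity $\cR$ (where $\phi_{\textsc{KN}}$ is singular) and also work inside balls of large radius $R$ in each of the two asymptotic ends of $\cN$, then integrate by parts on the resulting compact region $\cN_{R,\veps}$. Two kinds of boundary terms arise: (i) at large $R$ in each sheet, $\phi_{\textrm{pt}}$ decays like $1/|\bq|$, while $\nabla \phi_{\textsc{KN}}$ decays like $|\bq|^{-2}$ in the sheet where it is monopolar and likewise in the opposite sheet where its sign is reversed, so the flux integral is $O(1/R)$ and vanishes as $R\to\infty$; (ii) at the tube $\partial\cT_\veps$ the potential $\phi_{\textrm{pt}}$ stays bounded (since $\bq_{\mathrm{pt}}$ is assumed to sit away from $\cR$), while the Appell--Sommerfeld potential $\phi_{\textsc{KN}} = (\textsc{q}/a)\,\xi/(\xi^2+\eta^2)$ and its gradient have only a mild conical-type singularity whose flux through the tube is $O(\veps |\log\veps|)$ or better, hence vanishes as $\veps\to 0$.

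Once these boundary estimates are in place, passing to the limit gives the claimed identity. I expect the technical heart of the argument to lie in step (ii): the near-ring asymptotics of $\phi_{\textsc{KN}}$ must be extracted carefully in oblate spheroidal or peripolar coordinates, using the fact that the induced area element on $\partial \cT_\veps$ carries an extra factor of $\veps$ that tames the $|\nabla\phi_{\textsc{KN}}|$ singularity; this is precisely the reason that the self-energy of the KN field is not required to be finite for this \emph{interaction} integral to make sense, and is the rigorous content of the ``finite part'' prescription invoked in the definition of $\cP_\mu$.
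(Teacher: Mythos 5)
Your plan is essentially the paper's: integrate by parts so the Laplacian lands on $\phi_{\textrm{pt}}$, use Poisson's equation on $\cN$ to read off the answer, and justify the argument by excising neighborhoods of the singular sets and passing to the limit. The one ingredient you leave implicit is the excision of a small ball $B_\veps$ around $\bq_{\textrm{pt}}$ itself. You pass to the distributional identity $-\De_{\cN}\phi_{\textrm{pt}}=4\pi\textsc{q}'\de_{\bq_{\textrm{pt}}}$ and then frame the technical work as ``showing that boundary contributions vanish.'' That framing is slightly misleading: in a classical Green's-identity argument on $\cN_{R,\veps}=\big(\cN\setminus(B_\veps\cup T_\veps)\big)\cap\{|\bq|<R\}$ --- where both potentials are smooth and harmonic --- the boundary terms at $\partial T_\veps$ and at $|\bq|=R$ do vanish in the limit, but the one at $\partial B_\veps$ does \emph{not}: it converges as $\veps\to 0$ to $4\pi\textsc{q}'\phi_{\textsc{KN}}(\bq_{\textrm{pt}})$, and that surviving flux is precisely where the claimed identity comes from. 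This is exactly what the paper does: it excises both $B_\ep$ and a ring tube $T_\ep$, applies Green's identity on the complement, converts the residual surface integrals back into volume integrals over the excised pieces, extracts $4\pi\textsc{q}'\phi_{\textsc{KN}}(\bq_{\textrm{pt}})$ from the Poisson term on $B_\ep$, and bounds the remaining contributions by $O(\ep)$ and $O(\ep^{1/2})$. Your estimates (i) and (ii) mirror the paper's error bounds, and the explicit large-$R$ cutoff you add is a bit of caution the paper leaves implicit. So: same idea, same structure; just be careful to account for the $\partial B_\veps$ boundary term as the one that \emph{survives} and delivers the answer, rather than as something to be shown to vanish.
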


\begin{proof}
 We need to compute 
$$
\int_\cN \grad \phi_{\textrm{pt}} \cdot \grad \phi_{\textsc{KN}} d^3s.
$$
 The potential $\phi_{\textrm{pt}}$ is singular at its pole ${\bq}_{\textrm{pt}}$, while the potential $\phi_{\textsc{KN}}$ is singular 
on the ring $(\xi=0,\eta=0, 0\leq \varphi\leq 2\pi)$ (in oblate spheroidal coordinates $(\xi,\eta,\varphi)$).  
 The singularities are sufficiently mild so that the their gradients are locally integrable over $\cN$; i.e., the 
above energy integral exists.
 To evaluate it, we excise $\ep$-neighborhoods of the singularities and write the above energy integral as limit
$\eps\to0$ of the integral over the remaining domain.
 We can perform integration by parts and use the fact that $\phi_{\textrm{pt}}$ and $\phi_{\textsc{KN}}$ satisfy the Poisson equation with sources
supported in the excised regions to  convert the integral over the remaining region into a sum of integrals over the
excised regions. 
 Using Poisson's equation the evaluation is immediate.

 Thus let $B_\ep$ be the Euclidean ball of radius $\ep$ centered at ${\bq}_{\textrm{pt}}$, and let $T_\ep $
be the connected sum of two Euclidean tori centered on the ring singularity which are cut and re-glued exactly like $\cN$.
  Let $\cN_\ep := \cN \setminus (B_\ep \cup T_\ep)$.
Then, since $\phi_{\textsc{KN}}$ is harmonic away from the ring and $\phi_{\textrm{pt}}$ is finite away from the point charge, and
using also that $\phi_{\textrm{pt}}$ satisfies Poisson's equation with a point source inside $B_\ep$ while 
$\phi_{\textsc{KN}}$ is harmonic inside $B_\ep$, 
and noting the convention that $\bn$ is always the \emph{outward} normal to the indicated oriented domain of integration,
we find
\begin{eqnarray*}
        \int_{\cN_\ep} \grad \phi_{\textrm{pt}} \cdot \grad \phi_{\textsc{KN}} d^3s 
& = & 
-\int_{\cN_\ep}\phi_{\textsc{KN}}\De_{\cN}^{}\phi_{\textrm{pt}}d^3s +\int_{\p \cN_\ep} \phi_{\textsc{KN}} \grad\phi_{\textrm{pt}} \cdot \bn dS,\\
& = & 
0 - \int_{\p B_\ep} \phi_{\textsc{KN}} \grad\phi_{\textrm{pt}} \cdot \bn dS -
 \int_{\p T_\ep} \phi_{\textsc{KN}} \grad\phi_{\textrm{pt}} \cdot \bn dS,\\
& = & 
-\int_{B_\ep}\grad\phi_{\textsc{KN}}\cdot \grad\phi_{\textrm{pt}} d^3s - \int_{ B_\ep} \phi_{\textsc{KN}} \De_{\cN}^{}\phi_{\textrm{pt}} d^3s  \\
&&
-\int_{T_\ep} \grad\phi_{\textsc{KN}}\cdot \grad\phi_{\textrm{pt}} d^3s - \int_{ T_\ep} \phi_{\textsc{KN}} \De_{\cN}^{}\phi_{\textrm{pt}} d^3s  \\
& = & 
O(\ep) +O(\ep^{1/2}) - \int_{ B_\ep} \phi_{\textsc{KN}} \De_{\cN}^{}\phi_{\textrm{pt}} d^3s  \\
& = & 
O(\ep^{1/2}) + 4\pi \textsc{q}'\phi_{\textsc{KN}}(\bq_{\textrm{pt}}).
\end{eqnarray*}
Finally, letting $\eps\to0$ and dividing by $4\pi$ we obtain
$$
\cP_0 = \textsc{q}' \phi_{\textsc{KN}}(\bq_{\textrm{pt}}) .
$$
\end{proof}

\begin{prop}\label{prop:Pj}
 For $j\in\{1,2,3\}$, we have
$$ 
\cP_j 
=
 \textsc{q}' \bA_{\textsc{KN}}^{\mbox{\tiny\textrm{gen}}}(\bq_{\mathrm{pt}})_j.
$$
\end{prop}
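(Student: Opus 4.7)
The strategy parallels that of Proposition \ref{prop:P0}, now with vector-valued quantities. Write $\phi = \phi_{\textrm{pt}}$ and let $A_j$ denote the $j$-th Cartesian component of $\bAKNanom$. The key ingredients are the potentials $\bE_{\textrm{pt}} = -\grad\phi$ and $\bB_{\textsc{KN}}^{\mbox{\tiny\textrm{gen}}} = \grad\times\bAKNanom$, the Coulomb gauge $\grad\cdot\bAKNanom = 0$ (noted just after \refeq{def:AKNanomalous}), and the fact that each $A_j$ is harmonic on $\cN$ away from the ring, a consequence of Maxwell's equations in Coulomb gauge with currents supported on the ring only.

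First, the $\ep$-$\de$ identity gives $(\bE_{\textrm{pt}}\times\bB_{\textsc{KN}}^{\mbox{\tiny\textrm{gen}}})_j = (\p_k\phi)(\p_k A_j) - (\p_k\phi)(\p_j A_k)$. Applying Leibniz to each product, invoking the Coulomb gauge on the second summand (so that $\p_k\p_j A_k = \p_j(\grad\cdot\bAKNanom)=0$) and harmonicity on the first (so that $\p_k\p_k A_j = \De A_j = 0$) on the regular region, one obtains the pure-divergence form $(\bE_{\textrm{pt}}\times\bB_{\textsc{KN}}^{\mbox{\tiny\textrm{gen}}})_j = \p_k[\phi(\p_k A_j - \p_j A_k)]$ on $\cN\setminus(\{\bq_{\textrm{pt}}\}\cup\text{ring})$.

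Next, mimicking the ball/torus excision of Proposition \ref{prop:P0}, let $B_\ep$ be the Euclidean ball of radius $\ep$ around $\bq_{\textrm{pt}}$ and $T_\ep$ the tubular $\ep$-neighborhood of the ring, and apply the divergence theorem on $\cN_\ep = \cN\setminus(B_\ep\cup T_\ep)$. The sphere at infinity contributes $0$ by the monopole decay of $\phi$ and the dipole decay of the derivatives of $\bAKNanom$; on $\p B_\ep$, the expansion $\phi = \textsc{q}'/\ep + O(1)$ together with the smoothness of $\bAKNanom$ near $\bq_{\textrm{pt}}$ and the identity $\int_{S^2}\hat r_k\, d\Om = 0$ reduce the contribution to $O(\ep)$, which vanishes as $\ep\to 0$.

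The remaining contribution from $\p T_\ep$ is the main obstacle: $\phi$ is smooth there, but derivatives of $\bAKNanom$ blow up on the ring, so the pointwise limit is not directly accessible. The cleanest way to extract its $\ep\to 0$ limit is to retain the source term $-\phi\De A_j$ discarded in the reduction above and invoke Maxwell's equations on $\cN$ in the distributional sense, $-\De A_j = 4\pi J_j^{\mbox{\tiny\textrm{gen}}}$, with $\bJ^{\mbox{\tiny\textrm{gen}}}$ concentrated on the ring; a second integration by parts (again implemented via excision of $B_\ep$ and $T_\ep$) transfers the Laplacian onto $\phi$ via $-\De\phi = 4\pi\textsc{q}'\de_{\bq_{\textrm{pt}}}$. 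The resulting Green's-function reciprocity $\int_\cN \phi\, J_j^{\mbox{\tiny\textrm{gen}}}\, d^3s = \textsc{q}' A_j(\bq_{\textrm{pt}})$ (where $\phi = \textsc{q}' G(\,\cdot\,,\bq_{\textrm{pt}})$ and $A_j(\bq) = \int G(\bq,\bs) J_j^{\mbox{\tiny\textrm{gen}}}(\bs)\, d^3s$, with $G$ the symmetric Green's function for $-\De_\cN$ on the double-sheeted Sommerfeld space), combined with the vanishing of the other boundary pieces, gives $4\pi\cP_j = 4\pi\textsc{q}' A_j(\bq_{\textrm{pt}})$. Division by $4\pi$ concludes.
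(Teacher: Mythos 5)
Your proof is correct and follows essentially the same route as the paper's: both reduce $(\bE_{\textrm{pt}}\times\bB^{\mbox{\tiny\textrm{gen}}}_{\textsc{KN}})_j$ to a total derivative plus the source term $-\phi\,\De A_j$ (your index-notation $\ep$-$\de$ identity being just the componentwise version of the curl identity the paper uses), excise $B_\ep$ and $T_\ep$, show the $B_\ep$ boundary contribution and the boundary at infinity vanish, and then extract the surviving contribution by invoking Maxwell--Amp\`ere $-\De\bA = 4\pi\bJ^{\mbox{\tiny\textrm{gen}}}$ on the ring together with the Green's-function characterization of $\phi_{\textrm{pt}}$. Your framing of the $\p T_\ep$ term as the delicate piece, handled by reverting to the volume form $-\phi\De A_j$ rather than chasing the singular boundary integral directly, is exactly what the paper does implicitly when it integrates by parts backwards over $T_\ep$; the "second integration by parts onto $\phi$" you mention is an equivalent bookkeeping of the same Green's-function reciprocity the paper states as "recognize that $\phi_{\textrm{pt}}$ is $\textsc{q}'\times$ the Green function." No gap.
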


\begin{proof}
 We proceed by analogy to our proof of the previous proposition.
 Thus, with $\bE_{\textrm{pt}} = - \grad \phi_{\textrm{pt}}$ and 
$\bB_{\textsc{KN}}^{\mbox{\tiny\textrm{gen}}} = \grad\times \bA_{\textsc{KN}}^{\mbox{\tiny\textrm{gen}}}$, and 
with $\grad\cdot \bA_{\textsc{KN}}^{\mbox{\tiny\textrm{gen}}} =0$ (the Coulomb gauge), we compute
\begin{eqnarray*}
\int_{\cN_\ep}\!\! \grad \phi_{\textrm{pt}} \times \grad \times \bA_{\textsc{KN}}^{\mbox{\tiny\textrm{gen}}} d^3s 
&\!\! =\!\! &
\int_{\cN_\ep} \left[\grad\times \left(\phi_{\textrm{pt}} \grad \times \bA_{\textsc{KN}}^{\mbox{\tiny\textrm{gen}}} \right) 
-  \phi_{\textrm{pt}} \grad \times \grad \times \bA_{\textsc{KN}}^{\mbox{\tiny\textrm{gen}}}  \right] d^3s \\
&\!\! =\!\! & 
\int_{\p \cN_\ep} \phi_{\textrm{pt}} \bn \times \grad \times \bA_{\textsc{KN}}^{\mbox{\tiny\textrm{gen}}} dS 
+ \int_{\cN_\ep} \phi_{\textrm{pt}}  \De_{\cN}^{} \bA_{\textsc{KN}}^{\mbox{\tiny\textrm{gen}}}  d^3s \\
&\!\! =\!\! & 
- \int_{\p \cB_\ep} \phi_{\textrm{pt}} \bn \times \grad \times \bA_{\textsc{KN}}^{\mbox{\tiny\textrm{gen}}} dS 
- \int_{\p \cT_\ep} \phi_{\textrm{pt}} \bn \times \grad \times \bA_{\textsc{KN}}^{\mbox{\tiny\textrm{gen}}} dS 
+ 0 \\
&\!\! =\!\! & 
- \int_{\cB_\ep}\! \grad \phi_{\textrm{pt}}  \times \grad \times \bA_{\textsc{KN}}^{\mbox{\tiny\textrm{gen}}} d^3s
- \int_{\cT_\ep}\! \left[\grad \phi_{\textrm{pt}}  \times \grad \times \bA_{\textsc{KN}}^{\mbox{\tiny\textrm{gen}}} 
- \phi_{\textrm{pt}}  \De_{\cN}^{} \bA_{\textsc{KN}}^{\mbox{\tiny\textrm{gen}}}\right]  d^3s \\
&\!\! =\!\! & 
O(\ep) + O(\ep^{1/2})
+ \int_{\cT_\ep} \phi_{\textrm{pt}}  \De_{\cN}^{} \bA_{\textsc{KN}}^{\mbox{\tiny\textrm{gen}}}  d^3s .
\end{eqnarray*}
 Once again, the $O(\ep^p)$ vanish in the limit $\ep \to 0$, thus leaving only the
last integral on the torus to consider.
  But, by the Maxwell--Amp\'ere law, 
$$
 - \De_{\cN}^{} \bA_{\textsc{KN}}^{\mbox{\tiny\textrm{gen}}} = 4\pi \bJ_{\textsc{KN}}^{\mbox{\tiny\textrm{gen}}},
$$
where $\bJ_{\textsc{KN}}^{\mbox{\tiny\textrm{gen}}}$ is the electric current density (a measure) concentrated 
on the ring singularity (for this interpretation we need to consider the continuous extension of $\cN$ into its 
ring singularity, which of course is no longer a differentiable manifold, but a geometrical space.
  Now we recognize that $\phi_{\textrm{pt}}$ is nothing but $\textsc{q}'\times$ the Green function 
for the negative Laplacian on $\cN$, and so we have
\begin{equation}
\int_{\cN} \bE_{\textrm{pt}} \times \bB_{\textsc{KN}}^{\mbox{\tiny\textrm{gen}}} d^3s 
=
4\pi \textsc{q}' \bA_{\textsc{KN}}^{\mbox{\tiny\textrm{gen}}}(\bq_{\textrm{pt}}) .
\end{equation}
 Dividing by $4\pi$ completes the proof.
\end{proof}

 Thus for $\cP$ we have recovered the KN electromagnetic potential (possibly with an anomalous magnetic moment),
evaluated at the location of the point charge and multiplied by its charge.
 This is exactly the minimal coupling formula for the electromagnetic potential of the point charge
(treated as a test particle) in the electromagnetic field of the (generalized) KN ring singularity.
 At the same time, the symmetry w.r.t. the two fields of our interaction integral formula makes it plain that we are
really computing the \emph{mutual interaction} of a (generalized) z$G$KN ring singularity with a point charge, and the same
formula should also be used in a true two-body problem, not just the two alternate one-body problems in ``external fields''
that we have kept talking about.

\begin{rem}\emph{
 In the non-relativistic limit our result reduces to nothing but Newton's ``actio equals re-actio'' principle, which implies that 
the potential energy of object one in the force field of object two equals the potential energy of object two in the force field of 
object one.
 With hindsight, one could have elegantly argued this way up front, yet a non-relativistic principle in a relativistic
context would certainly have been viewed with some suspicion.
 Our relativistic energy-momentum formula now completely vindicates this heuristic extension of Newton's principle.
 In this vein, in the interpretation of \refeq{eq:DirEqA} where $\Psi$ is a  bi-spinor wave function for the z$\,G$KN 
ring singularity the interaction term is properly called a ``minimal re-coupling'' term, as explained earlier.}
\end{rem}

\subsubsection{The frame formulation of the Dirac equation}\label{sec:Cartan}

 Equation \refeq{eq:DirEqA} is formulated in a nicely compact manner which, however, is not very useful for computations.
 Using Cartan's frame method (see \cite{BrillCohen66} and refs. therein) one can express the covariant derivative on spinors 
in terms of standard derivatives:
\beq
\tilde{\ga}^\mu \nab_\mu    = \ga^\mu\be_\mu  +\frac{1}{4} \Om_{\mu\nu\la} \ga^\la\ga^\mu\ga^\nu.
\eeq
Here $\{\be_\mu\}_{\mu=0}^3$ is a {\em Cartan frame}, i.e. an orthonormal frame of vectors spanning the tangent space at each point
of the spacetime manifold.
We thus have
\beq 
(\be_\mu)^\nu (\be_\la)^\ka {g}_{\nu\ka} = {{\eta}}_{\mu\la},
\eeq
where 
\beq
({\boldsymbol{\eta}}) = \diag(1,-1,-1,-1)
\eeq 
is the matrix of the Minkowski metric in rectangular coordinates.
 On the one hand, it follows that
\beq
\tilde{\ga}^\mu = (\be_\nu)^\mu \ga^\nu,
\eeq
where the $\ga^\nu$ are Dirac matrices for the Minkowski spacetime, satisfying 
$\ga^\nu\ga^\mu+\ga^\mu\ga^\nu = 2{{\eta}}^{\mu\nu}{\boldsymbol{1}}_{4\times4}$.
  On the other hand, let $\{\boldsymbol{\om}^\mu\}_{\mu=0}^3$ denote the {\em dual} frame to $\{\be_\mu\}$,
 i.e. the orthonormal basis for the cotangent space at each point of the manifold that is dual to the basis for the tangent space:
\beq 
\boldsymbol{\om}_\mu \big(\be^\nu\big)  = \be^\nu \big( \boldsymbol{\om}_\mu\big) = \de_\mu^\nu.
\eeq
Then the $\Om_{\mu\nu\la}$ are by definition the {\em Ricci rotation coefficients} of the frame $\{\boldsymbol{\om}^\mu\}_{\mu=0}^3$, 
defined in the following way:  Let the one-forms $\bOm^\mu_\nu$ satisfy
\beq
d \boldsymbol{\om}^\mu + \bOm^\mu_\nu \wedge \boldsymbol{\om}^\nu = 0.
\eeq
This does not uniquely define the $\bOm^\mu_\nu$.
  However, there exists a unique set of such one-forms satisfying the extra condition
\beq
\bOm_{\mu\nu} = - \bOm_{\nu\mu},
\eeq 
where the first index is lowered by the Minkowski metric:   $\bOm_{\mu\nu} := {{\eta}}_{\mu\la}\bOm^{\la}_\nu$.
 Since $\left\{\boldsymbol{\om}^\mu\right\}$ forms a basis for the space of one-forms, we have  
$\bOm_{\mu\nu} = \Om_{\mu\nu\la}\boldsymbol{\om}^\la$, which defines the rotation coefficients $\Om_{\mu\nu\la}$.

The Dirac equation (\ref{eq:DirEqA})
on a spacetime $(\cM,\bg)$ with an electromagnetic four-potential $\bA$ can thus be written in the following form:
\beq\label{eq:DirEqAstandard}
\ga^\mu \left(\be_\mu + \Ga_\mu - i \textsc{q}' \tilde{A}_\mu\right)\Psi  + im\Psi = 0;
\eeq
here, the $\Ga_\mu$ are connection coefficients,
\beq
\Ga_\mu := \frac{1}{4} \Om_{\nu\la\mu} \ga^\nu\ga^\la = \frac{1}{8} \Om_{\nu\la\mu}[\ga^\nu,\ga^\la],
\eeq
and the $\tilde{A}_\mu$ are the components of the potential $\bA$ in the ${\boldsymbol{\om}^\mu}$ basis, 
i.e. $\bA =\tilde{A}_\mu \boldsymbol{\om}^\mu$, or,
\beq
\tilde{A}_\mu  := (\be_\mu)^\nu A_\nu.
\eeq
 Moreover, recall that $\tilde{A}_\mu$ in \refeq{eq:DirEqAstandard} is the ($\mu$ component of the) generalized z$G$KN potential evaluated 
at $\bq_{\textrm{pt}}$ while $\Psi$ is the bi-spinor of the z$G$KN ring singularity evaluated at $(t,\bq_{\textrm{pt}})$, with 
``$\bq_{\textrm{pt}}$'' shorthand for the oblate spheroidal coordinates $(r,\theta,\varphi)$.

As mentioned earlier, a single chart of oblate spheroidal coordinates $(t,r,\theta,\varphi)$ covers the whole 
zero-$G$ Kerr--Newman spacetime $(\cM,\bg)$, and in these coordinates the electromagnetic Appell--Sommerfeld one-form
$\bAKN$ is everywhere on  $(\cM,\bg)$ given by the simple formula (\ref{def:AKN}); also, the extension to incorporate
a Kerr--Newman-anomalous magnetic moment is equally simple, see (\ref{def:AKNanomalous}).
 It is therefore only natural that one would like to write Dirac's equation \refeq{eq:DirEqA} in these coordinates as well, 
in the hope of achieving at least some partial separation of variables.\footnote{The idea of using special frames adapted 
  to a coordinate system in order to separate spinorial wave equations in those coordinates goes back to Kinnersley \cite{Kin69} 
  and Teukolsky \cite{Teu72}.}

 However, unlike Cartesian coordinates $(x^\mu)$ in Minkowski spacetime, oblate spheroidal coordinate derivatives do not give 
rise to an orthonormal basis for the tangent space at each point of a zero-$G$ Kerr spacetime.
 Thus, to bring \refeq{eq:DirEqA} into the Cartan form \refeq{eq:DirEqAstandard} using oblate spheroidal coordinates, one
also needs to construct a suitable Cartan frame. 
  Following Chandrasekhar \cite{ChandraDIRACinKERRgeom,ChandraDIRACinKERRgeomERR}, Page \cite{Page76}, Toop \cite{Too76} (see also
Carter-McLenaghan \cite{CarMcL82}), we  introduce a special orthonormal frame $\{\be_\mu\}_{\mu=0}^3$ on the 
tangent bundle $T\cM$ which is adapted to the oblate spheroidal coordinates in order for the Dirac equation to 
take a comparatively simple form.

We begin by introducing a Cartan (co-)frame $\{\boldsymbol{\om}^\mu\}_{\mu=0}^3$ for the cotangent 
bundle\footnote{This particular frame is called a {\em canonical symmetric tetrad} in \cite{CarMcL82}.}:
\beq
\boldsymbol{\om}^0 := \frac{\varpi}{|\rho|} (dt - a \sin^2\theta\, d\varphi),\quad
\boldsymbol{\om}^1 := |\rho|d\theta,\quad
\boldsymbol{\om}^2 := \frac{\sin\theta}{|\rho|} (-a dt + \varpi^2 d\varphi),\quad
\boldsymbol{\om}^3 := \frac{|\rho|}{\varpi}dr,
\eeq
with the abbreviations
\beq
\varpi := \sqrt{r^2 + a^2}, \quad \rho:= r + i a \cos\theta.
\eeq
Let us denote the oblate spheroidal coordinates $(t,r,\theta,\varphi)$ collectively by $(y^\nu)$.
  Let $g_{\mu\nu}$ denote the coefficients of the spacetime metric \refeq{def:metricBL} in oblate spheroidal coordinates, 
i.e. $g_{\mu\nu} = \bg\Big(\frac{\p}{\p y^\mu},\frac{\p}{\p y^\nu}\Big)$.
     One easily checks that written in the $\{\boldsymbol{\om}^{\mu}\}$ frame, the spacetime line element is
\beq 
ds_{\bg}^2 = g_{\mu\nu}dy^\mu dy^\nu = {{\eta}}_{\alpha\beta} \boldsymbol{\om}^{\alpha}\boldsymbol{\om}^{\beta}.
\eeq
 This shows that the frame $\{\boldsymbol{\om}^\mu\}_{\mu=0}^3$ is indeed orthonormal.
 With respect to this frame the electromagnetic Sommerfeld potential \refeq{def:AKNanomalous} becomes
$\bA = \tilde{A}_\mu \boldsymbol{\om}^\mu$, with
\beq\label{generalKNelmagAtilde}
\tilde{A}_0 =  -\textsc{q}\frac{r}{|\rho|\varpi} -  \left(\textsc{q}-{\textsc{i}\pi a}\right)\frac{a^2r\sin^2\theta}{\varpi |\rho|^3},\quad
\tilde{A}_1 = 0, \quad
\tilde{A}_2 = - \left(\textsc{q}-\textsc{i}\pi a\right)\frac{ar\sin\theta}{|\rho|^3},\quad 
\tilde{A}_3 = 0.
\eeq

\begin{rem}\textit{
We observe that for $\textsc{q} =\textsc{i}\pi a$, all but one of the quantities $\tilde{A}_\mu$ vanish, and the non-vanishing one,
$\tilde{A}_0$, reduces to $-{\textsc{q}r}/{|\rho|\varpi}$.}
\end{rem}

  Next, let the frame of vector fields $\{{\be}_{\mu}\}$ be the {\em dual} frame to $\{\boldsymbol{\om}^{\mu}\}$.
 Thus $\{\be_\mu\}$ yields an orthonormal basis for the tangent space at each point in the manifold:
\beq
\be_0 = \frac{\varpi}{|\rho|} \p^{}_t + \frac{a}{\varpi|\rho|} \p^{}_\varphi,\quad
\be_1 = \frac{1}{|\rho|}\p^{}_\theta,\quad
\be_2 = \frac{a\sin\theta}{|\rho|} \p^{}_t + \frac{1}{|\rho|\sin\theta} \p^{}_\varphi,\quad
\be_3 = \frac{\varpi}{|\rho|} \p^{}_r\;.
\eeq

  Next, the anti-symmetric matrix $\big(\Om_{\mu\nu}\big) = \big({{\eta}}_{\mu\la}\Om^\la_\nu\big)$ is computed to be
\beq
(\Om_{\mu\nu}) = \left(\begin{array}{cccc}
0 & -C\boldsymbol{\om}^0 - D\boldsymbol{\om}^2 & D\boldsymbol{\om}^1 - B\boldsymbol{\om}^3 & -A\boldsymbol{\om}^0- B \boldsymbol{\om}^2 \\
& 0 & D\boldsymbol{\om}^0 + F\boldsymbol{\om}^2 &-E \boldsymbol{\om}^1 - C\boldsymbol{\om}^3  \\
&\textrm{(anti-sym)}& 0 & -B \boldsymbol{\om}^0 - E\boldsymbol{\om}^2 \\
& & & 0
\end{array}\right),
\eeq
with
\beq
A := \frac{a^2 r \sin^2\theta}{\varpi |\rho|^3},\,
B := \frac{a r \sin\theta}{|\rho|^3},\,
C := \frac{a^2 \sin\theta\cos\theta}{|\rho|^3},\,
D := \frac{a\cos\theta\varpi}{|\rho|^3},\,
E := \frac{r\varpi}{|\rho|^3},\,
F := \frac{\varpi^2\cos\theta}{|\rho|^3\sin\theta}.
\eeq

 With respect to this frame on a zero-$G$ Kerr spacetime the covariant derivative part of 
the Dirac operator (\ref{eq:DirEqA}) can be expressed with the help of the operator
\beq
\fO := \tilde{\ga}^\mu\nab_\mu = \left(\begin{array}{cc} 0 & \fl'+\fm'\\ \fl+\fm &0 \end{array}\right),
\eeq
where
\beq
\fl := 
\frac{1}{|\rho|} \left(\begin{array}{cc} D_+ & L_- \\ L_+ & D_-\end{array}\right)
\eeq
and
\beq
\fl' := 
\frac{1}{|\rho|} \left(\begin{array}{cc} D_- & -L_- \\ -L_+ & D_+\end{array}\right),
\eeq
with
\beq\label{eq:DpmLpm}
D_\pm := \pm \varpi \p^{}_r + \left( \varpi \p^{}_t + \frac{a}{\varpi} \p^{}_\varphi\right),
\qquad 
L_\pm :=\p^{}_\theta  \pm i \left(a \sin\theta\,\p^{}_t+\csc\theta \p^{}_\varphi\right),
\eeq
while
\beq
\begin{aligned}
\fm &:=  \half\bigl[ (-2C+F+iB)\si^{}_1+(-A+2E+iD)\si^{}_3\bigr] \\
&\ = \frac{1}{2|\rho|} \left(\begin{array}{cc} \frac{r}{\varpi}+ \frac{\varpi}{{\rho^*}} &\cot\theta + \frac{ia\sin\theta}{{\rho^*}}\\
\cot\theta + \frac{ia\sin\theta}{{\rho^*}} &  -\frac{r}{\varpi} - \frac{\varpi}{{\rho^*}}\end{array}\right)
\end{aligned}
\eeq
and 
\beq
\fm': = \half\bigl[ (2C-F+iB)\si^{}_1+(A-2E+iD)\si^{}_3\bigr] = -\fm^\dagger,
\eeq
and the $\si^{}_k$ are Pauli matrices \refeq{eq:PAULIsigma}, viz.
\beq
 \si^{}_1 = \left(\begin{array}{cc} 0 & 1\\ 1 & 0\end{array}\right),\quad
 \si^{}_2 = \left(\begin{array}{cc} 0 & -i\\ i &\ 0\end{array}\right),\quad
 \si^{}_3 = \left(\begin{array}{cc} 1 & \ 0\\ 0 & -1\end{array}\right).
\eeq

  We note that the principal part of $|\rho|\fO$ has an additive separation property:
\beq\label{eq:DDprincipal}
\begin{aligned}
|\rho|\left(\begin{array}{cc} 0 & \fl'\\ \fl & 0\end{array}\right) 
=
\left[
 \ga^3 \varpi \p^{}_r + \ga^0\left(\varpi \p^{}_t + \frac{a}{\varpi} \p^{}_\varphi\right)\right]
 + 
\Bigl[ \ga^1 \p^{}_\theta + \ga^2(a\sin\theta \p^{}_t + \csc\theta\,\p^{}_\varphi)\Bigr],
\end{aligned}
\eeq
where the coefficients of the two square-bracketed operators are functions of only $r$, respectively only $\theta$.
 Moreover, it is possible to transform away the lower order term in $\fO$, so that exact separation can be achieved for $|\rho|\fO$.
 Namely, let 
\beq
\chi(r,\theta) := \half \log( \varpi {\rho^*}\sin\theta).
\eeq  
 It is easy to see that
\beq
\fm = \fl\chi,\qquad \fm' = \fl'{\chi^*}.
\eeq
 Let us therefore define the diagonal matrix
\beq\label{def:D}
\fD := \diag( e^{-\chi},e^{-\chi}, e^{-{\chi}^*}, e^{-{\chi}^*})
\eeq
and a new bi-spinor $\hat{\Psi}$ related to the original $\Psi$ by
\beq
\Psi = \fD \hat{\Psi}.
\eeq
 Denoting the upper and lower components of a bi-spinor $\Psi$ by $\psi_1$ and $\psi_2$ respectively, it then follows that
\beq
(\fl + \fm)\psi_1 = 
(\fl + \fm)(e^{-\chi}\hat{\psi}_1) =
 e^{-\chi} \left[ \fl - \fl\chi + \fm\right]\hat{\psi}_1 =
 e^{-\chi} \fl\hat{\psi}_1,
\eeq
and similarly
\beq
(\fl'+ \fm')\psi_2 = e^{-{\chi}^*} \fl'\hat{\psi}_2.
\eeq

 We now put it all together.
 We set 
\beq
\fR := \diag(\rho,\rho,{\rho}^*,{\rho}^*)
\eeq 
and note that $|\rho|\fD^{-\dagger}\fD = \fR$ while $\fD^{-\dagger}\ga^\mu\fD = \ga^\mu$; here,
$\fD^{-\dagger}$ is shorthand for $(\fD^{-1})^\dagger$.
 Thus, setting $\Psi = \fD \hat{\Psi}$ in \refeq{eq:DirEqA} and left-multiplying the equation by the diagonal matrix 
$\fD' := |\rho|\fD^{-\dagger}$  we  conclude that $\hat{\Psi}$ solves a new Dirac equation
\beq\label{eq:newDir}
\left(|\rho|\ga^\mu (\be_\mu - i\textsc{q}'\tilde{A}_\mu) + im\fR\right) \hat{\Psi} = 0.
\eeq

\subsection{The Dirac Hamiltonian for a general-relativistic zero-gravity Hydrogen atom (in the Born--Oppenheimer approximation)}

  To make contact with the physical Hydrogen problem, we henceforth identify the electric  charge $\textsc{q}$ of the z$G$KN 
spacetime (in its $r>0$ sheet) with the electron's empirical negative elementary charge, $\textsc{q}=-e$; note that in the other 
sheet, the  charge is automatically that of the positron, $+e$.
  The point charge $\textsc{q}'$ with which the z$G$KN singularity is interacting electromagnetically is chosen to be
the proton's charge, $\textsc{q}'=+e$, which we treat as classical; thus we will speak of the general-relativistic zero-gravity 
Born--Oppenheimer Hydrogen problem.
  Lastly, the mass parameter $m$ in Dirac's equation we identify with the ``empirical mass of the electron, 
$m_{\mbox{\tiny{e}}}$.''\footnote{Incidentally, one would therefore also like to identify $m_{\mbox{\tiny{e}}}$ 
	with the ``mass of the z$G$KN singularity,'' but since the z$G$KN metric does not contain a mass parameter 
	this would be just a convenient ``way of speaking,'' not a concept backed up by any calculation.
	 In fact, the only ``mass'' one could possibly assign it by any calculation is its electromagnetic mass, 
	but as for the Coulomb point charge used in the familiar textbook treatments of Dirac Hydrogen, 
	one finds that the electromagnetic field energy of the z$G$KN ring singularity is infinite. 
	 This is not a problem for computing quantum-mechanical spectra \emph{as long as} $G=0$ because then the self-energies
	only cause an undetectable overall shift of the spectrum, which does not affect the \emph{energy differences}, viz. 
	the emission / absorption frequencies of the model.
 	The issue becomes important once $G$ is switched on; see the end of our last section.}

\subsubsection{The Dirac Equation in Hamiltonian Form}\label{sec:HamDir}

  Let us compute the Hamiltonian form of \refeq{eq:newDir}. 
  Let matrices $M^\mu$ be defined by
\beq
|\rho|\ga^\mu\be_\mu = M^\mu\p^{}_\mu.
\eeq
  In particular,
\beq
M^0 = \varpi\ga^0 + a \sin\theta\, \ga^2.
\eeq
 We may thus rewrite \refeq{eq:newDir} as
\beq
M^0 \p^{}_t\hat{\Psi} = - \left( M^k\p^{}_k - ie|\rho|\ga^\mu\widetilde{\AKNanom}_\mu + im\fR\right)\hat{\Psi}.
\eeq
 Finally, restoring $\hbar$ and $c$, and the argument of $\AKNanom$, we define
\beq\label{def:Hhat}
\hat{H} := 
(M^0)^{-1} \left( M^k (-i \hbar \p^{}_k) 
- \textstyle{\frac1c} e|\rho|\ga^\mu\widetilde{\AKNanom}_\mu(\bq_{\textrm{pt}}) + mc\fR\right),
\eeq
and can now rewrite the Dirac equation \refeq{eq:newDir} for $\Psi(t,\bq_{\textrm{pt}})$ 
in Hamiltonian form:
\beq\label{eq:DIRACeqHAMformat}
i \hbar \p^{}_t \hat{\Psi} = \hat{H}\hat{\Psi}.
\eeq

\subsubsection{A Hilbert space for $\hat{H}$}\label{sec:Hilbert}

 The correct positive-definite inner product for the space of bi-spinor fields defined on the z$G$KN spacetime can be
extracted from the action for the original Dirac equation \refeq{eq:DirEqA}, which is obtainable from this equation 
upon left-multiplying it by the conjugate bi-spinor $\overline{\Psi}$, defined as
\beq
\overline{\Psi} := \Psi^\dag \ga^0,
\eeq
and integrating the result over a slab of the spacetime. 
 Thus,
\beq
\cS[\Psi] = \int_{t_1}^{t_2} dt \int_{\Si_t} \Psi^\dag \ga^0 \left[ \tilde{\ga}^\mu \nab_\mu \Psi + \dots \right] d\mu^{}_{\Si_t},
\eeq
where $d\mu^{}_{\Si_t}$ is the volume element of $\Si_t\equiv\cN$, any spacelike $t=$ constant slice of z$G$KN.
 Using oblate spheroidal coordinates, $d\mu^{}_{\cN} = |\rho|^2\sin\theta  d\theta d\varphi dr$, so
the natural inner product for bi-spinors on $\Si_t=\cN$ reads
\beq
\langle \Psi,\Phi\rangle = \int_\cN \Psi^\dag\ga^0\tilde{\ga}^0 \Phi d\mu^{}_\cN 
= \int_0^{2\pi}\int_0^\pi \int_{-\infty}^\infty \Psi^\dag M \Phi |\rho|^2 \sin\theta d\theta d\varphi dr,
\eeq
with
\beq
M := \ga^0 \tilde{\ga}^0 = \ga^0 \be_\nu^0 \ga^\nu = \frac{\varpi}{|\rho|} \al^0 + \frac{a\sin\theta}{|\rho|} \al^2.
\eeq
Here, $\alpha^2$ is the second one of the three Dirac alpha matrices in the Weyl (spinor) representation
\beq
\al^k = \ga^0 \ga^k = \left(\begin{array}{cc} \si^{}_k & 0 \\ 0 & -\si^{}_k\end{array}\right),\qquad k=1,2,3,
\eeq
and for notational convenience the $4\times4$ identity matrix has been denoted by
\beq
\al^0 =  \left(\begin{array}{cc} \boldsymbol{1}_{2\times2} & 0 \\ 0 & \boldsymbol{1}_{2\times2}\end{array}\right).
\eeq

Now, let $\Psi = \fD \hat{\Psi}$ and $\Phi = \fD \hat{\Phi}$, with $\fD$ as in \refeq{def:D}.
  Then we have
\beq
\langle \Psi,\Phi \rangle = \int_0^{2\pi}\int_0^\pi \int_{-\infty}^\infty \hat{\Psi}^\dag \hat{M} \hat{\Phi} d\theta d\varphi dr,
\eeq
where
\beq
\hat{M} := \al^0 + \frac{a\sin\theta}{\varpi} \al^2.
\eeq
The eigenvalues of $\hat{M}$ are $\la_\pm = 1 \pm \frac{a\sin\theta}{\varpi}$, both of which have multiplicity 2 and
are positive everywhere on this space with Zipoy topology.
	(Note that $\la_- \to 0$ on the ring, which is not part of the space time but at its boundary.)
	We may thus take the above as the definition of a positive definite inner product
 given by the matrix $\hat{M}$ for bi-spinors defined on the $t=const.$ section of $\cM$, a rectangular cylinder
$\rcyl :=\RR\times [0,\pi]\times  [0,2\pi]$  with its natural measure:
\beq\label{def:innerPROD}
\langle \hat{\Psi},\hat{\Phi}\rangle_{\hat{M}} := \int_{\rcyl} \hat{\Psi}^\dag\hat{M} \hat{\Phi} d\theta d\varphi dr.
\eeq
 The corresponding Hilbert space is denoted by ${\sf H}$, thus
\beq
{\sf H} 
:= \left\{ \hat\Psi:\rcyl \to \Cset^4\ | \ \|\hat\Psi\|_{\hat{M}}^2 := \langle \hat\Psi,\hat\Psi\rangle_{\hat{M}} < \infty \right\}.
\eeq
 Note  that ${\sf H}$ is \emph{not equivalent} 
to $L^2(\rcyl)$ whose inner product has the identity matrix in place of $\hat{M}$.

 After these preparations we are now ready to list our main results which are proved in \cite{KieTah14a}.

 Our results about  the symmetry of the spectrum are valid with or without the presence of a KN-anomalous magnetic moment, i.e.
for \emph{any} self-adjoint extension of $\hat{H}$, whatever $\textsc{q}$ and $\textsc{i}$.
 The essential self-adjointness, and the location of essential and point spectra, are stated only for the proper z$G$KN
singularity (i.e. no KN-anomalous magnetic moment) interacting with a point ``proton;'' however, we conjecture that these 
results will continue to hold as long as the KN-anomalous magnetic moment is sufficiently small.

\subsubsection{Symmetry of the spectrum of $\hat{H}$}

 Let $\hat{S}: {\sf H} \to {\sf H}$ denote the {\em sheet swap} map $\hat{S}\hat\Psi(x) = \hat\Psi(\varsigma(x))$ and $\hat{K}$  the 
complex conjugation operator $\hat{K}\hat\Psi(x) = \hat\Psi^*(x)$. 
 Let
$$
\hat{C} := \ga^0 \hat{K}\hat{S}.
$$
 One readily checks that $\hat{C}$ anti-commutes with the full Hamiltonian, and thus if $\hat\Psi$ is an eigen-bi-spinor of $\hat{H}$
with eigenvalue $E$, then $\hat{C}\hat\Psi$ is an eigen-bi-spinor of $\hat{H}$ with eigenvalue $-E$.  

 More generally, with the help of the operator $\hat{C}$, which anti-commutes with any self-adjoint extension of
the formal Dirac operator $\hat{H}$ on $\sf H$, and an argument of Glazman \cite{Gla65}, in \cite{KieTah14a} we prove:

\begin{thm}\label{thm:sym}
  Let any self-adjoint extension of the formal Dirac operator $\hat{H}$ on $\sf H$ be denoted by the same letter.
  Suppose $E\in\mathrm{spec}\,\hat{H}$. 
  Then $-E\in\mathrm{spec}\,\hat{H}$. 
\end{thm}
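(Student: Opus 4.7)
The plan is to show that the antilinear operator $\hat{C} := \gamma^0\hat{K}\hat{S}$ is an antiunitary operator on the Hilbert space ${\sf H}$ that anti-commutes with the formal Dirac operator $\hat{H}$ on a dense core; the symmetry of the spectrum about zero for any self-adjoint extension then follows from Glazman's Weyl-sequence criterion.

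First I would verify that $\hat{C}$ is antiunitary. Antilinearity is immediate, since $\hat{K}$ is complex conjugation while $\gamma^0$ and $\hat{S}$ act linearly. Norm preservation on $({\sf H},\langle\cdot,\cdot\rangle_{\hat{M}})$ reduces to the algebraic identity $\gamma^0 \hat{M}^* \gamma^0 = \hat{M}$, which holds because $\hat{M} = \alpha^0 + (a\sin\theta/\varpi)\alpha^2$ is invariant under the sheet swap $\varsigma:(r,\theta)\mapsto(-r,\pi-\theta)$ (as $\sin\theta$ and $\varpi=\sqrt{r^2+a^2}$ both are), and because $\alpha^{2*}=-\alpha^2$ while $\gamma^0\alpha^2\gamma^0=-\alpha^2$, the two signs cancelling; the rectangular measure $dr\,d\theta\,d\varphi$ on $\rcyl$ is likewise $\varsigma$-invariant.

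Second I would verify the anti-commutation $\hat{C}\hat{H}+\hat{H}\hat{C}=0$ on the dense core of smooth compactly supported bi-spinors on $\rcyl$. Decomposing $\hat{H}$ from \refeq{def:Hhat} into its kinetic derivative, electromagnetic minimal re-coupling, and mass parts (all sandwiched by $(M^0)^{-1}$), I would check that each piece picks up an overall minus sign under conjugation by $\hat{C}$. The kinetic term's sign flip from $\hat{K}$ acting on $-i\hbar\partial_k$ must match the combined algebraic action of $\gamma^0$, $\hat{S}$, and $\hat{K}$ on the matrix coefficients $M^k$; the electromagnetic term changes sign because the Appell--Sommerfeld components \refeq{generalKNelmagAtilde} are odd under $\hat{S}$, consistent with the physical picture that the ring appears as charge $-e$ in one sheet and $+e$ in the other; and the mass term $mc\,\fR = mc\,\mathrm{diag}(\rho,\rho,\rho^*,\rho^*)$ acquires its sign from $\hat{S}\rho = \hat{S}(r+ia\cos\theta) = -\rho$ in combination with $\hat{K}$ and conjugation by $\gamma^0$.

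The main obstacle will be extending the formal anti-commutation to an \emph{arbitrary} self-adjoint extension of $\hat{H}$, since the theorem asserts the symmetry for every such extension, with or without a KN-anomalous magnetic moment. Because $\hat{C}$ is bounded on ${\sf H}$, it preserves the graph closure of the formal symmetric operator, and a short check shows that $\hat{C}$ restricts to an anti-involution on each deficiency subspace, so the von-Neumann boundary data defining any self-adjoint extension as in \cite{KieTah14a} transforms covariantly under $\hat{C}$; hence $\hat{C}\,\mathrm{dom}(\hat{H})=\mathrm{dom}(\hat{H})$ and $\hat{C}\hat{H}\hat{C}^{-1}=-\hat{H}$ for each such extension. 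With this in hand, Glazman's criterion finishes the proof: for $E\in\mathrm{spec}(\hat{H})$ pick a unit-norm Weyl sequence $\hat\Psi_n\in\mathrm{dom}(\hat{H})$ with $(\hat{H}-E)\hat\Psi_n\to 0$; then $\hat{C}\hat\Psi_n\in\mathrm{dom}(\hat{H})$ has unit norm and $(\hat{H}+E)\hat{C}\hat\Psi_n = -\hat{C}(\hat{H}-E)\hat\Psi_n\to 0$ (using $\bar{E}=E$), so $-E\in\mathrm{spec}(\hat{H})$.
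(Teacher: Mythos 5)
Your strategy matches the paper's --- anti-unitary sheet-swap conjugation $\hat{C}=\gamma^0\hat{K}\hat{S}$, anti-commutation on a core, Glazman-style Weyl sequences --- and your anti-unitarity check (via $\gamma^0\overline{\hat{M}}\gamma^0=\hat{M}$, the $\varsigma$-invariance of $\hat{M}$, and the $\varsigma$-invariance of the rectangular measure) and the closing Weyl-sequence step are sound.

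The genuine gap is in your third paragraph. Write $T$ for the minimal symmetric Dirac operator (the formal $\hat{H}$ on $C^\infty_c(\rcyl^\ast)$). You correctly note that, since $\hat{C}$ is anti-linear and anticommutes with $T^*$, it maps each deficiency subspace $\cK_\pm := \ker(T^*\mp i)$ into itself. But the leap to ``$\hat{C}\,\mathrm{dom}(\hat{H})=\mathrm{dom}(\hat{H})$ for every self-adjoint extension $\hat{H}$'' does not follow. In von Neumann's parametrization the extension $\hat{H}_U$ has domain $\mathrm{dom}(\overline{T})+\{\psi_++U\psi_+\}$ for a unitary $U:\cK_+\to\cK_-$; applying $\hat{C}$ produces $\hat{C}\psi_+\in\cK_+$ and $\hat{C}U\psi_+\in\cK_-$, which recombine into a vector of the required form only if $\hat{C}U=U\hat{C}$ on $\cK_+$. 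A generic $U$ does not satisfy this, and then $\hat{C}$ conjugates $\hat{H}_U$ into the \emph{different} self-adjoint extension parametrized by $\hat{C}U\hat{C}^{-1}$. What the deficiency-subspace observation gives, by itself, is (i) symmetry of the \emph{essential} spectrum for every extension (since $-\hat{C}\hat{H}_U\hat{C}^{-1}$ is again a self-adjoint extension of $T$ with the same essential spectrum) and (ii) existence of at least one $\hat{C}$-anticommuting extension. To obtain full spectral symmetry for \emph{every} extension, as the theorem claims, you must either restrict to the essentially self-adjoint case covered by Theorem \ref{thm:esa}, where the extension is unique and your Weyl argument closes cleanly, or else supply the specific mechanism used in the companion paper \cite{KieTah14a}; the deficiency-subspace argument alone does not deliver it.
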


\begin{rem}\emph{
 The operator $\hat{C}$ should not be confused with the {\em charge conjugation} operator
$$
\tilde{C} := i\ga^2 \hat{K}.
$$
 One easily checks that if $\hat\Psi$ solves $i\hbar\p_t\hat\Psi = (\hat{H}_0 + e\cA)\hat\Psi$, then $\tilde{C}\hat\Psi$ solves
$i\hbar\p_t(\tilde{C}\hat\Psi) = (\hat{H}_0 - e\cA)(\tilde{C}\hat\Psi)$.
 In particular, if $\hat\Psi$ is an eigen-bi-spinor of $\hat{H}_0 + e\cA$ with eigenvalue $E$, then $\tilde{C}\hat\Psi$ is an eigen-bi-spinor of 
$\hat{H}_0 - e\cA$ with eigenvalue $-E$ (note that the two Hamiltonians here are different).}
\end{rem}

\subsubsection{Essential self-adjointness of $\hat{H}$}

 Let $\rcyl^\ast$ denote $\rcyl$ with the ring singularity $\{r,\theta,\varphi|r=0,\theta=\pi/2\}$ deleted.
 By adapting an argument of Winklmeier--Yamada \cite{WINKLMEIERc}, in \cite{KieTah14a} we prove:
\begin{thm}\label{thm:esa}
For $\textsc{q} = -e = \textsc{i}\pi a$,  the operator $\hat{H}$ with domain $C^\infty_c(\rcyl^\ast)$ is e.s.a. on~$\sf H$.  
\end{thm}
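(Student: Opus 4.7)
The plan is to adapt the strategy of Winklmeier--Yamada for the Dirac operator on Kerr--Newman backgrounds to our substantially simpler z$G$KN setting, which has no horizons and a flat metric. The simplifying condition $\textsc{q} = -e = \textsc{i}\pi a$ is crucial: by the Remark following \eqref{generalKNelmagAtilde}, in this regime only $\tilde{A}_0 = -er/|\rho|\varpi$ survives, and a full separation of variables becomes available. I would first exploit the two Killing symmetries: the eigenvalue equation $\hat{H}\hat\Psi = E\hat\Psi$ (which is what must be analyzed for self-adjointness) is compatible with the axial symmetry, so one decomposes $\sf H$ into a direct sum over the half-integer eigenspaces of $-i\partial_\varphi$ with eigenvalue $k \in \tfrac{1}{2}+\ZZ$. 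It then suffices to prove essential self-adjointness of $\hat{H}$ restricted to each $k$-fiber, and to check that the sum of defect indices vanishes.

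Next I would apply the Chandrasekhar--Page--Toop ansatz, using precisely the additive separation of the principal part displayed in \eqref{eq:DDprincipal} together with the gauge-transformation device $\fD$ introduced before \eqref{def:D}. On the $k$-fiber this yields a coupled radial system on $r \in \RR$ and a coupled angular system on $\theta \in (0,\pi)$, each of first order and of Dirac form, linked through a separation constant $\lambda$. The angular system is a Sturm--Liouville Dirac operator on $(0,\pi)$ with regular-singular endpoints at $\theta = 0,\pi$ whose indicial coefficients involve $k \mp \tfrac12$; a standard Frobenius analysis puts this operator in the Weyl limit-point case at both endpoints, so it is essentially self-adjoint and discrete, producing an orthonormal eigenbasis $\{S_{k,n}\}$ with eigenvalues $\{\lambda_{k,n}\}$. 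Tensoring against this basis further reduces the problem to essential self-adjointness of a two-component radial Dirac operator $\mathfrak{h}_{k,n}$ on $L^2(\RR)$ for each pair $(k,n)$, with potential built from $\tilde{A}_0$ and $\lambda_{k,n}$.

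For the radial operator on $\RR$ I would apply Weyl's limit-point criterion at the three potentially singular points $r = \pm\infty$ and $r = 0$. At $r = \pm\infty$ the coefficient $\tilde{A}_0 = O(1/|r|)$ and $mc\,\fR$ grows linearly, placing the system in the standard Dirac scattering regime that is manifestly limit-point. At $r = 0$, the gauge choice $\textsc{q} = -e = \textsc{i}\pi a$ makes the radial coefficients behave like a one-dimensional Dirac operator with an integrable singularity and no Coulomb-type resonance; a Frobenius--Levinson analysis should again place it in the limit-point case uniformly in $(k,n)$. Combining these gives essential self-adjointness of $\mathfrak{h}_{k,n}$ on $C^\infty_c(\RR)$, and by the direct-sum decomposition, essential self-adjointness of $\hat{H}$ on the algebraic tensor product of smooth cutoffs away from $r=0$ with the angular eigenbasis.

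The main obstacle, as I see it, is the final step of lifting essential self-adjointness from the fiberwise radial operators back to $\hat{H}$ on $C^\infty_c(\rcyl^\ast)$. The subtlety is that $\rcyl^\ast$ excises the two-dimensional ring $\{r=0,\theta=\pi/2\}$ rather than the one-dimensional axis $r=0$, so the separated radial analysis uses more information than the original domain provides, and one must show that finite linear combinations of separated modes with cutoffs are dense in $C^\infty_c(\rcyl^\ast)$ in the graph norm. This requires uniform-in-$(k,n)$ control of the radial limit-point estimates near $r=0$, ideally via a Kalf--Walter--Weidmann--Schmincke type commutator argument, to ensure the defect indices of the direct sum match those of $\hat{H}$ itself. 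If that uniformity is achieved, the theorem follows.
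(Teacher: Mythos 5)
Your outline --- decompose into azimuthal $\kappa$-modes, separate variables via Chandrasekhar--Page--Toop, run a limit-point analysis on the separated pieces --- is the right family of ideas and is consistent with the paper's attribution to Winklmeier--Yamada. However, three points need correction, two of which would sink the argument as written.

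First, there is no singularity of the radial system at $r=0$. Because $\varpi=\sqrt{r^2+a^2}\geq|a|>0$ on z$G$KN (no horizons), every coefficient in $H_{rad}$ of \refeq{eq:Hrad} is smooth and bounded on all of $\Rset$. Your Frobenius--Levinson analysis at $r=0$ has nothing to analyze; the only radial endpoints are $\pm\infty$, which are trivially limit-point thanks to the mass term and the decay of the potential.

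Second, the angular operator $T_{ang}$ of \refeq{eq:Srad} has $E$-dependent coefficients through $l_\pm = d/d\theta\pm(aE\sin\theta-\kappa\csc\theta)$, and hence the separation constants $\lambda_n=\lambda_n(am,aE,\kappa)$ also depend on $E$. This is the characteristic feature of the Chandrasekhar separation for Kerr-type Dirac operators. There is therefore no fixed, $E$-independent angular eigenbasis $\{S_{\kappa,n}\}$ against which to tensor, and the $\kappa$-fiber of $\sf H$ does not decompose into a direct sum of radial channels indexed by $(\kappa,n)$. Your step ``tensoring against this basis'' cannot be executed as stated; the argument must operate on each $\kappa$-fiber as a two-dimensional $(r,\theta)$-problem rather than mode by mode.

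Third, you correctly identify the excised ring as the locus of concern, but you misdiagnose the obstruction. After the $\fD$-transformation the separated radial and angular systems are each regular on $\Rset$ and on $(0,\pi)$ respectively; the ring singularity manifests not in their coefficients but in the fact that the weight matrix $\hat{M}=\al^0+\frac{a\sin\theta}{\varpi}\al^2$ of the inner product \refeq{def:innerPROD} has an eigenvalue $\la_-=1-\frac{a\sin\theta}{\varpi}$ that tends to $0$ as $(r,\theta)\to(0,\pi/2)$, while $(M^0)^{-1}$ in \refeq{def:Hhat} correspondingly blows up there. What must actually be shown is that this degeneration of the Hilbert-space weight on the excised codimension-two set forces no boundary condition, i.e.\ generates no defect vectors; your ``uniform-in-$(k,n)$ limit-point estimates near $r=0$'' do not engage this, since the radial coefficients are regular at $r=0$ and the degeneration only occurs jointly in $r$ and $\theta$.
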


\subsubsection{The continuous spectrum of $\hat{H}$}

With the help of the Chandrasekhar--Page--Toop formalism to separate variables, and an argument of Weidmann \cite{Wei82},
in \cite{KieTah14a} we prove:

\begin{thm}\label{thm:essspec}
For $\textsc{q}=-e=\textsc{i}\pi a$, the continuous spectrum of $\hat{H}$ on $\sf H$ is $\Rset\setminus(-m,m)$.
\end{thm}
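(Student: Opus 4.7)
The plan is to exploit separation of variables in oblate spheroidal coordinates (the Chandrasekhar--Page--Toop formalism) to reduce the spectral analysis of $\hat H$ on $\sf H$ to the spectral analysis of a countable family of one-dimensional Dirac-type radial operators on $\RR$, and then to apply Weidmann's classical results on the essential spectrum of singular one-dimensional symmetric differential systems to each piece. First I would use the additive separation (\ref{eq:DDprincipal}) together with the $t$-- and $\varphi$--Killing symmetries to write the stationary equation $\hat H\hat\Psi = E\hat\Psi$ in the ansatz $\hat\Psi(t,r,\theta,\varphi)=e^{-iEt/\hbar}e^{ik\varphi}\Psi(r,\theta)$ with half-integer $k$, and then apply the Chandrasekhar--Page--Toop decoupling that turns the eigenvalue problem into a pair of first-order ODEs in $r$ coupled through a separation constant $\lambda$ to a pair of first-order ODEs in $\theta$ of angular Sturm--Liouville (in fact, spin-weighted spheroidal) type. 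Since the $\theta$--operator is regular on $[0,\pi]$ (with the usual regular-singular endpoints), its spectrum is a discrete, simple sequence $\{\lambda_{k,\ell}(E)\}_{\ell\in\Zset}$, and the Hilbert space $\sf H$ decomposes as an orthogonal direct sum (indexed by $k,\ell$) of subspaces on each of which $\hat H$ acts as a two-component radial Dirac operator $\hat H_{k,\ell}$ on $L^2(\RR;\Cset^2)$ with a weight related to $\hat M$.

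Next, for each fixed $(k,\ell)$, I would identify the asymptotic form of $\hat H_{k,\ell}$ as $r\to\pm\infty$. Using (\ref{def:AKN}) and the expressions (\ref{generalKNelmagAtilde}), one sees that the electromagnetic potentials decay like $1/r$ and the connection coefficients like $1/r^2$, while the mass term tends to $\pm m$ depending on sheet (since $\fR$ involves $\rho = r+ia\cos\theta$). Thus at both ends the radial operator has the asymptotic form of a free, massive one-dimensional Dirac operator with constant mass $m$ (up to a sign that is absorbed by a unitary rotation of the spinor basis). Weidmann's theorem for symmetric Hamiltonian systems at limit-point endpoints then gives $\sigma_{\mathrm{ess}}(\hat H_{k,\ell}) = \Rset\setminus(-m,m)$, and, by a standard relatively-compact perturbation / trace-class argument applied to the difference with the free system, the absolutely continuous part fills $\Rset\setminus[-m,m]$ with no singular continuous spectrum and no thresholds embedded inside $(-m,m)$.

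Taking the orthogonal sum over $(k,\ell)$, I would conclude $\sigma_c(\hat H) = \overline{\bigcup_{k,\ell}\sigma_c(\hat H_{k,\ell})} = \Rset\setminus(-m,m)$, the endpoints $\pm m$ being included as thresholds (which in this framework are continuous, not point, spectrum). Combined with Theorem \ref{thm:esa} (essential self-adjointness), this yields the claim. The main obstacle, in my view, is a careful treatment of the weighted inner product defined by $\hat M$ in (\ref{def:innerPROD}): since $\hat M$ is not the identity but has $\theta$--dependent eigenvalues $1\pm a\sin\theta/\varpi$ that degenerate at the ring, one must verify that the Chandrasekhar--Page--Toop separation remains unitary from $\sf H$ onto $\bigoplus_{k,\ell} L^2(\RR;\Cset^2;w_{k,\ell}\,dr)$ with an explicit, well-behaved scalar weight $w_{k,\ell}$, and that the resulting radial operators are in the limit-point case at both $\pm\infty$ (so that Weidmann's hypotheses apply) despite the subtle two-sheeted geometry encoded by the fact that the radial variable runs over all of $\RR$, not $[0,\infty)$. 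A secondary technical point is excluding embedded eigenvalues in $\Rset\setminus[-m,m]$ and singular continuous spectrum, which I would handle via a Mourre estimate or the standard ODE subordinacy argument for the radial systems once the asymptotic analysis is in place.
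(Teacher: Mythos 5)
Your proposal matches the approach the paper declares: Chandrasekhar--Page--Toop separation of variables, followed by an argument of Weidmann for the resulting radial Dirac systems (the details are delegated to the companion paper that the theorem explicitly cites). The one place where your write-up is looser than it should be is the claimed orthogonal direct-sum decomposition of ${\sf H}$ into $E$-independent subspaces indexed by $(k,\ell)$: as the paper's Appendix B records, the angular separation constant is $\lambda=\lambda_n(am,aE,\kappa)$ and therefore depends on $E$ whenever $a\ne 0$, so ${\sf H}$ does \emph{not} split into reducing subspaces labeled by an $E$-independent pair $(k,\ell)$. The usual fix is to treat $(\lambda,\kappa)$ as free parameters, establish $\sigma_{\mathrm{ess}}\bigl(H_{rad}(\lambda,\kappa)\bigr)=\Rset\setminus(-m,m)$ uniformly via Weidmann's decomposition principle at the two ends $r\to\pm\infty$ (where the coupling terms decay and the diagonal mass entries $\pm mr/\varpi\to\pm m$), and then transfer this to $\hat H$ on ${\sf H}$ by building Weyl sequences supported far from the ring; that patching step, not the separation itself, is where the real work lies. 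Your concern about the $\hat M$-weighted inner product is well placed but resolves favorably: the Chandrasekhar/Pr\"ufer change of variables maps ${\sf H}$ unitarily onto radial problems on plain $L^2(\RR,dr)^2$ with no residual weight, as the paper's Appendix B confirms.
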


\subsubsection{The point spectrum of $\hat{H}$}

Using the Chandrasekhar--Page--Toop formalism, and the Pr\"ufer transform, in \cite{KieTah14a} we prove:

\begin{thm}\label{thm:ptspec}
  Let $\textsc{q} =-e = \textsc{i}\pi a$.
  Then, if                                      
$|a|m<\half$ and $e^2<\sqrt{2|a|m(1-2|a|m)}$,
the point spectrum of $\hat{H}$ on $\sf H$ is nonempty and located in $(-m,m)$; the end points are not included.
 Moreover, the eigenvalues stand in one-to-one correspondence with pairs of heteroclinic orbits connecting two saddle
points in two parameter-dependent flows on truncated cylinders.
\end{thm}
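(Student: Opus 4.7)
The plan is to build on the two key tools already cited in the statement, namely the Chandrasekhar--Page--Toop (CPT) separation of variables and the Pr\"ufer transform, converting the spectral problem into a dynamical-systems question about heteroclinic connections on a pair of cylinders. First I would insert the separation ansatz $\hat\Psi = e^{-iEt/\hbar}e^{im_\varphi \varphi}\Phi(r,\theta)$, with $m_\varphi \in \mathbb{Z}+\tfrac{1}{2}$, into equation \refeq{eq:newDir} (under the special choice $\textsc{q}=-e=\textsc{i}\pi a$, which makes $\tilde{A}_\mu$ collapse to a single non-zero component). Using the additive separation property \refeq{eq:DDprincipal} of the principal symbol, together with the ``Dirac matrix trick'' of CPT, the system decouples into a pair of $2\times 2$ first-order ODE systems: an \emph{angular} system in $\theta$ on $[0,\pi]$ depending on $(E,a,m,m_\varphi)$ and a separation constant $\lambda$, and a \emph{radial} system in $r$ on $\mathbb{R}$ depending on the same data plus $\lambda$.

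Next I would analyze each system in Pr\"ufer form. For the angular part, standard spectral theory of regular singular Sturm--Liouville systems on $[0,\pi]$ (the endpoints $\theta = 0,\pi$ are regular singular for the CPT angular system) yields, for every admissible $(E,m_\varphi)$ with $|E|<m$, a countable discrete family of admissible separation constants $\lambda_n(E)$ via a phase-counting argument; this is standard since the angular equation is elliptic in $\theta$. For the radial system I would introduce a Pr\"ufer-type transformation $(u_1,u_2)\mapsto(\rho_\pm,\vartheta_\pm)$ on each half-line $r>0$ and $r<0$, absorbing the amplitudes $\rho_\pm$ and reducing the eigenvalue condition to a scalar ODE for the phase $\vartheta$ on a truncated cylinder $\mathbb{R}\times\mathbb{S}^1$. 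The key point is that, for $|E|<m$, the coefficient matrix of the radial system has \emph{real} asymptotic eigenvalues $\pm\sqrt{m^2-E^2}$ as $r\to\pm\infty$; the corresponding equilibria of the phase flow are hyperbolic saddles, and the $L^2$-decay condition at each end selects exactly one incoming/outgoing manifold.

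The eigenvalue problem therefore reduces to: does the unique trajectory leaving the stable saddle at $r=+\infty$ join, after transit through the (now regular, since $\theta=\pi/2$ is excluded by the angular separation unless $\lambda$ resonates) crossing at $r=0$, the unique trajectory arriving at the stable saddle at $r=-\infty$? This is the heteroclinic matching condition, and it must be imposed simultaneously on \emph{both} sheets, yielding the two paired heteroclinic orbits referred to in the statement. Varying $E$ continuously in $(-m,m)$ deforms these invariant manifolds, and the matching condition becomes a real equation $F_n(E)=0$ in $E$ for each angular level $n$.

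The hard part, and where the smallness assumptions enter, is showing that this equation has at least one solution. The condition $|a|m<\tfrac{1}{2}$ controls the ring-induced perturbation of the Kepler behavior near $r=0$: it ensures that the matrix governing the transition through the ring is subcritical, so the phase defect accumulated in crossing from the $r>0$ sheet to the $r<0$ sheet stays within a compact range uniformly in $E$. The condition $e^2<\sqrt{2|a|m(1-2|a|m)}$ is the analog of the usual Coulomb subcriticality $Ze^2<1$; together with the first bound it guarantees that, as $E$ ranges over $(-m,m)$, the total phase $F_n(E)$ varies monotonically through a full period. An intermediate-value argument then produces the eigenvalue, and the endpoints $\pm m$ are excluded because the saddles degenerate there (loss of exponential decay, so no $L^2$ solution). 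The symmetry of the spectrum from Theorem \ref{thm:sym} explains why the eigenvalues come in $\pm$ pairs, and hence the heteroclinics come in pairs.
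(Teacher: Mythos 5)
Your overall strategy is the paper's: separate variables via Chandrasekhar--Page--Toop (Appendix~B of the paper spells this out, yielding the radial system \refeq{eq:rad} and the angular system \refeq{eq:ang}), pass to Pr\"ufer phase variables \refeq{eq:prufer}, and turn the $L^2$ boundary conditions at the ends into a saddle-connection problem in the phase flows \refeq{eq:Om} and \refeq{eq:Theta}. The actual proof is delegated to the companion paper \cite{KieTah14a}, but the skeleton you describe --- asymptotic hyperbolic equilibria for $|E|<m$ in the radial flow, $\lambda$-quantization from the angular problem, an intermediate-value argument in $E$ --- is the right shape.

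There are, however, two concrete misreadings worth flagging. First, you misidentify what ``pairs of heteroclinic orbits in two flows on truncated cylinders'' refers to. It is not a pair coming from matching on two sheets, nor from the $\pm E$ spectral symmetry of Theorem~\ref{thm:sym}. The two flows are the $(r,\Omega)$ flow \refeq{eq:Om} on $\mathbb{R}\times\mathbb{S}^1$ and the $(\theta,\Theta)$ flow \refeq{eq:Theta} on $(0,\pi)\times\mathbb{S}^1$; an eigenvalue corresponds to a \emph{single} heteroclinic in the radial flow (running $r=+\infty \to r=-\infty$, through both sheets in one orbit) together with a \emph{single} heteroclinic in the angular flow (running $\theta=0^+\to\theta=\pi^-$) --- one orbit in each flow makes the ``pair.'' Your sentence framing the radial matching as ``imposed simultaneously on both sheets'' also mischaracterizes the geometry: $r$ is a global coordinate on a single cylinder, so it is one shooting problem, not two glued ones. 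Second, your heuristic for the condition $|a|m<\tfrac12$ --- that it renders the ``matrix governing the transition through the ring subcritical'' --- does not fit the equations. For $a\neq 0$ the radial Hamiltonian \refeq{eq:Hrad} is \emph{regular} at $r=0$ (the coefficients are bounded because $\varpi\geq|a|$), so there is no Coulomb-type fall-to-center issue at the ring to tame. The role of the two inequalities in \cite{KieTah14a} is rather to control the total Pr\"ufer phase accumulation across $r\in\mathbb{R}$ so that the shooting function actually passes through a zero; they are sufficient conditions guaranteeing nonemptiness of the discrete spectrum and are not claimed to be sharp (see Remark~3.8). Your high-level intermediate-value picture is fine, but the mechanism you assign to the hypotheses is not the one that operates here.
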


\begin{rem}\emph{
 In \cite{KieTah14a} we surmise that the winding numbers of the heteroclinic orbits enumerate the energy levels (or possibly 
certain finite families of levels) of the z$\,G$KN-Dirac Hamiltonian, and that their right- vs. left-handedness corresponds to positive, 
resp. negative energy eigenvalues.}
\end{rem}

\begin{rem}\emph{
Resorting to units with $\hbar$ and $c$ restored, our smallness conditions become 
$2|a|<\frac{\hbar}{mc}$ and $\frac{e^2}{\hbar c} <\sqrt{\frac{2|a|}{\hbar/mc}\big(1-\frac{2|a|}{\hbar/mc}\big)}$.
 The condition on the ring diameter $2|a|$ demands that it be smaller than the electron's Compton wavelength.\footnote{For convenience, 
we have dropped the adjective ``reduced,'' even though ``Compton wavelength'' usually refers to $2\pi$ times the reduced expression, i.e.
to $h/mc$.}
 Now it would seem natural to identify the zero-$G$ Kerr--Newman magnetic moment $-{e}a$ with the negative of the Bohr magneton
(as done for instance by Carter \cite{Car68}), which yields $2|a| = \frac{\hbar}{mc}$, and then our condition would be violated.
 However, we shall see below that this choice, while suggestive --- and indeed entertained by us for a while --- is too naive!
 Instead, we will compellingly argue that $2|a|$ should only be about one-thousands of the electron's Compton wavelength.
 As to the second condition, note that $\frac{e^2}{\hbar c} = \al_{\mbox{\tiny\textsc{s}}}^{}\approx\frac{1}{137.036}$, and
this condition is satisfied for our proposed choice of $|a|$.
 More interesting in this regard is the \emph{hydrogenic} problem where the point ``proton'' charge $e$ is replaced by the
charge $Ze$ of a  point ``nucleus,'' with $Z>1$, in which case we get a point spectrum in the gap of the continuum as long as
$Z< 137.036 \sqrt{\frac{2|a|}{\hbar/mc}\big(1-\frac{2|a|}{\hbar/mc}\big)}$; this indicates that our estimate is presumably 
not sharp (for our upper bound on $Z$ goes $\downarrow 0$ as $|a|\downarrow 0$, while the familiar Dirac bound $Z<137.036$
for the existence of a point spectrum in the  hydrogenic problem on Minkowski spacetime should be obtained instead).}
\end{rem}

This completes the summary of our main results from \cite{KieTah14a}.

\subsubsection{On the computability of the eigenvalues}

 If the KN-anomalous magnetic moment is absent, i.e. if we assume that $\textsc{q} =-e=\textsc{i}\pi a$, 
then $|\rho|\ga^\mu\tilde{A}_\mu$ reduces to  $|\rho|\ga^0\tilde{A}_0 = -({\textsc{q}r}/{\varpi})\ga^0$, which is a function of only $r$.
 The separation of variables Ansatz of Chandrasekhar \cite{ChandraDIRACinKERRgeomERR,ChandraDIRACinKERRgeom}, Page \cite{Page76}, 
and Toop \cite{Too76} now yields a system of ordinary differential equations which facilitates the computability of the
point spectrum.

\begin{rem}\textit{
For the convenience of the reader, in the Appendix B we also recall how to separate the variables for the z$\,G$KN Dirac equation, 
in the absence of the anomalous terms, applying the method of Chandrasekhar--Page--Toop.}
\end{rem}

However, unlike the familiar ODE system for the Dirac equation of Born--Oppenheimer Hydrogen in Minkowski spacetime, 
the (say) z$G$KN-Dirac ODE system does not have a ``triangular'' structure which would allow its solution one ODE at a time in what
may be called the bottom-up direction. 
 Instead, two of the equations are intrinsically coupled, and their joint solution seems to be feasible only numerically with
the aid of a computer, for a judicious choice of parameter values of $a$ (see further below) and a few energy eigenvalues, e.g. the 
positive and negative ground states and a dozen or so excited states.
 We hope to report on such a study in the not too distant future.

 Next we note that for $\textsc{q} \ne \textsc{i}\pi a$ the quantity $|\rho|\ga^\mu\tilde{A}_\mu$ is a function of both $r$ and $\theta$, 
and unlike the other terms in the Dirac equation \refeq{eq:newDir}  it does {\em not} separate into a sum of two 
terms each depending only on one of these variables. 
It follows that the Dirac equation will not be completely separable when the magnetic moment is different from $-ea$ (viz. $\textsc{q}a$).
A two-dimensional ``vector'' PDE problem needs to be solved to obtain the energy and angular-momentum eigenvalues.
This would with near certainty be feasible only on a computer, too.
 In the meantime, we have to be content with a few conclusions that can be drawn based on our theorems and some further analysis.

\subsubsection{The spectrum in the limit $a\to0$ of the electromagnetic spacetime}

 Under the plausible (but yet to be proved) hypothesis that the limit $a\to 0$ of the z$G$KN-Dirac spectrum coincides with
the spectrum of the Dirac Hamiltonian evaluated directly on the $a\to 0$ limit of the z$G$KN spacetime (or rather, its
$t=0$ spacelike slice), we obtain the following results.

 First, since the $a\neq0$ spectrum is symmetric about zero, it is symmetric also in the limit. 

 Second, the continuous spectrum is always $(-\infty,-m]\cup[m,\infty)$, also in the limit $a\to0$. 

 Coming thus to the point spectrum, we note that since the spacetime is topologically non-trivial, with a perfect (anti-)symmetry 
between its two electromagnetic sheets, also the limiting spacetime when $a\to0$ will be topologically non-trivial, with a 
perfect (anti-)symmetry between its two electromagnetic sheets; however, the geometry degenerates in the limit $a\to0$: as the 
ring radius $|a|$ shrinks to $0$, the ring collapses to a point, and the limit $a\to0$ of the z$G$KN spacetime thus becomes two 
copies of Minkowski spacetime with a straight worldline deleted, but with the continuous extension of the two copies into the
removed worldline identified at that worldline (to visualize this, remove the time and one space dimension and
think of a familiar double cone in $\Rset^3$, pushed flat).
 Moreover, inspection of the generalized KN field \refeq{def:AKNanomalous} reveals that in the limit $a\to0$ with all other parameters
kept fixed the electromagnetic field becomes two copies of a pure Coulomb field, corresponding to a negative point charge $-e$ in
the $r>0$ sheet, and a positive point charge $+e$ in the $r<0$ sheet.
 Lastly, we note that the eigenvalue problem now decouples in the sense that the variations can be carried out restricted to either
the $r>0$ sheet or the $r<0$ sheet. 
 Each of these subproblems leads just to the familiar special-relativistic Born--Oppenheimer ``Hydrogen'' problem, where the quotes
around Hydrogen indicate that in one calculation the electron charge $-e$ is replaced by the positron charge $+e$, and it is well-known
that this produces a negative copy of the familiar Born--Oppenheimer Hydrogen spectrum computed from the Dirac equation by 
Darwin \cite{DarwinDIRACspec} and Gordon \cite{GordonDIRACspec} with the electron charge $-e$; cf. \cite{MessiahBOOKii}.
 Thus: 

\smallskip
\noindent\emph{In the limit $a\to0$ the point spectrum of $\hat{H}$ is the union of a Sommerfeld fine structure
spectrum with a negative copy of the same (both with proper quantum-mechanical labeling of the eigenvalues).}
 
 \begin{rem}\emph{
Incidentally, our discussion implicitly explains why in the usual special-relativistic calculations one only obtains half of the
symmetric point spectrum: the symmetry of the point spectrum is broken because one tacitly breaks the symmetry of the underlying
spacetime by restricting the variations to be supported on only half of it. 
 Of course, nobody at the time of Dirac and Darwin and their contemporaries should have anticipated that!}
 \end{rem}

\subsubsection{On the choices of ring radius $|a|$ and ring current $\mbox{\small{I}}$}

 Temporarily switching to physical units with $\hbar$ and $c$ restored, we note that the most suggestive a-priori
choice for $|a|$ would seem to be ``half of the electron's Compton wavelength,'' $|a| = \hbar/2mc$, 
which is obtained by equating the KN magnetic moment $-ea$ with the negative Bohr magneton $-\hbar e/2mc$; note this implies $a>0$.
 Thus suggestion has been made as early as in \cite{Car68}, where it was observed that the Kerr--Newman spacetime features
a $g$-factor $g_{\mbox{\tiny\textrm{KN}}}=2$.
 However, as will be explained next, the corresponding z$G$KN-Dirac spectrum will most likely 
deviate appreciably from the empirical Hydrogen spectrum for this choice of $a$. 
 Thus the identification of the zero-$G$ Kerr--Newman ring singularity with a binary electron/anti-electron particle structure would 
seem to receive a devastating blow. 
 However, as already emphasized above, our reasoning will also show that the choice $a=\hbar/2mc$ for the ring size is ``\emph{too naive}.'' 

 Namely, we have argued that as $a\to 0$ the positive point spectrum converges to the familiar Sommerfeld fine structure spectrum 
(with the correct Dirac labeling), and the negative one to the negative thereof.
 This is consistent with the traditional narrative that in the conventional Dirac Hydrogen problem a \emph{point-like} structureless 
electron is assumed, with an electric charge $-e$ but no magnetic moment at all.
 However, the same zero-$a$ Dirac equation, now with a homogeneous magnetic $\bA$ term added to the Hamiltonian, is also known to 
produce the correct spectrum of a ``Dirac electron in an applied homogeneous magnetic field.'' 
 Thus, the effects traditionally said to come from the ``magnetic moment of the electron'' are really supplied by the structure of 
the Dirac matrices (with their physical coefficients) which act on the bi-spinorial Dirac wave function, and not by an additional 
magnetic dipole structure of the electron.\footnote{Of course, 
        since Pauli one knows this, but it is still difficult to not let the conventional narrative 
	of the ``magnetic moment of the electron,'' which suggests it were a true property of the particle 
        like its charge, get in one's way.}
 The same will therefore be true in the $a\to0$ limit of our z$G$KN-Dirac Hamiltonian. 
 This in turn implies that for finite-$a$ the z$G$KN magnetic moment $-ea$ will actually make an \emph{additional} contribution 
to the ``magnetic moment of the electron,'' hence $-ea$ should \emph{not} be identified with the Bohr magneton itself but at most 
with the ``\textbf{anomalous magnetic moment of the electron},'' to guarantee that the finite-$a$ spectrum will continue to agree 
reasonably well with the Hydrogen spectrum. 
 The upshot is: 
\smallskip

\centerline{
\emph{The most plausible a-priori choice for the z$\,G$KN ring radius $|a|$ is $a\approx 5.83\times 10^{-4}\hbar/mc (>0)$.}}
\smallskip

\noindent
So much for the a-priori choice of $a$.

 We now come to the choice of the KN-anomalous magnetic moment, and in concert with it a possibly more refined choice of $a$ as well.
 If, as we have just argued, the finite size of the ring is associated with the anomalous magnetic moment of the electron,
then there would seem to be no need to add any further anomalous magnetic moment in form of the KN-anomalous magnetic moment. 
However, it is certainly conceivable, even likely, that a better agreement of the spectral data with the empirical spectrum 
will be obtained if both $a$ and $\textsc{i}$ are used as parameters to compute corrections to the Sommerfeld fine structure
spectrum. 
In that case one would set $\textsc{i}\pi a^2 = -  5.83\times 10^{-4} \hbar e/mc$ 
to identify the total generalized KN magnetic
moment with the ``electron's anomalous magnetic moment'' and use the remaining one-parameter freedom to optimize 
the generalized z$G$KN-Dirac spectrum in regard to the empirical one. 

\subsubsection{On the perturbative computation of finite-$a$ effects on the Hydrogen spectrum}

 Our proposed a-priori value for $a$ is a tiny value (compared to the ``naive choice''); and so, since the relevant atomic length 
scale for Hydrogen is the Bohr radius $\al_{\mbox\tiny\textsc{s}}^{-1}  \hbar/mc$, one has a tiny dimensionless parameter 
$a\al_{\mbox\tiny\textsc{s}} mc / \hbar \approx 5\times10^{-6}$.
 Hence perturbation theory could be sufficient to compute its effects.
 Furthermore, all eigenfunctions of the Dirac Hamiltonian for Born--Oppenheimer Hydrogen are known when $a=0$, so
it is tempting to conclude that perturbative computations of finite-$a$ effects on the Hydrogen spectrum for small $|a|$ 
are straightforward. 
 However, this will not be a problem of the usual perturbation-theoretical type.

 In contrast to the standard situation where to a quantum Hamiltonian an extra potential is added 
which acts on the same Hilbert space domain on which the unperturbed Hamiltonian is defined, here the configuration space itself 
changes, and with it the Hilbert space domain of the Hamiltonian.
 For instance, the symmetry of the z$G$KN-Dirac spectrum implies that one definitely would have to start from the two anti-symmetric
copies of the familiar special-relativistic Dirac-``Hydrogen'' problem mentioned above and then ``switch on'' $a$. 
 After this doubling of the Hydrogen problem one gains the advantage of being able to work with the same global coordinate chart 
throughout.
 Unfortunately, the metric and with it the highest order terms in the Hamiltonian change, which raises the problem of 
dominating them in terms of the unperturbed Hamiltonian. 

 Be that as it may, as a physicist one may nevertheless want to proceed on a purely formal level and expand\footnote{A 
   non-zero radius of convergence is not to be expected, though.}  
the Hamiltonian and the metric in powers of $a$ about $a=0$,
then use formal first-order perturbation formalism to compute corrections of $O(a)$ to the Sommerfeld fine structure
spectrum (with correct quantum labeling) --- the first order expressions are finite, as we have checked.
 However, as long as a rigorous justification is missing one needs to be aware of 
the possibility that this type of naive calculation may lead to incorrect results.
 If, on the other hand, a numerical evaluation of the z$G$KN-Dirac ODE system should confirm such formal perturbative calculations
for small $a$, then one could have confidence in such calculations, and furthermore the mathematical physicists would be called
upon to work out its rigorous foundations.
  Such numerical studies are currently under way and we hope to report on the results in the near future.

\subsubsection{On the perturbative treatment of a KN-anomalous magnetic moment}

 In contrast to the technical-conceptual problems of treating the small-$a$ regime perturbatively about $a=0$, once a complete
set of eigenfunctions (or at least a sufficiently large subset thereof) has been computed numerically by integrating the
z$G$KN-Dirac ODE system, the regime of a small KN-anomalous magnetic moment can presumably be treated with conventional
perturbation theory. 
 We write ``presumably,'' for the addition of a small KN-anomalous magnetic moment to the z$G$KN-Dirac Hamiltonian brings in the 
terms with factor $e(-ea - \textsc{i}\pi a^2)$ in the electromagnetic one-form \refeq{generalKNelmagAtilde}, which 
are more divergent at the ring than the KN terms, and thus need to be controlled by the momentum part of the Hamiltonian. 
 We have not yet established this rigorously, but have made significant progress;
we hope to report some definitive results in a follow-up publication.

\section{A de Broglie--Bohm law of evolution for the ring singularity}

 As a quantum-mechanical wave equation our Dirac equation is formulated on the configuration space of the generic position
variables of the z$G$KN-type ring singularity.
 To make contact with the empirical world one has to specify what the Dirac bi-spinor wave function is supposed to say about
the actual position as found in experiments. 
 The conventional, or (in the words of Max Born) orthodox quantum-mechanical interpretation is to assign $\Psi^\dagger\Psi$ the 
meaning of a probability density for this actually found position; however, in order to make it intelligible why $\Psi$ should
have such a significance Born argued that $\Psi$ somehow guides the actual particles in their motion in the right way, 
yet otherwise he did not bother to formulate a suitable guiding equation; cf. \cite{BornSCATTERa,BornSCATTERb}.
 A suitable guiding equation (in the non-relativistic formalism) was supplied by de Broglie \cite{deBroglieSOLVAY} and rediscovered and 
clarified 25 years later by Bohm \cite{Bohm52a,Bohm52b}, who subsequently showed that also a relativistic guiding equation for the worldpoint
$Q^\mu$ of a Dirac point particle can be formulated using Dirac's bi-spinors \cite{BohmHileyBOOK}; see \cite{HBDmodel} for a modern discussion 
and new insights.
 For a general introduction into the de Broglie--Bohm theory, see the books \cite{BohmHileyBOOK,HollandBook,DTbook}, each one of which
offers a  somewhat different perspective.
 We also recommend \cite{DGZbook} for a collection of profound papers on the de Broglie--Bohm type foundations of quantum mechanics, and
see also \cite{KieFoP} for a review of common misunderstandings regarding this theory.

 In the following we explain that a de Broglie--Bohm type law can be formulated for the actual position of the ring singularity
relative to the fixed point charge with a Dreibein attached.
 To obtain this (sometimes so-called) de Broglie--Bohm--Dirac law of motion for the z$G$KN-type ring singularity, we resort again first to the
alternate (and main-stream) interpretation of our Dirac equation (which we use in \cite{KieTah14a}) as the bi-spinor wave equation for a 
spin-1/2 point fermion in the z$G$KN spacetime which interacts with its ring singularity electromagnetically. 

\subsection{The guiding laws for a Dirac point (anti-)electron in the z$G$KN spacetime}

\subsubsection{The four-current and velocity vector fields of $\Psi$}

 Associated to a bi-spinor field $\Psi$ is the four-vector field $\bj_\Psi^{}$ 
(orthodoxly called {\em quantum probability four-current}) with components
$$
j^\mu_\Psi := \overline{\Psi}\ga^\mu\Psi;
$$
equivalently, 
$$
\bj_\Psi^{} 
= \left(\begin{array}{c} \Psi^\dag\Psi\\ \Psi^\dag \boldsymbol{\al}\Psi\end{array}\right),
$$
where $\boldsymbol{\al} = (\al_1^{},\al_2^{},\al_3^{})^t$, with
$$
\al_i := \left(\begin{array}{cc} \si_i & 0 \\ 0 & -\si_i \end{array}\right).
$$
 Evaluating on the generalized Cayley--Klein representation \refeq{canspin} of $\Psi$, we have
$$
\bj_\Psi^{} =   R^2 \left(\begin{array}{c}  1 \\ \cos^2\frac{\Si}{2}\bn_1- \sin^2\frac{\Si}{2}\bn_2\end{array}\right).
$$

 Now we define the {\em density} $\brho$ and {\em velocity field} $\bv_{\scriptscriptstyle{\psi}}$ associated to $\Psi$ as follows:
$$
\brho := R^2,\qquad \bv_{\scriptscriptstyle{\psi}} := \cos^2\frac{\Si}{2}\bn_1- \sin^2\frac{\Si}{2}\bn_2.
$$
 We then have $(j^\mu) = (\brho,\brho\bv_{\scriptscriptstyle{\psi}})^t$.
 One readily checks that
$$
\|\bv_{\scriptscriptstyle{\psi}}\|^2 = \half\left(1+\cos^2\Si - (\bn_1 \cdot \bn_2)\sin^2\Si \right) \leq 1,
$$
with equality if and only if either $\Si = 0$ or $\pi$; or if $\bn_1$ and $\bn_2$ are {\em anti-parallel}, i.e. $\bn_1\cdot \bn_2 =-1$. 
 Therefore the current $(j^\mu)$ is always {\em causal}, i.e. either timelike or null, since
\beq\label{eq:JisCAUSAL}
{\eta}_{\mu\nu} j^\mu j^\nu = \brho^2(1 - \|\bv_{\scriptscriptstyle{\psi}}\|^2) \geq 0.
\eeq
 Moreover, the case of equality is exceptional, see \cite{RodiDaniel}. 

\begin{rem}\emph{
 Note that \refeq{eq:JisCAUSAL} supplies Einstein's relativistic gamma factor of a massive point particle moving at the speed $\bv$.
 More precisely,}
\beq
 \ga[\bv_{\scriptscriptstyle{\psi}}] = \frac{1}{\sqrt{1-|\bv_{\scriptscriptstyle{\psi}}|^2}} = \frac{\brho}{\sqrt{{\eta}_{\mu\nu} j^\mu j^\nu }}.
\eeq
\end{rem}

 By a similar analysis $\bv_{\scriptscriptstyle{\psi}} =0$ if 
$\bn_1 = \bn_2$ and $\Si = \frac{\pi}{2}$, or equivalently if $\psi_2 = e^{i\Phi}\psi_1$. 

\subsubsection{The de Broglie--Bohm-type law of motion for the point positron}

 With our present choice of charges $\textsc{q}=-e$ and $\textsc{q}'=e$, and with $m=m_{\mbox\small{e}}$ the electron mass,
\emph{this traditional interpretation} of the Dirac equation is that of Dirac's point positron in the field of an infinitely 
massive z$G$KN ring singularity. 
 The de Broglie--Bohm--Dirac law of motion for the four components $Q^\mu(\tau)$ of the \emph{actual} worldpoint of the point positron
then reads
\beq
\label{eq:dBBlawPOSITIONcomponents}
\frac{d Q^\mu}{d\tau} = \left.\frac{1}{\sqrt{{\eta}_{\al\beta} j^\al j^\beta }}\overline{\Psi}\ga^\mu\Psi\right|_{\{q^\kappa=Q^\kappa\}} ,
\eeq
or equivalently,
\beq
\label{eq:dBBlawPOSITION}
\left(\frac{d Q^\mu}{d\tau}\right)_{\mu=0}^3 = 
\left.\frac{1}{\sqrt{{\eta}_{\al\beta} j^\al j^\beta }} \left(\begin{array}{c} \Psi^\dag\Psi\\ \Psi^\dag \boldsymbol{\al}\Psi\end{array}\right)
\right|_{\{q^\kappa=Q^\kappa\}};
\eeq
here, $\tau$ is the proper-time variable along the worldline of that point.
 An illustration of a (fictitious) trajectory of such a de Broglie--Bohm motion of a Dirac point positron in the z$G$KN spacetime 
with generic oblate spheroidal coordinates $(t,r,\theta,\varphi)$ is shown in Fig.~6.

\begin{figure}[ht]
\centering
\includegraphics[scale=0.25]{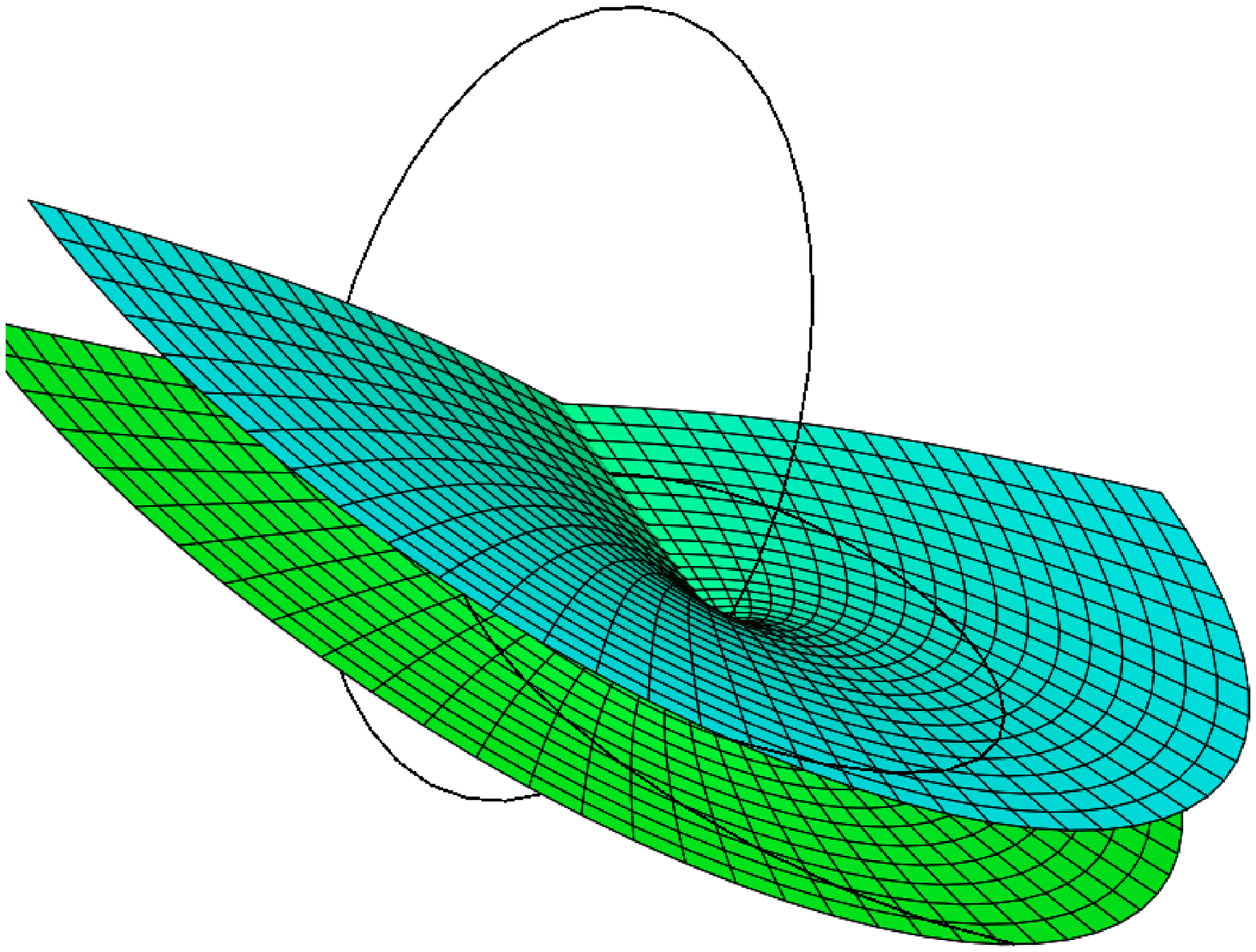}
\caption{\small An illustration of a fictitious de Broglie--Bohm world line of a ``traditional Dirac point positron,'' projected onto a
constant-$\varphi$ section of a constant-$t$ snapshot of the z$G$KN spacetime $\cM$, immersed 
into three-dimensional Euclidean space; again, the two sheets of this flat Sommerfeld space are slightly bent for 
the purpose of visualization. Shown is also the ring singularity of the constant-$t$ snapshot of $\cM$.}
\end{figure}

 If $j^\mu$ is a null vector then (\ref{eq:dBBlawPOSITION}) is ill-defined. 
 In that case, for some affine parameter $s$, 
\beq
\label{eq:dBBlawPOSITIONcomponentsNULL}
\frac{d Q^\mu}{ds} = \left.\overline{\Psi}\ga^\mu\Psi\right|_{\{q^\kappa=Q^\kappa\}} .
\eeq
 Since this case is exceptional \cite{RodiDaniel}, it will not be considered further.

\subsubsection{The Dreibein attached to the Dirac point positron}

 As shown in section 3, 
$\Psi$ evaluated at the worldpoint $(Q^\mu)(\tau) =(t,r,\theta,\varphi)_\tau$ of the point positron (in the interpretation of 
\cite{KieTah14a}) defines a normalized Dreibein ``attached'' to the point positron at $(Q^\mu)(\tau)$, consisting of
the the normalized versions of the orientation vector field $\bn_\Psi$ given by \refeq{eq:classSPIN}, and its orthogonal 
complementary vector fields $\bl_\Psi$ and $\bm_\Psi$ (constructed in an earlier subsection of section 3), all evaluated 
at the worldpoint $(Q^\mu)(\tau)$.

 Explicitly, expressing $\bn_\Psi$ in terms of $\Psi$ (see section 3.1.3), and evaluating at $(Q^\mu)(\tau)$, 
we define
\beq
\label{eq:dBBlawORIENTATION}
\bN(\tau) = 
\left.\frac{\Psi^\dag \mathbf{S} \Psi}{\Psi^\dag \Psi}\right|_{(t,\bq) = (Q^\mu)(\tau)}
 = \left.\left(\cos^2({\textstyle\frac{1}{2}}\Si) \bn_1 + \sin^2({\textstyle\frac{{1}}{2}}\Si) \bn_2\right)\right|_{(t,\bq) = (Q^\mu)(\tau)},
\eeq
and we let $\check{\bN}(\tau) = \bN(\tau)/\|\bN(\tau)\|$ denote the actual unit normal vector; similarly we define
${\bL}(\tau)$ and $\bM(\tau)$ as the evaluation of $\bl_\Psi$ and $\bm_\Psi$ at the worldpoint $(Q^\mu)(\tau)$, and set
$\check{\bL}(\tau) = \bL(\tau)/\|\bL(\tau)\|$ and $\check{\bM}(\tau) = \bM(\tau)/\|\bM(\tau)\|$. 
 Then at each $\tau$, the Dreibein $(\check{\bN}(\tau), \check{\bL}(\tau), \check{\bM}(\tau))$
defines a unique element of $SO(3)$, say $\cR(\tau)$, such that 
$\check{\bL}(\tau) = \cR(\tau)\check{\bL}(0)$,
$\check{\bM}(\tau) = \cR(\tau)\check{\bM}(0)$, and 
$\check{\bN}(\tau) = \cR(\tau)\check{\bN}(0)$.

\subsection{The law of motion for center and orientation
of the ring singularity relative to the inertial Dreibein of an infinitely massive point proton at rest}

 We want to re-interpret the guiding law \refeq{eq:dBBlawPOSITION} as supplying also the de Broglie--Bohm--Dirac motion
of a spin-1/2 z$G$KN-type ring singularity of mass $\textsc{m}=m\; (=m_{\mbox{\small e}})$ and charge $\textsc{q}=-e$ (as 
seen in the $r>0$ sheet) which interacts with an infinitely massive point ``proton'' of charge $\textsc{q}'=e$. 
 Indeed, if in the interpretation of \cite{KieTah14a} $(Q^\mu)_0=(t,r,\theta,\varphi)_0$ is the initial worldpoint of the 
point positron, with $\tau=0$ (say), and $(Q^\mu)(\tau) =(t,r,\theta,\varphi)_\tau$ is its evolved worldpoint at proper time $\tau$, 
then, as explained in subsection \ref{sec:DiracEQzGKNrg}, each quadruplet of these BL coordinates implicitly also describes the actual 
location of the z$G$KN-type ring singularity of mass $m$ and charge $-e$ relative to the position of that point ``positron'' ---
which in the interpretation of this paper becomes an infinitely massive point ``proton'' with a fixed location --- with a Dreibein
attached to it which is always a translate of the Dreibein at $\tau=0$.

 However, since the Dirac equation for a  point positron in the static z$G$KN spacetime
the Dirac bi-spinor also supplies a frame for the point positron, when re-interpreting the Dirac equation in our sense as an equation for
a z$G$KN-type ring singularity moving with respect to a point proton with an inertial $\alpha$ frame attached to it, it is clear that 
there will be a contribution to the velocity of the center of the ring singularity coming from $\cR(\tau)$. 
 More precisely, in the interpretation of our current paper, the ring normal $\check{\bN}_{\mathrm{rg}}(\tau)$ now is evolved by the inverse 
rotation, viz. 
\beq
 \check{\bN}_{\mathrm{rg}}(\tau) = \cR^{-1}(\tau) \check{\bN}_{\mathrm{rg}}(0).
\eeq
 This is the law of evolution for the orientation of the ring singularity.
 The position of the center of the moving ring singularity, $\qV(\tau)$, w.r.t. the $\alpha$ frame, is simply
\beq
\qV(\tau) = - \cR^{-1}(\tau) \QV(\tau).
\eeq
 Here, $\tau$ is still the proper time along the world line of the ``Dirac point positron'' from the traditional interpretation; yet since
$Q^0(\tau) = t$, we can express the evolutions in terms of the reference time $t$ of the $\alpha$ frame, which coincides with the time $t$
of the BL coordinates. 

\begin{rem}
 The validity of this re-interpretation of the law of motion is restricted to quasi-static motions relative to the point charge.
\end{rem}

\subsection{Comments on the de Broglie--Bohm-type laws}\label{sec:dBBcomm}

\subsubsection{Stationary states}\label{sec:dBBcommSTATIONARY}

 Interestingly, in contrast to the non-relativistic de Broglie--Bohm law of electron motion generated from a Schr\"odinger wave function 
$\psi$ which yields that for eigenstates $\psi$ the electron sits still (relative to the point ``proton''),\footnote{Many physicists 
  have expressed their unease upon learning that de Broglie--Bohm theory predicts that particles don't move 
  when guided by an eigenstate Schr\"odinger wave function.
 However, when compared with Bohr's famous postulate that the accelerated point charges in the stationary Kepler orbits of
  his model of the Hydrogen atom move but don't radiate, thereby denying the predictions of classical electromagnetic theory for such 
  motions even though classical electromagnetism was used in his model to compute the interactions, it seems to us to 
  actually be an asset of the de Broglie--Bohm theory that it resolves Bohr's dilemma in such a clean way, i.e.
  in complete agreement with the Maxwell--Lorentz field equations.}
the de Broglie--Bohm--Dirac law \refeq{eq:dBBlawPOSITION} and the orientation law \refeq{eq:dBBlawORIENTATION} 
together imply that even for eigenstate bi-spinor wave functions $\Psi$ the ring singularity generally does not 
sit still (relative to the point ``proton''). 
 A dynamical ring singularity as source of the Maxwell equations on a z$G$K spacetime would generally be expected to emit electromagnetic
radiation; note that in our formulation of the general-relativistic zero-$G$ Hydrogen problem in the Born--Oppenheimer approximation 
radiation is neglected because we are using the \emph{quasi-static approximation} for the classical electromagnetic fields to compute the 
interaction energy-momentum vector $\cP$.
 So one tentative conclusion is that the inclusion of radiation effects, even if only semi-classically, will possibly destroy most
of the energy eigenvalues except perhaps the positive and negative ``ground states.'' 
 A similar phenomenon is known from so-called ``non-relativisitc QED;'' in the Hydrogen problem for example, 
the quantum-mechanical Pauli Hamiltonian for an electron in the Coulomb field of a point proton is additionally coupled to the quantized 
radiation field, and only the ground state eigenvalue of the Pauli Hamiltonian survives this radiation coupling.

 However, at the semi-classical level of electromagnetic coupling that we use here, another possibility is conceivable. 
 Namely, one readily computes that for any solution to Dirac's equation,
$$
\bv_{\scriptscriptstyle{\psi}} \cdot \bn_{\scriptscriptstyle{\psi}} 
= \cos^4 {\textstyle\frac{1}{2}}\Si - \sin^4{\textstyle\frac{1}{2}}\Si = \cos\Si,
$$
so that $\bv_{\scriptscriptstyle{\psi}}$ is orthogonal to $\bn_\Psi$ if $\Si = \pm \frac{\pi}{2}$, 
which for non-zero $\psi_1$ and $\psi_2$ is equivalent to $|\psi_1| = |\psi_2|$.
 It is readily checked (see Appendix B) that for eigenstate bi-spinor wave functions $\Psi$, indeed one has $\Si = \pm \frac{\pi}{2}$.
 Now for any actual quadruplet $(t,r,\theta,\varphi)$ the evaluation of $\bn_\Psi$ can be rotated by an element of $SO(3)$ to 
become parallel to $\bn_{\textrm{rg}}$ computed from this quadruplet (we used this fact already once before, to fix the initial conditions;
see above).
 By the stationarity of the bi-spinor wave function, $\bn_\Psi$ evaluated with this quadruplet $(t,r,\theta,\varphi)$ 
will then remain parallel to $\bn_{\textrm{rg}}$ computed from this quadruplet for all time.
 Now, Bohm and Hiley \cite{BohmHileyBOOK} showed in the Pauli limit that the de Broglie--Bohm--Dirac velocity is a vector sum of the
de Broglie--Bohm expression one obtains with the Schr\"odinger equation, plus a gyration that is additionally supplied by the bi-spinor 
structure of the Dirac equation.
 Since, as mentioned above, the (what may be called) de Broglie--Bohm--Schr\"odinger term vanishes for an eigenstate, only the gyrational 
motion remains. 
 We surmise that this will be so also in the the fully relativistic expression; we have yet to confirm this. 
 But suppose this surmise is correct, then by the axisymmetry of the problem, with this quadruplet interpreted as the actual 
position of a point ``positron'' (using again the usual interpretation) we conclude that the point positron moves in a circle
parallel to the ring singularity. 
 In the alternate interpretation put forward in the present paper, the point charge is not a point positron but an infinitely 
massive un-accelerated point ``proton,'' so from the perspective of that point ``proton''
the ring singularity now gyrates stationarily about its axis of symmetry, and therefore generates only the static 
electromagnetic field used to compute the interaction term $\cP$ and no classical electromagnetic radiation fields, in complete 
agreement with our stationarity assumption.

\subsubsection{Quasi-Static Motions}\label{sec:dBBcommQUASISTATIC}

 In a non-stationary state the re-interpretation of our Dirac equation as covering the motion of a z$G$KN ring singularity relative to
an infinitely massive point proton with Dreibein attached is restricted to the regime of quasi-static motions.
 This means that the absolute velocities obtained from the gyrational motion need to be much smaller than the speed of light.
 We now give a rough estimate which shows that for gentle non-stationary perturbations of the Hydrogen ground state we are in the
quasi-static regime.

 Thus, we recall that the law for the evolution of the orientation vector can be decomposed into a sum of the familiar Larmor formula plus
a quantum contribution which contains $|\Psi|^2$. 
 We find that for the KN electromagnetic fields a Bohr distance away from the ring singularity the Larmor frequency times the Bohr 
distance gives a speed $c_{\mbox{\tiny{Larmor}}}$ of roughly
$$
c_{\mbox{\tiny{Larmor}}} \approx 10^{-3}\alpha_{\mbox{\tiny{S}}}^3c,
$$
while the quantum contribution is roughly
$$
c_{\mbox{\tiny{quantum}}} \approx \alpha_{\mbox{\tiny{S}}}c.
$$
 So the speed of the ring singularity relative to a point proton with fixed Dreibein, the $\alpha$ frame, attached is, in such a situation, 
roughly one-hundredth the speed of light and thus ``quasi-static.''

\subsubsection{A Feynman--St\"uckelberg dilemma avoided}\label{sec:dBBcommFeyStueckDILEMMA}

 A final remark concerns a dilemma for the de Broglie--Bohm theory when Dirac's equation is given 
Dirac's orginal point electron interpretation or the St\"uckelberg--Feynman interpretation.
 Namely the fact that with general initial conditions for $\Psi$ \emph{both} 
the putative ``electronic bi-spinors'' and the ``positronic bi-spinors'' (according to the decomposition of Hilbert space into 
positive and negative subspaces) enter $j^\mu$ --- and thus also the  guiding equation \refeq{eq:dBBlawPOSITION} for a single point ---
is perplexing with these interpretations of Dirac's equation, while it is completely natural in our interpretation of the Dirac particle 
as having a bi-particle structure in which electron and anti-electron are merely the ``two different sides of the same medal,'' realized 
through a moving zero-$G$ Kerr--Newman type ring singularity for which we have formulated a Dirac equation.

\section{Outlook}

  The most important task now would seem to be to numerically evaluate the z$G$KN-Dirac spectrum for the pertinent choice of $a$ and
see whether the modifications of the Sommerfeld spectrum are physically reasonable.
  Since the QED corrections like the Lamb shift are spectacular, it would be unwise to expect too much, but it's certainly
an important problem to compute the spectrum quantitatively.

 Other important tasks include the discussion of the Hamiltonian in the presence of a z$G$KN-anomalous magnetic moment; as
pointed out, we expect that essential self-adjointness holds if the coupling constant $(\textsc{q}a- \textsc{i}\pi a^2)e$ is
small in magnitude.
 It is also an interesting question whether more than one self-adjoint extension exists for sufficiently large coupling 
constant, or none at all.
 Incidentally, note that our essential self-adjointness result for the Dirac Hamiltonian on the z$G$KN spacetime implies
that its naked singularity does no harm. 

 We have also begun a study of the Dirac Hamiltonian for a z$G$KN ring singularity exposed to a harmonic magnetic field added 
into the z$G$KN spacetime. 
 We plan to report on this in the future.

 Another obviously important question is to tackle the problem of two interacting z$G$KN-like ring singularities, in particular 
with view toward positronium theory. 
 In that case the Born--Oppenheimer approximation is clearly non-sensical, and one has to face a two-body problem.
 Here one may hope to benefit from the standard many-body approach detailed in \cite{BeSaBOOK}.

 All these problems should be approachable with the quasi-static approximation invoked here, as long as one
is concerned with the discrete spectrum or at most gentle motions.

 To go beyond the quasi-static approximation will be a truly challenging task because then one has to contemplate deformations 
of the ring singularities: stretching, bending, twisting, curling, and who knows what else. 
 It would also require, presumably, solving Einstein's vacuum equations with such dynamical, topologically ring-like singularities as 
``boundary condition,'' plus the dynamical Maxwell equations on such a spacetime solution with the right asymptotics.
 In any event, the generalization of Dirac's equation for this situation should produce the same discrete spectrum as is 
obtained with the quasi-static approximation.

 More speculative, but very tempting and intriguing (to us), are the following thoughts.
 Namely, since in our interpretation the electron and anti-electron are merely the two different 
``topo-spin'' states of a single more fundamental particle, quite naturally one may wonder whether in the many-body 
theory there is an analog of a ferro-magnetic phase transition --- 
this would require a statistical mechanical analysis in the limit of infinitely many z$G$KN ring singularities.
 This phase transition would of course not be the magnetization phase transition, but in the topo-spin space, and amount
to the emergence of a broken symmetry phase in which only electrons (or only positrons) feature in one sheet of physical space.
 This would offer a completely novel explanation of the apparently broken particle/anti-particle symmetry in our world!

 All of the above invokes the zero-gravity limit. 
 Ultimately one wants to switch gravity back on again, not because one should expect significant corrections to the atomic spectra, 
but for conceptual reasons. 
 However, as we have recalled, for $G>0$ the maximal-analytically extended Kerr--Newman solution suffers from causal pathologies and 
strong curvature singularities which --- quite possibly --- are unphysical.
 Upon closer inspection one can trace the origin of these pathologies to the linear Maxwell--Maxwell\footnote{By speaking of the 
Maxwell--Maxwell system, the first ``Maxwell'' stands for the so-called pre-metric Maxwell equations, while the second ``Maxwell'' stands 
for the ``law of the electromagnetic vacuum'' (in condensed matter physics called the ``constitutive relations'') connecting the 
  four electromagnetic three-component fields of the pre-metric equations.
     Thus, in Minkowski spacetime, in a space~$\&$~time splitting of the Faraday tensor $\bF$ into $(\bE,\bB)$ and of the Maxwell
     tensor $\bM$ into $(\bD,\bH)$, choosing Maxwell's ``law of the pure ether'' $\bH=\bB$ and $\bE=\bD$ yields the Maxwell--Maxwell
     equations (and the general relativistic model becomes the Einstein--Maxwell--Maxwell equations).
     A physically potentially important nonlinear law of the electromagnetic vacuum is due to Born and Infeld \cite{BornInfeldBB}, in 
     which case we shall speak of Maxwell--Born--Infeld, respectively Einstein--Maxwell--Born--Infeld equations; see
     \cite{KieEMBIwPTdefects} for a recent discussion of their physical significance.
     As long as we work with the default choice of Maxwell's ``law of the pure ether'' we simply drop the second ``Maxwell,'' but
     retain it in discussions in which it is important to distinguish these from e.g. the Einstein--Maxwell--Born--Infeld equations.}
equations which for 
point and ring sources produce solutions with non-integrable field-energy densities; when coupled to the spacetime structure through
the Einstein--Maxwell--Maxwell equations, the consequences are devastating.
 Thus, before any physically meaningful deformation of the zero-$G$ results to $G>0$ can be attempted, first one has to 
redo the zero-$G$ calculations with the linear Maxwell--Maxwell equations replaced by 
better-behaved equations that yield integrable energy densities for point and ring charges, for instance the nonlinear 
Maxwell--Born--Infeld equations, then deform into the full system of Einstein--Maxwell--Born--Infeld equations.
 So far this has only been accomplished for a single point charge \cite{Tah11} in a topologically simple spacetime, 
but we don't see any reason for why this should not be possible for spacetimes with Zipoy topology and their ring singularities.
 Of course, progress may come slowly.

\medskip
\textbf{Acknowledgement}: This work did not grow in a vacuum but in the stimulating ``fundamental physics research'' atmosphere 
created by our colleages at Rutgers, in particular Joel Lebowitz, Shelly Goldstein, Rodi Tumulka, Avy Soffer, and Eric Carlen. 
We thank all of them. We thank Shelly Goldstein also for pointing out Dankel's work, and Rodi Tumulka for clarifying discussions
about the de Broglie--Bohm laws of evolution.
 Thanks go also to Friedrich Hehl for his many encouraging comments.
 We thank Barry Simon for his remarks on perturbation theory.
 We thank Ted Newman for his interesting comments, and for his permission to quote from one of his communications he sent us.
Lastly, we thank the referees for their thoughtful comments. 

\section*{Appendix}
\begin{appendix}

\subsection*{A: From $|\bq_{\mathrm{pt}}-\bq_{\mathrm{rg}}|$ and $(\bq_{\mathrm{pt}}-\bq_{\mathrm{rg}})\cdot\bn_{\mathrm{rg}}$ 
to oblate spheroidal coordinates}

Let $\bn_{\mathrm{rg}} \in \RR^3$ be a unit vector, representing the orientation of the ring, i.e. 
the ring lies in a plane with unit normal $\bn_{\mathrm{rg}}$ and  is positively oriented with respect to $\bn_{\mathrm{rg}}$.
  Next we introduce oblate spheroidal coordinates $(\xi,\eta,\varphi)$ centered at $\bq_{\mathrm{rg}}$ in order to generate a 
copy of z$G$KN with the ring singularity of radius  $|a|$  centered at $\bq_{\mathrm{rg}}$ and axis of symmetry parallel to the $q^3$ axis.
  These coordinates are defined as follows: let $\br = \bq-\bq_{\mathrm{rg}}$, and define $\xi(\br), \eta(\br)$ and $\varphi(\br)$ as follows
\begin{eqnarray}
r_1 +i r_2 &  = & a \sqrt{1+ \xi^2 } \sqrt{1-\eta^2} e^{i\varphi}\\
r_3  & = & a \xi \eta
\end{eqnarray}
The oblate spheroidal coordinates are closely related to cylindrical coordinates:  Let 
\begin{eqnarray}
z(\bq) &:=& (\bq - \bq_{\mathrm{rg}})\cdot \bn_{\mathrm{rg}} \\
\rho(\bq)& :=& |(\bq - \bq_{\mathrm{rg}})\times \bn_{\mathrm{rg}}| = \sqrt{ |\bq-\bq_{\mathrm{pt}}|^2 - z(\bq)^2}\\
\varphi(\bq) &:=& \tan^{-1} \frac{q^2 - q_{\mathrm{rg}}^2}{q^1 - q_{\mathrm{rg}}^1}+\pi H(-(q^1-q_{\mathrm{rg}}^1))\sgn(q^2-q_{\mathrm{rg}}^2)
\end{eqnarray}
($H$ is the Heaviside function) be cylindrical coordinates centered at $\bq_{\mathrm{rg}}$ and with symmetry axis parallel to $\bn_{\mathrm{rg}}$.
  We  then have the following change of coordinate formula
$$ 
z = a \xi\eta,\qquad \rho = a \sqrt{1+\xi^2}\sqrt{1-\eta^2},\qquad \varphi=\varphi.
$$
In particular, the set of points where $\xi=0$ is a 2-disc in $\RR^3$,
$$ 
\cD = \{ \bq \in \RR^3 \ |z(\bq) = 0, \rho(\bq) \leq a\}
$$
and the ring $\cR := \p \cD$ is the boundary of this disc.
 The level sets of $\xi$ are oblate spheroids, and those of $\eta$ are one-sheeted hyperboloids.
 As was mentioned before, the coordinate $\xi$ can be extended to take on negative values, and the 
physical space is likewise extended to become a double-sheeted branched Riemann space \cite{Evans51}.
  The set of points $ (\xi,\eta,\varphi)\in \RR\times [-1,1]\times [0,2\pi]$ gives us a copy of a 
spatial slice of the maximal extension of z$G$KN, which we denote by $\cN$.

Consider now a point particle with location $\bq_{ \textrm{pt}}$.  The locus of 
possible ring centers $\bq_{ \textrm{rg}}$ with ring normal $\bn_{\mathrm{rg}}$ such that the point particle sits somewhere on that ring, 
is itself a ring of the same radius, this one around the point particle.
  Suppose then that we introduce cylindrical coordinates $z',\varrho',\varphi'$ centered at $\bq_{ \textrm{pt}}$ with symmetry 
axis $\bn_{\mathrm{rg}}$, in the same manner as in the above. Let $\bq'_{\textrm{rg}}$ denote the location of the geometric center of the ring in the primed coordinates based at the point particle.
  One easily checks that 
$$
z'(\bq'_{ \textrm{rg}}) = -z(\bq_{ \textrm{pt}}),\qquad
\varrho'(\bq'_{ \textrm{rg}}) = \varrho(\bq_{ \textrm{pt}}),\qquad
\varphi'(\bq'_{ \textrm{rg}}) = \varphi(\bq_{ \textrm{pt}}) +\pi
$$
  Let $(r',\theta',\varphi')$ be oblate spheroidal coordinates based at $\bq_{ \textrm{pt}}$. 
 Then likewise we have
$$
r'(\bq'_{ \textrm{rg}}) = r(\bq_{ \textrm{pt}}),\qquad
\theta'(\bq'_{ \textrm{rg}}) = \pi - \theta(\bq_{ \textrm{pt}}).
$$
 It now follows that the interaction potential can be  equally well expressed in the primed coordinates:
$$
\bAKNanom(\bq_{\textrm{pt}}) = {\bAKNanom}'(\bq'_{\textrm{rg}})
$$
Indeed the only difference in the expression for the interaction potential would be that the sign of $a$ is flipped. However, since the geometrical center $\bq'_{ \textrm{rg}}$ of the ring singularity has no ontological significance, we invoke 
$\bq'_{ \textrm{rg}}$ only as auxiliary variable and therefore also not use its oblate spheroidal coordinates based at $\bq_{ \textrm{pt}}$,
but instead work with the original oblate spheroidal coordinates $(r,\theta,\varphi)$.

\subsection*{B: Separation of variables for the Dirac Equation on a z$G$KN spacetime}

 When $\textsc{q}=\textsc{i}\pi a$ the Dirac equation \refeq{eq:DIRACeqHAMformat} for the bi-spinor $\hat{\Psi}$ allows 
a clear separation also for the remaining $r$ and $\theta$ derivatives (commonly referred to in 
the literature as ``radial" and ``angular" derivatives, even though $r$ is 
not a radius and $\theta$ is not an angle, except at infinity).
 Thus, when $\textsc{q}=\textsc{i}\pi a$ the Dirac equation \refeq{eq:DIRACeqHAMformat} becomes
\beq\label{eq:DirSep}
(\hat{R} +\hat{A}) \hat{\Psi} = 0,
\eeq
where 
\bna
\hat{R}& := & \left(\begin{array}{cccc} imr & 0 &D_--ieQ\frac{r}{\varpi} & 0 \\
0 & imr & 0 & D_+-ieQ\frac{r}{\varpi}\\
D_+-ieQ\frac{r}{\varpi} & 0 & imr & 0 \\
0 & D_--ieQ\frac{r}{\varpi} & 0 & imr \end{array} \right),\\
\hat{A} &:= & \left(\begin{array}{cccc} -m a \cos\theta & 0 & 0 & -L_- \\
0 & -ma\cos\theta & -L_+ &0 \\
0 & L_- & ma\cos\theta & 0 \\
L_+ & 0 & 0 & ma\cos\theta
\end{array}\right),
\ena
where $D_\pm$ and $L_\pm$ have been given in (\ref{eq:DpmLpm}).
 Once a solution $\hat{\Psi}$ to \refeq{eq:DirSep} is found, the bi-spinor $\Psi := \fD\hat{\Psi}$
solves the original Dirac equation \refeq{eq:DirEqA}.

Following Chandrasekhar we make the Ansatz that a solution $\hat{\Psi}$ of \refeq{eq:DirSep} is of the form
\beq\label{chandra-ansatz} 
\hat{\Psi} = e^{-i(Et-\kappa \varphi)} \left( \begin{array}{c}R_1S_1\\ R_2 S_2\\ R_2 S_1\\ R_1 S_2 \end{array}\right),
\eeq
with $R_k$ being complex-valued functions of $r$ alone, and $S_k$ real-valued functions of $\theta$ alone.  
Let
\beq
\vec{R} := \left(\begin{array}{c} R_1\\ R_2\end{array}\right),\qquad \vec{S} := \left(\begin{array}{c} S_1\\ S_2\end{array}\right).
\eeq
Plugging the Chandrasekhar Ansatz \eqref{chandra-ansatz} into \eqref{eq:DirSep} one easily finds that there must be 
$\la\in\Cset$ such that 
\beq\label{eq:rad} 
T_{rad}\vec{R} =  E\vec{R},
\eeq
\beq\label{eq:ang}
T_{ang}\vec{S} = \la \vec{S},
\eeq
where
\bna
T_{rad} & :=  \label{eq:Trad} 
& \left(\begin{array}{cc} d_- 
&m\frac{r}{\varpi} - i\frac{\la}{\varpi} \\ m\frac{r}{\varpi}+i\frac{\la}{\varpi} 
& -d_+ \end{array}\right)
\\
T_{ang}& := \label{eq:Srad} 
& \left(\begin{array}{cc}  ma\cos\theta & l_- \\
 l_+ &ma\cos\theta  \end{array}\right)
\ena
The operators $d_\pm$ and $l_\pm$ are now ordinary differential operators in $r$ and $\theta$ respectively, 
with coefficients that depend on the unknown $E$, and parameters $a$, $\kappa$, and $eQ$:
\bna\label{opdefs}
d_\pm & := & -i \frac{d}{dr} \pm \frac{-a\kappa + eQ r}{\varpi^2}\\
l_\pm & := & \frac{d}{d\theta} \pm \left( aE\sin\theta - \kappa \csc\theta\right)
\ena
 The angular operator $T_{ang}$ in \refeq{eq:ang} is easily seen to be essentially self-adjoint on
 $(C^\infty_c((0,\pi),\sin\theta d\theta))^2$ and in fact is self-adjoint on its domain inside $(L^2((0,\pi),\sin\theta d\theta))^2$
(e.g. \cite{SufFacCos83,WINKLMEIERa}) with purely point spectrum $\la=\la_n(am,aE,\kappa)$, $n\in \Zset\setminus 0$.
  Thus in particular  $\la \in \RR$.
It then follows that the radial operator $T_{rad}$ is also essentially self-adjoint on $(C^\infty_c(\RR, dr))^2$ and in 
fact self-adjoint on its domain inside $(L^2(\RR,dr))^2.$

Suppose $\vec{R} = (R_1,R_2)^T \in (L^2(\RR))^2$ is a non-trivial solution to $T_{rad} \vec{R} = E\vec{R}$, with $E\in \RR$.
Then
\bea
\frac{dR_1}{dr} - i\left(E -  \frac{ a \kappa-eQr}{\varpi^2}\right)R_1 +\frac{1}{\varpi} (imr + \la) R_2 & = & 0\\
-\frac{d R_2}{dr} -i \left(E- \frac{ a \kappa-eQr}{\varpi^2}\right) R_2  +\frac{1}{\varpi} (imr - \la)R_1 & = & 0.
\eea
Multiply the first equation by ${R}_1^*$ and the second equation by ${R}_2^*$,  add them and take the real part, to obtain
\beq
\frac{d}{dr} \left(|R_1|^2 - |R_2|^2\right) = 0.
\eeq
Thus the difference of the moduli squared of $R_1$ and $R_2$ is constant, hence zero since they need to be integrable at infinity. 
I.e.,
\beq
|R_1| = |R_2| := R.
\eeq
Let $R_j = R e^{i\Phi_j}$ for $j=1,2$.
  Multiply the first equation by ${R}_2^*$, multiply the complex conjugate of the second equation  by $R_1$, and add them to obtain
\beq
\frac{d}{dr} \left(\frac{R_1}{ {R}_2^*}\right) = 0.
\eeq
Thus the ratio $R_1 / {R}_2^*$, and hence the sum of the arguments $\Phi_1+ \Phi_2$ must be a constant, say $\de$.
 Thus $R_1 = {R}_2^*e^{i\de}$.
Since multiplication by a constant phase factor is a gauge transformation for Dirac bi-spinors, we can replace $\hat{\Psi}$ 
with $\hat{\Psi}' = e^{-i\de/2}\hat{\Psi}$ without changing anything.
  The spinor thus obtained has the same form as \refeq{chandra-ansatz}, now with $R'_1 = {R'_2}^*$.
  Thus without loss of generality we can assume $\de = 0$ and $R_1 = {R_2^*}$.

This motivates us to set
\beq
R_1 =\frac{1}{\sqrt{2}}( u-iv),\qquad R_2  =\frac{1}{\sqrt{2}}( u + iv)
\eeq
 for real functions $u$ and $v$.
   Consider the unitary matrix
\beq
U := \frac{1}{\sqrt{2}}\left(\begin{array}{cc} 1 & -i \\ \ 1 & i \end{array}\right).
\eeq
A change of basis using $U$ brings the radial system \refeq{eq:rad} into the following standard (Hamiltonian) form
\beq\label{eq:hamil}
(H_{rad} -E)\left(\begin{array}{c} u \\ v \end{array}\right) = \left(\begin{array}{c}0 \\ 0 \end{array}\right),
\eeq
where
\beq\label{eq:Hrad}
H_{rad} := \left(\begin{array}{cc} m \frac{r}{\varpi} + \frac{\ga r+a\kappa}{\varpi^2} & -\p^{}_r + \frac{\la}{\varpi} \\[20pt]
 \p^{}_r +\frac{\la}{\varpi} & -m\frac{r}{\varpi} + \frac{\ga r+a\kappa}{\varpi^2}  \end{array}\right),
\eeq
(cf. \cite{ThallerBOOK}, eq (7.105)) with
\beq
\ga := -eQ <0.
\eeq  
Consider now the equations \refeq{eq:hamil} and \refeq{eq:ang} for unknowns $(u,v)$ and $(S_1,S_2)$.
  Let us define new unknowns $(R,\Om)$ and $(S,\Theta)$ via the Pr\"ufer transform \cite{Pru26}
\beq\label{eq:prufer}
u =\sqrt{2} R \cos\frac{\Om}{2},\quad v = \sqrt{2} R \sin\frac{\Om}{2},\quad S_1 = S \cos\frac{\Theta}{2},\quad S_2 = S \sin\frac{\Theta}{2}.
\eeq
Thus
\beq
R =\half\sqrt{u^2+v^2},\quad\Om = 2\tan^{-1}\frac{v}{u},\quad S = \sqrt{S_1^2+S_2^2},\quad \Theta = 2\tan^{-1}\frac{S_2}{S_1}.
\eeq
As a result, $R_1 = Re^{-i\Om/2}$ and $R_2 = Re^{i\Om/2}$.  
Hence $\hat{\Psi}$ can be re-expressed in terms of the Pr\"ufer variables, thus
\beq\label{genCK}
\hat{\Psi}(t,r,\theta,\varphi) = R(r)S(\theta)e^{-i(Et-\ka \varphi)} \left(\begin{array}{l}
\cos(\Theta(\theta)/2)e^{-i\Om(r)/2}\\
\sin(\Theta(\theta)/2) e^{+i\Om(r)/2}\\
\cos(\Theta(\theta)/2)e^{+i\Om(r)/2}\\
\sin(\Theta(\theta)/2)e^{-i\Om(r)/2}\end{array}\right),
\eeq
and we obtain the following equations for the new unknowns
\bna
\frac{d}{dr}\Om    &=& 2 \frac{mr}{\varpi} \cos\Om + 2\frac{\la}{\varpi} \sin\Om +2\frac{a\kappa + \gamma r}{\varpi^2} - 2E ,\label{eq:Om}\\
\frac{d}{dr} \ln R &=& \frac{mr}{\varpi}\sin\Om - \frac{\la}{\varpi} \cos\Om .\label{eq:R}
\ena
Similarly,
\bna
\frac{d}{d\theta}\Theta &=& -2ma\cos\theta\cos\Theta + 2\left(aE \sin\theta - \frac{\kappa}{\sin\theta}\right)\sin\Theta + 2\la,\label{eq:Theta}\\
\frac{d}{d\theta} \ln S &=& -ma \cos\theta\sin\Theta - \left(aE\sin\theta - \frac{\kappa}{\sin\theta}\right)\cos\Theta. \label{eq:S}
\ena
In \cite{KieTah14a} we show that the above equations, under suitable assumptions on the parameters, have solutions that give rise to
 an eigenstate $\hat{\Psi}\in {\sf H}$.

We conclude by noting that, via a slight change in notation, \refeq{genCK} can be brought to the generalized Cayley-Klein form 
\refeq{canspin}, namely, setting
$$
\tilde{R} := \sqrt{2}R(r)S(\theta),\quad
\tilde{S} := -Et+\ka \varphi
$$
one obtains
\beq\label{eigenstate}
\hat{\Psi} = \frac{1}{\sqrt{2}}\tilde{R}e^{i\tilde{S}}\left(\begin{array}{l}
 \cos(\Theta/2)e^{-i{\Om}/2}\\
\sin(\Theta/2) e^{i{\Om}/2}\\
\cos(\Theta/2)e^{i{\Om}/2}\\
\sin(\Theta/2)e^{-i{\Om}/2}\end{array}\right) =\frac{1}{\sqrt{2}} \tilde{R}e^{i\tilde{S}} \left(\begin{array}{l} \check{\psi}\\ 
\check{\psi}^*\end{array}\right)
\eeq
In particular, for bi-spinors of this type, $\Si = \pi/2$, so that $\bv \cdot \bn = 0$ and thus, per discussion in \ref{sec:dBBcomm},
 the resulting motion of the ring is a stationary circulation.

\subsection*{C: Other proposals to link Dirac's equation and the Kerr--Newman spacetime.}

   To be sure, we are 
not the first ones to suspect a connection between the Kerr--Newman (KN) spacetime and the Dirac electron,\footnote{Judged 
  by his early speculations \cite{EinsteinGRandELEMENTARYparticles}, Einstein could have been tempted, too, had he been around in
the 60s.}
recall footnote \ref{Ted}.
  Speculations in this direction seem to have appeared first in Carter's  paper \cite{Car68}, where he observed that 
the Kerr--Newmann solution features a $g$-factor of 2, just like Dirac's wave equation for the electron.
 Subsequently this idea was picked up by others, see in particular \cite{PekFra87,Lyn04,ArcPei04,Bur08}, and related papers mentioned below.
	However, all these earlier proposals have run into grave difficulties.
        To some measure these are caused by the physically questionable character of the KN solution with $G>0$, 
unveiled by Carter \cite{Car68} (see also \cite{HehlREVIEW,ONeillBOOK}), while the $G=0$ proposals in \cite{PekFra87,Lyn04}
suffer from artificial infinities introduced by truncating the z$G$KN manifold to obtain a topologically simple spacetime.

        As to its questionable character, the maximal-analytical extension of the axisymmetric and stationary ``outer'' KN
spacetime with $G>0$ \cite{Car68} has a very strong 
curvature singularity on a timelike cylindrical surface whose cross-section with constant-$t$ hypersurfaces is a circle; here, $t$ is a 
coordinate pertinent to the \emph{asymptotically (at spacelike $\infty$) timelike} Killing field that encodes the stationary character 
of the ``outer regions'' of the KN spacetime.
	This circle is commonly referred to as \emph{the ``ring'' singularity}.
	The region near the ring is especially pathological since it includes closed timelike loops.\footnote{The timelike
		ring singularity of the KN manifold is itself the limit of closed timelike loops.}
	Anybody unfortunate enough to be trapped in this region is doomed to repeat the same mistakes over and over again.
	In the black-hole sector of the parameter space of the KN family of spacetimes this acausal region is 
``hidden'' behind an event horizon, but the electron parameter values are not in this black-hole sector, the ring singularity 
is then ``naked,'' and the closed timelike loops turn the entire manifold into a \emph{causally vicious set}.

	As pointed out earlier, it was also discovered in \cite{Car68} that the ring singularity is associated with the 
topologically non-trivial feature that the maximal-analytical extension of the KN spacetime consists of two 
asymptotically flat ends which are ``doubly linked through the ring.''
	Carter showed that this Zipoy topology survives the vanishing-charge limit of the KN manifold, which 
yields the maximal-analytic extension \cite{BoyLin67} of Kerr's solution \cite{Ker63} to Einstein's vacuum equations, cf.~\cite{HawEll73}. 
	He furthermore showed that this topology survives the vanishing-mass limit of the Kerr manifold, which yields an otherwise
flat vacuum spacetime, which is also the vanishing-mass limit of a static spacetime family
discovered and completely described a few years earlier by Zipoy \cite{Zip66}.

        We emphasize that its nontrivial topology is \emph{not necessarily} a physically questionable feature of the maximal-analytically
extended Kerr--Newman solution to the Einstein--Maxwell equations, in the sense that it is not an arbitrary mathematical construct, does
not seem to lead to paradoxical conclusions, and no empirical data seem to rule it out yet.
        Yet it is frequently \emph{perceived} as a physically questionable feature.
        In particular, some general relativists seem to abhor such topologically non-trivial spacetimes,\footnote{Unfortunately
          this attitude can be encountered even nowadays, despite the ubiquity of topologically non-trivial Calabi--Yau manifolds 
in string theory.}
as exemplified by the following quote from \cite{BonSack68}: ``[Zipoy] endowed [his spacetimes] with rather terrifying topological 
properties.''
	Guided by similar sentiments, and following \cite{BonSack68} in their treatment of the Zipoy spacetimes, Israel \cite{Isr70} 
engineered a single-leafed  electromagnetic spacetime, obtained from half of the maximal-analytically extended KN manifold 
by identifying the limit points of sequences which approach the disc spanned by the ring from above 
with those approaching it from below, in \emph{one and the same} leaf.
	However, this ``short-circuiting'' procedure produces a spacetime with an \emph{ultra-singular} disc source:\footnote{Incidentally, 
                for astrophysical interpretations of the Kerr--Newman spacetime (other than a black hole) another cut-and-short-circuiting 
                procedure has been proposed in attempts to find a singular disc source for the spacetime and its electromagnetic fields;
		see \cite{BizakETalB,GRGdisc}, and also \cite{BizakETalA} for the Kerr spacetime. 
		This results in an infinitely extended, infinitely thin disc source which is not as physically bizarre as the 
		``disc source'' spanned by the KN ring singularity.}
the disc's interior
carries an infinite total charge and current, and with the wrong sign on top of that, which have to be ``partly compensated'' by an infinite 
opposite charge and current on its rim to leave a finite net amount of charge and current as diagnosed by the asymptotic flux integrals; 
see \cite{Isr70}, and also \cite{PekFra87}.
	Furthermore, the disc holds a negative infinite amount of mass in its interior, to be ``compensated partly'' by a positive infinite 
amount of mass on its rim in order to produce the ADM mass, and it rotates at superluminal speeds \cite{Isr70}.
     The mathematically artificial construction of a topologically simple spacetime with a ``disc source'' of such
physically bizarre proportions is, in our opinion, too high a price to pay to satisfy --- what to us seems to have been --- 
just a prejudice against topologically non-trivial spacetimes,\footnote{It is quite an irony of sorts that 35 years earlier Einstein, 
		prejudiced at the time against singularities, together with Rosen \cite{EinsteinRosenA} envisaged a topologically 
                non-trivial linkage (the so-called Einstein--Rosen bridge) of two copies of an outer region of the Schwarzschild 
                manifold in order to avoid its singularities.
		As known nowadays, ``the Einstein--Rosen bridge cannot be crossed,'' yet it is part of the maximal-analytical extension 
		of the outer region of the Schwarzschild manifold \cite{Kruskal,Szekeres}, 
		which --- ironically, too --- is topologically simple, yet singular, 
		having the past and future, spacelike curvature singularities which Einstein hoped to avoid.}
expressed in \cite{BonSack68,Isr70}.
 	Lastly, there still is a region of closed timelike loops in Israel's single-leafed $G>0$ spacetime.\footnote{To 
          the extent that it has been addressed at all in the pertinent literature, it has only been pointed out 
        that in the zero-charge limit the acausal region is excised by the cutting and short-circuiting process \cite{Isr70}, 
        yet in a footnote Israel notes that in the charged situation some acausal region remains after the cutting and short-circuiting.}

	By contrast, Arcos and Pereira \cite{ArcPei04} embrace the topologically non-trivial character of the maximal analytically extended
Kerr--Newman spacetime.
	Yet, also these authors cut and re-glue the $G>0$ KN manifold, though in a different way.
        Namely, in \cite{ArcPei04} it is argued that the acausal region has to be ``cut out'' and its boundary ``re-glued.'' 
        Thereby the ring singularity is removed as well, leading to a manifold in the spirit of John Wheeler's ``charge without charge'' 
\cite{Whe62}.
        Unfortunately, as noted in \cite{ArcPei04}, their manifold has the non-orientable topology of a Klein bottle.
        Furthermore, the authors of \cite{ArcPei04} only claim the continuity of the metric but do not investigate 
the higher regularity of the metric and electromagnetic fields across the re-glued cut, and in fact their gluing process 
may well have introduced some artificial singularities.

        Also in contrast to \cite{Isr70,PekFra87,Lyn04}, the authors of \cite{ArcPei04} actually try to establish a mathematical 
connection between the electromagnetic Kerr--Newman manifold and the Dirac equation for the electron beyond the fact that
both feature a $g$-factor of 2.
        Interestingly enough, they propose to identify their cut-and-re-glued Kerr--Newman spacetime itself with a bi-spinor 
solution of Dirac's equation,
by mapping the metric components of their manifold into the components of a Dirac bi-spinor which satisfies a Dirac equation for the free 
electron; a similar proposal was made in \cite{Bur08}.
	However, such an identification is conceptually unstable under perturbations: first, a Dirac bi-spinor does not 
have enough degrees of freedom to map into a solution of the Einstein--Maxwell equations in a generic neighborhood of the 
KN solution;
and second, a Dirac bi-spinor satisfying a Dirac equation for a non-free electron will hardly produce a Lorentzian metric 
(whether using the mapping of \cite{ArcPei04} or of \cite{Bur08}) that solves the Einstein--Maxwell equations. 
        Thus, the possibility of such an identification for the KN manifold, should it pan out rigorously,
would rather seem to be a mathematical curiosity without deeper physical implications.

  By considering the zero-$G$ limit of the maximal-analytically extended KN spacetime we are avoiding all 
the acausal and other pathological features of that spacetime which have prompted the ad-hoc surgical procedures in the studies 
of \cite{ArcPei04} and \cite{Bur08} that involve the removal of the ring singularity.
  In particular, we are able to compute the electromagnetic interaction between the z$G$KN ring singularity and a point
charge at rest elsewhere in the z$G$KN spacetime which provides the interaction term in Dirac's equation --- something that
is manifestly impossible in the approaches of  \cite{ArcPei04} and \cite{Bur08}.

  Also others, in particular Lynden-Bell \cite{Lyn04} and co-workers \cite{GaiLyn07}, have contemplated the zero-$G$ limit of the
Kerr--Newman spacetime as a better candidate to provide a link between general relativity and the electron.
   However, these authors do not consider the maximal-analytical extension but, following Israel's lead \cite{Isr70}, 
introduce an artificial branch cut at the disc spanned by the ring.
	{As pointed out in \cite{Isr70}, it is possible to 
		interpret the KN electromagnetic fields on Minkowski spacetime, which is the zero-$G$ limit of Israel's single-leafed
                short-circuited truncation of the maximal-analytically extended KN spacetime. 
		Kaiser \cite{Kai04} finds that interpreted in this way the superluminal speeds are absent, and the rim of the disc rotates 
                exactly at the speed of light;
		still the interior of the disc carries an infinite total charge and current  which are ``partly compensated'' 
		by an infinite opposite charge and current on the rim of the disc, leaving a finite net amount of charge and 
                current as diagnosed by the usual asymptotic formulas of classical electromagnetism.
	Undeterred by such warning signs, it has been speculated in \cite{Lyn04} ``that something similar may occur 
in the quantum electrodynamic charge distribution surrounding the point electron.''}

	Since nature has her own ways, perhaps she does associate electrons with such ultra-singular discs.
	However, we have serious doubts that there is any fundamental physical truth in this notion.
	In particular, from a mathematical point of view the ``disc singularity of the Kerr--Newman manifold'' is
an \emph{artificial construct}, akin to a branch cut,  obtained by \emph{arbitrarily}\footnote{There 
		are uncountably many other ways to cut and then short-circuit one of the so obtained leaves, and each of these 
		artificially-so-created singularities could be claimed to be ``the source of the Kerr--Newman fields.''
		One may want to argue that the disc is special because
		the maximal-analytically extended Kerr--Newman manifold has a reflection symmetry which leaves the disc spanned by the ring 
		singularity invariant, and this is inherited by the ``Kerr--Newman manifold with disc singularity;'' but that symmetry 
                remains intact also if one cuts along a sphere 
                with the ring singularity as equator, then identifies points in the two hemispheres by 	reflection.}
choosing the disc spanned by the ring singularity for cutting the maximal-analytically extended Kerr--Newman manifold, 
then discarding half of it, and short-circuiting the remaining part at the cut.

  Besides introducing mathematically artificial pathologies, in Israel's one-leaved truncation of the maximal-analytically extended
zero-$G$ Kerr--Newman spacetime the crucial electromagnetic bi-particle structure of the z$G$KN
ring singularity is lost, while for us it is the main feature needed to coherently interpret the spectral properties of 
``Dirac's wave equation for the electron.''

 Furthermore, as far as we can tell, all previous proposals to link the KN manifold with ``Dirac's equation for the electron''
invariably invoke Carter's observation that the {\em gyromagnetic ratio} of this
electromagnetic spacetime (i.e. by definition the ratio of its total magnetic moment to its total angular momentum)
is equal to $\textsc{q/m}$, which coincides with the value predicted for the electron by the one-body Dirac equation.
 Yet note that in the zero-$G$ limit the KN spacetime becomes \emph{static} so that it becomes nonsensical to speak of 
its gyro-magnetic ratio, indicating that the $g$-factor may be a false lead. 
 Of course, one may still assign an angular momentum to the ring singularity, but how to do it unambiguously is not clear.

 Lastly, to the extend that an estimate of the radius $|a|$ of the KN singularity has been attempted in earlier proposals that try
to link the KN manifold with ``Dirac's equation for the electron,'' the suggestive identification of the Kerr--Newman 
magnetic moment with the Bohr magneton is made, which leads to $2a= \hbar/mc$.
  In contrast to such estimates, our spectral analysis shows that the Bohr magneton is supplied already by the structure 
of Dirac's equation for a point particle featuring an electric monopole but no magnetic dipole, so that the magnetic moment 
carried by the z$G$KN ring singularity is to be identified with the electron's {\em anomalous} magnetic moment, yielding a 
significantly smaller ring radius $|a|$ than in all the other studies.
\end{appendix}

\baselineskip=13pt
\bibliographystyle{plain}
\bibliography{KTZzGKNDpapIIa}

\begin{thebibliography}{10}

\bibitem{Appell}
P.~Appell.
\newblock Quelques remarques sur la th\'eorie des potentiels multiforms.
\newblock {\em Math. Ann.}, 30:155--156, 1887.

\bibitem{ArcPei04}
H.~I. Arcos and J.~G. Pereira.
\newblock {K}err-{N}ewman solution as a {D}irac particle.
\newblock {\em Gen. Rel. Grav.}, 36:2441--2464, 2004.

\bibitem{WINKLMEIERa}
D.~Batic, H.~Schmid, and M.~Winklmeier.
\newblock On the eigenvalues of the {C}handrasekhar-{P}age angular equation.
\newblock {\em J. Math. Phys.}, 46(1):012504, 35, 2005.

\bibitem{BeSaBOOK}
H.~A. Bethe and E.~E. Salpeter.
\newblock {\em Quantum {M}echanics of {O}ne- and {T}wo-{E}lectron {A}toms}.
\newblock Plenum Press, 1977.

\bibitem{BizakETalA}
J.~Bi\v{c}\'ak and T.~Ledvinka.
\newblock Relativistic disks as sources of the {K}err metric.
\newblock {\em Phys. Rev. Lett.}, 71:1669--1672, 1993.

\bibitem{Bohm52a}
D.~Bohm.
\newblock A suggested interpretation of the quantum theory in terms of
  ``hidden'' variables. part {I}.
\newblock {\em Phys. Rev.}, 85:166--179, 1952.

\bibitem{Bohm52b}
D.~Bohm.
\newblock A suggested interpretation of the quantum theory in terms of
  ``hidden'' variables. part {II}.
\newblock {\em Phys. Rev.}, 85:180--193, 1952.

\bibitem{BohmHileyBOOK}
D.~Bohm and B.~Hiley.
\newblock {\em The undivided universe}.
\newblock Routledge, London \&\ New York, 1993.

\bibitem{BST55}
D.~Bohm, R.~Schiller, and J.~Tiomno.
\newblock A causal interpretation of the {P}auli equation ({A}).
\newblock {\em Nuovo Cim. Suppl.}, 1:48--66, 1955.

\bibitem{BonSack68}
W.~B. Bonnor and A.~Sackfield.
\newblock The interpretation of some spheroidal metrics.
\newblock {\em Commun. Math. Phys.}, 8:338--344, 1968.

\bibitem{BornSCATTERb}
M.~Born.
\newblock {Q}uantenmechanik der {S}tossvorg\"ange.
\newblock {\em Z. Phys.}, 38:803--827, 1926.

\bibitem{BornSCATTERa}
M.~Born.
\newblock Zur {Q}uantenmechanik der {S}tossvorg\"ange.
\newblock {\em Z. Phys.}, 37:863--867, 1926.

\bibitem{BornInfeldBB}
M.~Born and L.~Infeld.
\newblock Foundation of the new field theory.
\newblock {\em Proc. Roy. Soc. London}, A 144:425--451, 1934.

\bibitem{BoyLin67}
R.~H. Boyer and R.~W. Lindquist.
\newblock Maximal analytic extension of the {K}err metric.
\newblock {\em J. Math. Phys.}, 8(2):265--281, 1967.

\bibitem{BrillCohen66}
D.~R. Brill and J.~M. Cohen.
\newblock Cartan frames and the general relativistic {D}irac equation.
\newblock {\em J. Math. Phys.}, 7(2):238--243, 1966.

\bibitem{BrownRavenhall}
G.~E. Brown and D.~G. Ravenhall.
\newblock On the interaction of two electrons.
\newblock {\em Proc. Roy. Soc. London, A}, 208:552--559, 1951.

\bibitem{Bur08}
A.~Burinskii.
\newblock The {D}irac--{K}err-{N}ewman electron.
\newblock {\em Grav. Cosmol.}, 14:109--122, 2008.

\bibitem{CartanBOOK}
E.~Cartan.
\newblock {\em Theory of Spinors}.
\newblock MIT, Cambridge MA, 1966; repr. Dover, 1983.

\bibitem{Car68}
B.~Carter.
\newblock Global structure of the {K}err family of gravitational fields.
\newblock {\em Phys. Rev.}, 174:1559--1571, 1968.

\bibitem{CarMcL82}
B.~Carter and R.~G. McLenaghan.
\newblock Generalized master equations for wave equation separation in a {K}err
  or {K}err-{N}ewman black hole background.
\newblock In R.~Ruffini, editor, {\em Proceedings of the second {M}arcel
  {G}rossmann meeting on general relativity}, pages 575--585. North-Holland
  Publishing Company, 1982.

\bibitem{ChandraDIRACinKERRgeomERR}
S.~Chandrasekhar.
\newblock Errata: ``{T}he solution of {D}irac's equation in {K}err geometry''.
\newblock {\em Proc. Roy. Soc. London Ser. A}, 350(1663):565, 1976.

\bibitem{ChandraDIRACinKERRgeom}
S.~Chandrasekhar.
\newblock The solution of {D}irac's equation in {K}err geometry.
\newblock {\em Proc. Roy. Soc. London Ser. A}, 349(1659):571--575, 1976.

\bibitem{Dankel70}
T.~G. Dankel.
\newblock \!\!\!, {J}r. {M}echanics on manifolds and the incorporation of spin
  into {N}elson's {S}tochastic mechanics.
\newblock {\em Arch. {R}at. Mech. Anal.}, 37:192--221, 1970.

\bibitem{DarwinDIRACspec}
C.~G. Darwin.
\newblock The wave equation of the electron.
\newblock {\em Proc. Roy. Soc. London}, A118:654--680, 1928.

\bibitem{DavRei71}
L.~C. Davis and J.~R. Reitz.
\newblock Solution to potential problems near a conducting semi-infinite sheet
  or conducting disk.
\newblock {\em Amer. J. Phys.}, 39(10):1255--1265, 1971.

\bibitem{deBroglieSOLVAY}
L.~V. de~Broglie.
\newblock La nouvelle dynamique des quanta.
\newblock In J.~Bordet, editor, {\em Cinqui\`eme {C}onseil de {P}hysique
  {S}olvay, Bruxelles, 1927}. Gauthier-Villars et Cie., Paris, 1928.
\newblock English translation: ``The new dynamics of quanta'', p.374-406 in:
  Bacciagaluppi, G., and Valentini, A., {\sl Quantum Theory at the Crossroads},
  Cambridge Univ. Press, 2013.

\bibitem{HBDmodel}
D.~D\"urr, S.~Goldstein, K.~M\"unch-Berndl, and N.~Zangh\`{\i}.
\newblock Hypersurface {B}ohm--{D}irac models.
\newblock {\em Phys. Rev.}, A 60:2729--2736, 1999.

\bibitem{DGZbook}
D.~D\"urr, S.~Goldstein, and N.~Zangh\`{\i}.
\newblock {\em Quantum Physics Without Quantum Philosophy}.
\newblock Springer, 2013.

\bibitem{DTbook}
D.~D\"urr and S.~Teufel.
\newblock {\em Bohmian {M}echanics: {T}he {P}hysics and {M}athematics of
  Quantum Theory}.
\newblock Springer, 2012.

\bibitem{EfiVor74}
A.~B. Efimov and V.~N. Vorob'ev.
\newblock A mixed boundary-value problem for the {L}aplace equation.
\newblock {\em Journal of engineering physics}, 26(5):664--666, 1974.

\bibitem{EinsteinGRandELEMENTARYparticles}
A.~Einstein.
\newblock Spielen {G}ravitationsfelder im {A}ufbau der materiellen
  {E}lementarteilchen eine wesentliche {R}olle?
\newblock {\em Sitzungsber. der Preuss. Akad. der Wiss., Berlin}, pages
  349--356, 1919.

\bibitem{EinsteinRosenA}
A.~Einstein and N.~Rosen.
\newblock The particle problem in the general theory of relativity.
\newblock {\em Phys. Rev.}, 48:73--77, 1935.

\bibitem{Evans51}
G.~C. Evans.
\newblock {\em Lectures on Multiple-Valued Harmonic Functions in Space}.
\newblock Univ. of California Press, Berkeley and Los Angeles, 1951.

\bibitem{Fey49}
R.~P. Feynman.
\newblock Theory of positrons.
\newblock {\em Phys. Rev.}, 76:749--759, 1949.

\bibitem{GaiLyn07}
J.~R. Gair and D.~Lynden-Bell.
\newblock Electromagnetic fields of separable spacetimes.
\newblock {\em Class. Quant. Grav.}, 24(6):1557, 2007.

\bibitem{GRGdisc}
G.~Garc\'{\i}a-Reyes and G.~A. Gonz\'ales.
\newblock Rotating and counterrotating relativistic thin disks as sources of
  stationary electrovacuum spacetimes.
\newblock {\em Brazil. J. Phys.}, 37:1094--1105, 2007.

\bibitem{Gla65}
I.~M. Glazman.
\newblock {\em Direct methods of qualitative spectral analysis of singular
  differential operators}.
\newblock Israel Prog. Sci. Transl., Jerusalem, 1965.

\bibitem{Gol80}
H.~Goldstein.
\newblock {\em Classical Mechanics}.
\newblock Addison-Wesley, Cambridge, MA, 1980.

\bibitem{GordonDIRACspec}
W.~Gordon.
\newblock Die {E}nergieniveaus des {W}asserstoffatoms nach der {D}iracschen
  {Q}uantentheorie des {E}lektrons.
\newblock {\em Z. Phys.}, 48:11--14, 1928.

\bibitem{GMRonQED}
W.~Greiner, B.~M\"uller, and J.~Rafelski.
\newblock {\em Quantum electrodynamics of strong fields}.
\newblock Springer, 1985.

\bibitem{HawEll73}
S.~Hawking and G.~Ellis.
\newblock {\em The Large Scale Structure of Space-Time}.
\newblock Cambridge Univ. Press, Cambridge, 1973.

\bibitem{HehlREVIEW}
C.~Heinicke and F.~W. Hehl.
\newblock Schwarzschild and {K}err solutions of {E}instein's field equation: An
  introduction.
\newblock {\em Int. J. Mod. Phys. D}, 2015.

\bibitem{Hob00}
E.~W. Hobson.
\newblock On {G}reen’s function for a circular disk, with applications to
  electrostatic problems.
\newblock {\em Trans. Cambridge Phil. Soc}, 18:277--291, 1900.

\bibitem{Hol88}
P.~R. Holland.
\newblock Causal interpretation of a system of two spin-1/2 particles.
\newblock {\em Physics Reports (Review Section of Physics Letters)},
  169(5):293--327, 1988.

\bibitem{HollandBook}
P.~R. Holland.
\newblock {\em The {Q}uantum {T}heory of {M}otion: {A}n {A}ccount of the de
  {B}roglie--{B}ohm causal {I}nterpretation of {Q}uantum {M}echanics}.
\newblock Cambridge Univ. Press, 1993.

\bibitem{Isr70}
W.~Israel.
\newblock Source of the {K}err metric.
\newblock {\em Phys. Rev. D (3)}, 2:641--646, 1970.

\bibitem{JacksonBOOK}
J.~D. Jackson.
\newblock {\em Classical Electrodynamics}.
\newblock Wiley, New York, 2nd edition, 1975.

\bibitem{Kai04}
G.~Kaiser.
\newblock Distributional sources for {N}ewman's holomorphic {C}oulomb field.
\newblock {\em J. Phys. A}, 37(36):8735, 2004.

\bibitem{KeppelerSEMIclassQM}
S.~Keppeler.
\newblock {\em Spinning Particles - Semiclassics and Spectral Statistics}.
\newblock Springer, 2003.

\bibitem{Ker63}
R.~P. Kerr.
\newblock Gravitational field of a spinning mass as an example of algebraically
  special metrics.
\newblock {\em Phys. Rev. Lett.}, 11:237--238, 1963.

\bibitem{KieFoP}
M.~K.-H. Kiessling.
\newblock Misleading signposts along the de {B}roglie-{B}ohm road to quantum
  mechanics.
\newblock {\em Found. Phys.}, 40:418--429, 2010.

\bibitem{KieEMBIwPTdefects}
M.~K.-H. Kiessling.
\newblock On the motion of point defects in relativistic fields.
\newblock In F.~Finster, O.~M\"uller, M.~Nardmann, J.~Tolksdorf, and
  E.~Zeidler, editors, {\em Quantum {F}ield {T}heory and {G}ravity.
  {C}onceptual and mathematical advances in the search for a unified framework,
  (Regensburg, Sept. 28 -- Oct. 1, 2010)}, pages 299--335. Birkh\"auser, Basel,
  2012.

\bibitem{KieTah14a}
M.~K.-H. Kiessling and A.~S. Tahvildar-Zadeh.
\newblock The {D}irac point electron in zero-gravity {K}err-{N}ewman spacetime.
\newblock {\em J. Math. Phys.}, 56:042303 (43), 2015.

\bibitem{KieTah15a}
M.~K.-H. Kiessling and A.~S. Tahvildar-Zadeh.
\newblock Dirac's point electron in the zero-gravity {K}err--{N}ewman world.
\newblock In F.~Finster, J.~Tolksdorf, and E.~Zeidler, editors, {\em Quantum
  {M}athematical {P}hysics (Regensburg, Sept. 29--Oct. 2, 2014)}, page 26pages.
  Birkh\"auser, Basel, 2016.

\bibitem{Kin69}
W.~Kinnersley.
\newblock Type {D} vacuum metrics.
\newblock {\em J. Math. Phys.}, 10:1195, 1969.

\bibitem{Kruskal}
M.~D. Kruskal.
\newblock Maximal extension of {S}chwarzschild metric.
\newblock {\em Phys. Rev.}, 119:1743--1745, 1960.

\bibitem{BizakETalB}
T.~Ledvinka, M.~\v{Z}ofka, and J.~Bi\v{c}\'ak.
\newblock Relativistic disks as sources of {K}err--{N}ewman fields.
\newblock In R.~Ruffini, editor, {\em Proceedings of the eighth {M}arcel
  {G}rossmann meeting on general relativity, Jerusalem (1997)}. World
  Scientific, Singapore, 1998.

\bibitem{LiebSeiringerBOOK}
E.~H. Lieb and R.~Seiringer.
\newblock {\em The {S}tability of {M}atter in {Q}uantum {M}echanics}.
\newblock Cambridge Univ. Press, Cambdridge, UK, 2010.

\bibitem{Lyn04}
D.~Lynden-Bell.
\newblock Electromagnetic magic: The relativistically rotating disk.
\newblock {\em Phys. Rev. D}, 70:105017, Nov 2004.

\bibitem{MessiahBOOKii}
A.~Messiah.
\newblock {\em Quantum {M}echanics}, volume~II.
\newblock North Holland Publishing Company, 1965.

\bibitem{Neu1880}
C.~Neumann.
\newblock {\"Uber die peripolaren Koordinaten}.
\newblock {\em Abh. math.-phys. {C}lasse k\"onigl. s\"achs. {G}es. Wiss.},
  12:365--398, 1880.

\bibitem{Neu51}
S.~F. Neustadter.
\newblock {\em Multiple valued harmonic functions with circle as branch curve},
  volume~1.
\newblock University of California Press, 1951.

\bibitem{New02}
E.~T. Newman.
\newblock Classical, geometric origin of magnetic moments, spin-angular
  momentum, and the {D}irac gyromagnetic ratio.
\newblock {\em Phys. Rev. D}, 65:104005, Apr 2002.

\bibitem{NCCEPT65}
E.~T. Newman, E.~Couch, K.~Chinnapared, A.~Exton, A.~Prakash, and R.~Torrence.
\newblock Metric of a rotating, charged mass.
\newblock {\em J. Math. Phys.}, \textbf{6}(6):918--919, 1965.

\bibitem{ONeillBOOK}
B.~O'Neill.
\newblock {\em The {G}eometry of {K}err black holes}.
\newblock L.K. Peters, Wellesley, 1995.

\bibitem{Page76}
D.~Page.
\newblock Dirac equation around a charged, rotating black hole.
\newblock {\em Phys. Rev. D}, 14:1509--1510, 1976.

\bibitem{PekFra87}
C.~L. Pekeris and K.~Frankowski.
\newblock The electromagnetic field of a {K}err-{N}ewman source.
\newblock {\em Phys. Rev. A (3)}, 36(11):5118--5124, 1987.

\bibitem{Pru26}
H.~Pr{\"u}fer.
\newblock Neue {H}erleitung der {S}turm-{L}iouvilleschen {R}eihenentwicklung
  stetiger {F}unktionen.
\newblock {\em Math. Ann.}, 95(1):499--518, 1926.

\bibitem{Sch62}
R.~Schiller.
\newblock Quasi-classical theory of a relativistic spinning electron.
\newblock {\em Phys. Rev.}, 128(3):1402, 1962.

\bibitem{SchweberBOOK}
S.~Schweber.
\newblock {\em {QED} and the men who made it: {D}yson, {F}eynman, {S}chwinger,
  and {T}omonaga}.
\newblock Princeton Univ. Press, Princeton, 1994.

\bibitem{Som97}
A.~Sommerfeld.
\newblock {\"U}ber verzweigte {P}otentiale im {R}aum.
\newblock {\em Proc. London Math. Soc.}, s1-28(1):395--429, 1896.

\bibitem{StraumannBOOK}
N.~Straumann.
\newblock {\em General Relativity and Relativistic Astrophysics}.
\newblock Springer-Verlag, Berlin, 1984.

\bibitem{Stueckelberg42}
E.~C.~G. St\"uckelberg.
\newblock La m\'ecanique du point mat\'eriel en th\'eorie de relativit\'e et en
  th\'eorie des quanta.
\newblock {\em Helv. Phys. Acta}, 15:23--37, 1942.

\bibitem{SufFacCos83}
K.~G. Suffern, E.~D. Fackerell, and C.~M. Cosgrove.
\newblock Eigenvalues of the {C}handrasekhar–-{P}age angular functions.
\newblock {\em J. Math. Phys.}, 24(5):1350--1358, 1983.

\bibitem{Szekeres}
G.~Szekeres.
\newblock On the singularities of a {R}iemannian manifold.
\newblock {\em Pub. Math. Debrecen}, 7:285ff, 1960.
\newblock Reprinted in {\sl Gen. Rel. Grav.} vol. 34, issue 11, pp. 2001--2016
  (2002).

\bibitem{Tah11}
A.~S. Tahvildar-Zadeh.
\newblock On the static spacetime of a single point charge.
\newblock {\em Rev. Math. Phys.}, 23(3):309--346, 2011.

\bibitem{zGKN}
A.~S. Tahvildar-Zadeh.
\newblock On a zero-gravity limit of the {K}err--{N}ewman spacetimes and their
  electromagnetic fields.
\newblock {\em J. Math. Phys.}, 56:042501 (19), 2015.

\bibitem{RodiDaniel}
D.~V. Tausk and R.~Tumulka.
\newblock Can we make a {B}ohmian electron reach the speed of light, at least
  at one instant?
\newblock {\em J. Math. Phys.}, 51:122306 (19), 2010.

\bibitem{Teu72}
S.~A. Teukolsky.
\newblock Rotating black holes: Separable wave equations for gravitational and
  electromagnetic perturbations.
\newblock {\em Phys. Rev. Lett.}, 29(16):1114, 1972.

\bibitem{Thaller}
B.~Thaller.
\newblock Potential scattering of {D}irac particles.
\newblock {\em J. Phys. A: Math. Gen.}, 14:3067--3083, 1981.

\bibitem{ThallerBOOK}
B.~Thaller.
\newblock {\em The {D}irac equation}.
\newblock Springer-Verlag, Berlin, 1992.

\bibitem{Too76}
N.~Toop.
\newblock The thermal radiation of electrons from a charged spinning black hole
  in a cosmological background.
\newblock {\em preprint DAMTP, Cambridge}, 1976.

\bibitem{Wei82}
J.~Weidmann.
\newblock Absolut stetiges {S}pektrum bei {S}turm-{L}iouville-{O}peratoren und
  {D}irac-{S}ystemen.
\newblock {\em Math. Z.}, 180(2):423--427, 1982.

\bibitem{Whe62}
J.~A. Wheeler.
\newblock {\em Geometrodynamics}.
\newblock Academic Press, New York, 1962.

\bibitem{WINKLMEIERc}
M.~Winklmeier and O.~Yamada.
\newblock A spectral approach to the {D}irac equation in the non-extreme
  {K}err-{N}ewmann metric.
\newblock {\em J. Phys. A}, 42(29):295204, 15, 2009.

\bibitem{Zip66}
D.~M. Zipoy.
\newblock Topology of some spheroidal metrics.
\newblock {\em J. Math. Phys.}, 7(6):1137--1143, 1966.

\end{thebibliography}
\end{document}